\newcommand{\mypara}[1]{\medskip\noindent{\sf\textbf{#1}}}  
\newcommand{\eps}{\varepsilon}
\renewcommand{\epsilon}{\eps}
\newcommand{\etal}{\emph{et al.}\xspace}
\theoremstyle{plain}
\newenvironment{myquote}%
  {\list{}{\leftmargin=4mm\rightmargin=4mm}\item[]}%
  {\endlist}
\newenvironment{claiminproof}{\begin{myquote}\noindent\emph{Claim.}}{\end{myquote}}
\newenvironment{proofinproof}{\begin{myquote}\noindent\emph{Proof.}}{\hfill $\lhd$ \end{myquote}}
\newenvironment{obsinproof}{\begin{myquote}\noindent\emph{Observation.}}{\end{myquote}}
\newcommand{\anc}{\mathrm{anc}}
\newcommand{\B}{\ensuremath{\mathcal{B}}}
\newcommand{\C}{\ensuremath{\mathcal{C}}}
\newcommand{\G}{\ensuremath{\mathcal{G}}}
\newcommand{\R}{\ensuremath{\mathcal{R}}}
\newcommand{\REAL}{\ensuremath{\mathbb{R}}}
\newcommand{\Reals}{\REAL}
\renewcommand{\leq}{\leqslant}
\renewcommand{\geq}{\geqslant}
\newcommand{\np}{{\sc np}\xspace}
\DeclareMathOperator{\cost}{cost}
\newcommand{\costa}{\cost_{\alpha}}
\newcommand{\opt}{\mbox{{\sc opt}}\xspace}
\newcommand{\optsub}{\mathrm{opt}}
\newcommand{\alg}{\mbox{{\sc alg}}\xspace}
\newcommand{\range}{\rho}
\newcommand{\tree}{\mathcal{T}}
\DeclareMathOperator{\br}{best-range}
\DeclareMathOperator{\mc}{min-cost}
\DeclareMathOperator{\myroot}{root}
\DeclareMathOperator{\prd}{pred}
\DeclareMathOperator{\suc}{succ}
\newcommand{\nil}{\mbox{\sc nil}\xspace}
\newcommand{\graph}{\G}
\newcommand{\cg}{\graph_{\range}}
\newcommand{\Zleft}{Z_{\mathrm{left}}}
\newcommand{\Zright}{Z_{\mathrm{right}}}
\newcommand{\Pleft}{P_{\mathrm{left}}}
\newcommand{\Pright}{P_{\mathrm{right}}}
\newcommand{\Pmid}{P_{\mathrm{mid}}}
\newcommand{\old}{\mathrm{old}}
\newcommand{\rold}{\range_{\old}}
\newcommand{\mynew}{\mathrm{new}}
\newcommand{\rnew}{\range_{\mynew}}
\newcommand{\ropt}{\range_{\optsub}}
\newcommand{\standard}{\mathrm{st}}
\newcommand{\rst}{\range_{\standard}}
\newcommand{\rk}{\range_k}
\newcommand{\ralg}{\range_{\mathrm{alg}}}
\newcommand{\rsb}{\range_{\mathrm{sb}}}
\newcommand{\rmst}{\range_{\mathrm{mst}}}
\newcommand{\Csb}{C_{\mathrm{sb}}}
\newcommand{\Copt}{C_{\mathrm{\optsub}}}
\newcommand{\Pexp}{P_{\mathrm{exp}}}
\newcommand{\Pcheap}{P_{\mathrm{cheap}}}
\newcommand{\dmax}{d_{\mathrm{max}}}
\DeclareMathOperator{\degree}{deg}
\newcommand{\cdelta}{c_{\delta}}
\newcommand{\cw}{\mathrm{cw}}           
\newcommand{\ccw}{\mathrm{ccw}}         
\newcommand{\dcw}{d_{\cw}}              
\newcommand{\dccw}{d_{\ccw}}            
\newcommand{\dhop}{d_{\mathrm{hop}}}      
\newcommand{\bbS}{\mathbb{S}}    
\newcommand{\cov}{\mathrm{cov}}         
\newcommand{\pa}{\mathrm{pa}}           
\newcommand{\CW}{{\sc cw}}              
\newcommand{\CCW}{{\sc ccw}}            
\newcommand{\exc}{\mathrm{excess}}
\newcommand{\ca}{c_{\alpha}}
\title{Stable Approximation Algorithms for the \\ {} Dynamic Broadcast Range-Assignment Problem }
\author{Mark de Berg}{Department of Mathematics and Computer Science, TU Eindhoven, the Netherlands}{M.T.d.Berg@tue.nl}{}{}
\author{Arpan Sadhukhan}{Department of Mathematics and Computer Science, TU Eindhoven, the Netherlands}{A.Sadhukhan@tue.nl}{}{}
\author{Frits Spieksma}{Department of Mathematics and Computer Science, TU Eindhoven, the Netherlands}{f.c.r.spieksma@tue.nl}{}{}
\authorrunning{M.~de Berg, A.~Sadhukhan and F.~Spieksma} 
\keywords{Range Assignment Problem}
\begin{document}
\setcounter{page}{0}
\maketitle

\begin{abstract}
Let $P$ be a set of points in $\Reals^d$ (or some other metric space), where each point $p\in P$ has an associated
transmission range, denoted~$\range(p)$. The range assignment~$\range$ induces a directed
communication graph $\cg(P)$ on~$P$, which contains an edge $(p,q)$ iff $|pq| \leq \range(p)$.
In the broadcast range-assignment problem, the goal is to assign the ranges such that $\cg(P)$
contains an arborescence rooted at a designated root node and the cost $\sum_{p \in P} \range(p)^2$
of the assignment is minimized.
 
We study the dynamic version of this problem. In particular, we study trade-offs 
between the stability of the solution---the number of ranges that are modified when a point
is inserted into or deleted from $P$---and its approximation ratio. To this end we introduce the
concept of \emph{$k$-stable algorithms}, which are algorithms that modify the range 
of at most $k$ points when they update the solution. We also introduce the concept of a 
\emph{stable approximation scheme}, or \emph{SAS} for short.
A SAS is an update algorithm $\alg$ that,
for any given fixed parameter $\eps>0$, is $k(\eps)$-stable and that maintains a solution with 
approximation ratio~$1+\eps$, where the stability parameter $k(\eps)$ only depends on $\eps$ 
and not on the size of~$P$.
%
We study such trade-offs in three settings.
\begin{itemize}
\item For the problem in $\Reals^1$, we present a SAS with $k(\eps)=O(1/\eps)$.
      Furthermore, we prove that this is tight in the worst case:
      any SAS for the problem must have $k(\eps)=\Omega(1/\eps)$.
      We also present algorithms with very small stability parameters:
      a 1-stable $(6+2\sqrt{5})$-approximation algorithm---this
      algorithm can only handle insertions---a (trivial) $2$-stable 2-approximation algorithm, 
      and a $3$-stable $1.97$-approximation algorithm. 
\item For the problem in $\bbS^1$ (that is, when the underlying space is a circle) we prove that no SAS exists. This is in spite of the fact that, for the static problem in $\bbS^1$, we prove that an optimal solution can always be obtained by cutting the circle at an appropriate point and solving the resulting problem in~$\Reals^1$.
\item For the problem in $\Reals^2$, we also prove that no SAS exists, and we present a $O(1)$-stable $O(1)$-approximation algorithm. 
\end{itemize}
Most results generalize to when the range-assignment cost is 
$\sum_{p \in P} \range(p)^{\alpha}$, for some constant~$\alpha > 1$.

\end{abstract}

\section{Introduction}

\mypara{The broadcast range-assignment problem.}
Let $P$ be a set of points in $\Reals^d$, representing transmission devices
in a wireless network. By assigning each point $p\in P$ a transmission range~$\range(p)$, we
obtain a \emph{communication graph}~$\cg(P)$. The nodes in $\cg(P)$ are the points
from~$P$ and there is a directed edge~$(p,q)$ iff $|pq|\leq \range(p)$, where
$|pq|$ denotes the Euclidean distance between $p$ and $q$. The energy consumption
of a device depends on its transmission range: the larger the range, the more energy it needs.
More precisely, the energy needed to obtain a transmission range~$\rho(p)$ is given by
$\range(p)^{\alpha}$, for some real constant $\alpha > 1$ called the \emph{distance-power gradient}.
In practice, $\alpha$ depends on the environment and ranges from~1 to~6~\cite{Pahlavan_Levesque_book}.
Thus the overall cost of a range assignment is $\cost_{\alpha}(\range(P)):=\sum_{p \in P} \range(p)^ \alpha$,
where 
we use $\range(P)$ to denote the set of ranges
given to the points in $P$ by the assignment~$\range$.
The goal of the range-assignment problem is to assign the ranges such that 
$\cg(P)$ has certain connectivity properties while minimizing the total cost~\cite{Survey_Clementi}.
Desirable connectivity properties are that $\cg(P)$ is ($h$-hop) strongly 
connected~\cite{cpfps-mrap-03,cps-hrpra-99,DBLP:conf/stacs/ClementiPS00,kkkp-pcprn-00}
or that $\cg(P)$ contains a \emph{broadcast tree}, that is, an arborescence rooted at
a given source~$s\in P$. The latter property leads to the \emph{broadcast range-assignment
problem}, which is the topic of our paper. 

The broadcast range-assignment problem has been studied extensively, sometimes with the extra 
condition that any point in $P$ is reachable in at most $h$ hops from the source~$s$.
For $\alpha=1$ the problem is trivial in any dimension: setting the range of the source~$s$ to $\max\{|sp|: p\in P\}$ and all other ranges to zero is optimal; however, for
any $\alpha>1$ the problem is \np-hard in $\Reals^d$ for $d\geq 2$~\cite{Clementi_range_assignment_complexity, Fuchs_range_assigment_hardness}.
Approximation algorithms and results on hardness of approximation are known as 
well~\cite{Clementi_range_assignment_1D,Fuchs_range_assigment_hardness,DBLP:conf/isaac/CaragiannisKK02}.
Many of our results will be on the 1-dimensional (or: linear) broadcast range-assignment problem.
Linear networks are important for modeling road traffic information
systems~\cite{DBLP:conf/sac/BassiouniF99,DBLP:journals/winet/MatharM96} and as such they have 
received ample attention. In $\Reals^1$, the broadcast range-assignment problem
is no longer \np-hard, and several polynomial-time algorithms have been proposed,
for the standard version, the $h$-hop version, as well as the weighted
version~\cite{Clementi_range_assignment_1D,DBLP:conf/isaac/CaragiannisKK02,DBLP:journals/tcs/DasDN06,DBLP:journals/ipl/DasN08,DBLP:journals/tvt/AtaeiBK12}.
The currently fastest algorithms for the (standard and $h$-hop) broadcast range-assignment problem
run in $O(n^2)$ time~\cite{DBLP:journals/tcs/DasDN06}.
\medskip

All results mentioned so far are for the static version of the problem. Our interest lies
in the dynamic version, where points can be inserted into and deleted from $P$
(except the source, which should always be present).
This corresponds to new sensors being deployed and existing sensors being removed,
or, in a traffic scenario, cars entering and exiting the highway. 
Recomputing the range assignment from scratch when~$P$
is updated may result in all ranges being changed. The question
we want to answer is therefore: is it possible to maintain a close-to-optimal range assignment
that is relatively stable, that is, an assignment for which only few ranges are modified 
when a point is inserted into or deleted from~$P$? And which trade-offs can be achieved
between the quality of the solution and its stability?

To the best of our knowledge, the dynamic problem has not been studied so far. 
The online problem, where the points from $P$ arrive one by one (there are no deletions)
and it is not allowed to decrease ranges, is studied by De~Berg~\etal~\cite{DBLP:conf/isaac/Berg0U20}.
This restriction is arguably unnatural, and it has the consequence 
that a bounded approximation ratio cannot be achieved. 
Indeed, let the source~$s$ be at $x=0$, and suppose that first
the point $x=1$ arrives, forcing us to set $\range(s) := 1$, and then the
points $x=i/n$ arrive for $1\leq i<n$. In the optimal static solution at the end of this scenario
all points, except the rightmost one, have range~$1/n$; for $\alpha=2$
this induces a total cost of $n \cdot (1/n)^2 = 1/n$. But if we are not allowed to decrease
the range of $s$ after setting $\rho(s)=1$, the total cost will be (at least)~1,
leading to an unbounded approximation ratio. 
Therefore, \cite{DBLP:conf/isaac/Berg0U20} analyze the competitive ratio:
they compare the cost of their algorithm to the cost of an optimal offline algorithm 
(which knows the future arrivals, but must still maintain a valid solution 
at all times without decreasing any range).
As we will see, by allowing to also decrease a few ranges, we are able to maintain
solutions whose cost is close even to the static optimum.

\mypara{Our contribution.}
Before we state our results, we first define the framework we use to analyze our algorithms.
Let $P$ be a dynamic set of points in $\Reals^d$, which includes a fixed source point~$s$
that cannot be deleted. 

An update algorithm \alg for the dynamic broadcast range-assignment problem is an algorithm
that, given the current solution (the current ranges of the points in the current set~$P$)
and the location of the new point to be inserted into $P$, or the point to be deleted from~$P$,
modifies the range assignment so that the updated solution is a valid broadcast range assignment
for the updated set~$P$. 
We call such an update algorithm \emph{$k$-stable} if it modifies at most $k$ ranges
when a point is inserted into or deleted from~$P$.
Here we define the range of a point currently
not in $P$ to be zero. Thus, if a newly inserted point receives a positive range
it will be counted as receiving a modified range; similarly, if a point with positive range is deleted
then it will be counted as receiving a modified range.
To get a more detailed view of the stability, we sometimes distinguish between the number
of increased ranges and the number of decreased ranges, in the worst case. 
When these numbers are $k^+$ and $k^-$, respectively, we say that \alg is \emph{$(k^+,k^-)$-stable}.
This is especially useful when we separately report on the stability of insertions and 
deletions; often, when insertions are $(k_1,k_2)$-stable then deletions will be $(k_2,k_1)$-stable.

We are not only interested in the stability of our update algorithms, but also in the quality
of the solutions they provide. We measure this in the usual way, by considering the approximation
ratio of the solution. As mentioned, we are interested in trade-offs between the
stability of an algorithm and its approximation ratio. Of particular interest are so-called
stable approximation schemes, defined as follows.

\begin{definition}
\label{def:SAS}
A \emph{stable approximation scheme}, or \emph{SAS} for short, is an update algorithm $\alg$ that,
for any given yet fixed parameter $\eps>0$, is $k(\eps)$-stable and that maintains a solution with 
approximation ratio~$1+\eps$, where the stability parameter $k(\eps)$ only depends on $\eps$ 
and not on the size of~$P$.
\end{definition}

\noindent Notice that in the definition of a SAS 
we do not take the computational complexity 
of the update algorithm into account. We point out that, in the context of dynamic scheduling problems (where jobs arrive and disappear in an online fashion, and it is allowed to re-assign jobs), a related concept has been introduced under the name {\em robust PTAS}: a polynomial-time algorithm that, for any given parameter $\epsilon >0$, computes a $(1+\epsilon)$-approximation with re-assignment costs only depending on $\epsilon$, see e.g.  \cite{DBLP:conf/esa/SkutellaV10} and \cite{DBLP:journals/mor/SandersSS09}.

We now present our results. Recall that $\cost_{\alpha}(\range(P)):=\sum_{p \in P} \range(p)^ \alpha$,
is the cost of a range assignment~$\rho$, where ~$\alpha>1$ is a constant.
To make the results easier to interpret, we state the results for~$\alpha=2$;
the dependencies of the bounds on the parameter~$\alpha$ can be found in the theorems presented in later sections.
\begin{itemize}
\item In Section~\ref{sec:sas} we present a SAS for the broadcast range-assignment problem in $\Reals^1$,
      with $k(\eps)=O(1/\eps)$. We prove that this is tight in the worst case,
      by showing that any SAS for the problem must have $k(\eps)=\Omega(1/\eps)$.
\item Our SAS (as well as some other algorithms) needs to know an optimal solution after each update. 
      The fastest existing algorithms to compute an optimal solution in~$\Reals^1$ 
      run in $O(n^2)$ time. In Section~\ref{sec:maintain} we show how to recompute an optimal 
      solution in $O(n\log n)$ time after each update, which we believe to be of independent interest. 
      As a result, our SAS also runs in $O(n\log n)$ time per update. 
\item There is a very simple 2-stable 2-approximation algorithm. We show that a 1-stable
      algorithm with bounded approximation ratio does not exist when both insertions and deletions
      must be handled. For the insertion-only case, however, we give 
      a 1-stable $(6+2\sqrt{5})$-approximation algorithm. We have not been able to
      improve upon the approximation ratio~2 with a 2-stable algorithm,
      but we show that with a 3-stable we can get a $1.97$-approximation.
      Due to lack of space, these results are delegated to the appendix.
\item Next we study the problem in~$\bbS^1$, that is, when the underlying 1-dimensional space is circular.
      This version has, as far as we know, not been studied so far. We first prove that
      in~$\bbS^1$ an optimal solution for the static problem can always be obtained by cutting 
      the circle at an appropriate point and solving the resulting problem in~$\Reals^1$.
      This leads to an algorithm to solve the static problem optimally in $O(n^2\log n)$ time.
      We also prove that, in spite of this, a SAS does not exist in~$\bbS^1$. 
\item Finally, we consider the problem in~$\Reals^2$. Based on the no-SAS proof in~$\bbS^1$,
      we show that the 2-dimensional problem does not admit a SAS either. 
      In addition, we present an $17$-stable $12$-approximation algorithm
      for the 2-dimensional version of the problem. 
\end{itemize}
Omitted proofs and results, and some other additional material, are given in the appendix.
\section{Maintaining an optimal solution in $\Reals^1$}
\label{sec:maintain}
Before we can present our stable algorithms for the broadcast range-assignment problem
in~$\Reals^1$, we first introduce some terminology and we discuss the structure of optimal solutions.
We also present an efficient subroutine to maintain an optimal solution.

\subsection{The structure of an optimal solution}
Several papers have characterized the structure of optimal broadcast range assignments in~$\Reals^1$,
in a more or less explicit manner. We use the characterization by Caragiannis~\etal~\cite{DBLP:conf/isaac/CaragiannisKK02},
which is illustrated in Figure~\ref{fig:optimal-structure} and described next.
\begin{figure}
\begin{center}
\includegraphics{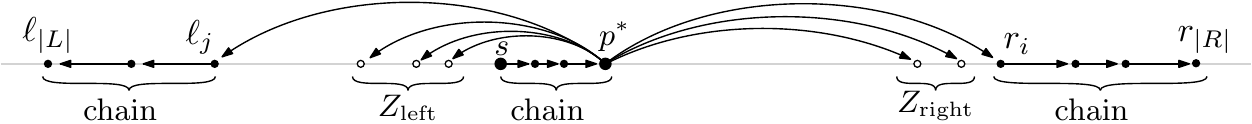}
\end{center}
\caption{The structure of an optimal solution. The non-filled points are zero-range points, 
         the solid black points all have a standard range (for $\ell_{|L|}$ and $r_{|R|}$ 
         the standard range is zero), except for the root-crossing point which (in this example)
         has a long range.}
\label{fig:optimal-structure}
\end{figure}

Let $P := L\cup \{s\}\cup R$ be a point set in $\Reals^1$. Here $s$ is the designated
source node, $L:=\{\ell_1,\ldots,\ell_{|L|}\}$ contains all points from $P$ to the left of~$s$, 
and $R := \{ r_1,\ldots,r_{|R|}\}$ contains all points to the right of~$s$. 
The points in $L$ are numbered in order of increasing distance from~$s$, and the same
is true for the points in $R$. The points $\ell_{|L|}$ and $r_{|R|}$ are called \emph{extreme points}.
In the following, and with a slight abuse of notation, we sometimes
use $p$ or $q$ to refer a generic point from $P$---that is, a point that could be $s$, or 
a point from $R$, or a point from~$L$. Furthermore, we will not distinguish between points
in $P$ and the corresponding nodes in the communication graph~$\cg(P)$.

For a non-extreme point $r_i\in R$, we define $r_{i+1}$ to be its \emph{successor};
similarly, $\ell_{i+1}$ is the successor of~$\ell_i$.
The source~$s$ has (at most) two successors, namely~$r_1$ and~$\ell_1$.
The successor of a point~$p$ is denoted by~$\suc(p)$; for an extreme point~$p$
we define $\suc(p)=\nil$. If $\suc(p)=q\neq\nil$, the
we call $p$ the \emph{predecessor} of $q$ and we write $\prd(q)=p$.
A \emph{chain} is a path in the communication
graph~$\cg(P)$ that only consists of edges connecting a point to its successor.
Thus a chain either visits consecutive points from $\{s\}\cup R$ from left to right,
or it visits consecutive points from $\{s\}\cup L$ from right to left.
It will be convenient to consider the empty path from $s$ to itself to be a chain as well.

Consider a range assignment~$\rho$. We say that a point~$q\in P$ is
\emph{within reach} of a point $p\in P$ if $|pq|\leq \range(p)$.
Let $\B$ a broadcast tree in~$\cg(P)$---that is, $\B$ is an arborescence rooted at~$s$. 
A point in $R\cup L$ in $\B$ is called \emph{root-crossing} in $\B$ if it has 
a child on the other side of~$s$; the source~$s$ is root-crossing if it has a child in $L$ and a child in $R$.
The following theorem, which holds for any distance-power gradient~$\alpha>1$,
is proven in \cite{DBLP:conf/isaac/CaragiannisKK02}.
\begin{theorem}[\cite{DBLP:conf/isaac/CaragiannisKK02}]
\label{thm:structure-1D}
Let $P$ be a point set in~$\Reals^1$. If all points in $P\setminus\{s\}$ lie to 
the same side of the source~$s$, then the optimal solution induces a chain from $s$ to the extreme point in~$P$. 
Otherwise, there is an optimal range assignment~$\range$ such that
$\cg(P)$ contains a broadcast tree~$\B$ with the following structure:
\begin{itemize}
\item $\B$ has a single root-crossing point,~$p^*$.
\item $\B$ contains a chain from $s$ to $p^*$. 
\item All points within reach of $p^*$, except those on the chain from $s$ to $p^*$,
      are children of~$p^*$.
\item Let $r_i$ and $\ell_j$ be the rightmost and leftmost point within reach of~$p^*$, respectively.
      Then $\B$ contains a chain from $r_i$ to~$r_{|R|}$, and  a chain from $\ell_j$ to~$\ell_{|L|}$.
\end{itemize}
\end{theorem}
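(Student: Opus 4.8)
\noindent\emph{Proof plan.}
The plan is to start from an arbitrary optimal range assignment $\range$ together with a broadcast tree $\B$ realizing it, and then apply a short sequence of cost-non-increasing local modifications (``shortcutting'' and ``uncrossing'' steps) that turn $(\range,\B)$ into a solution of exactly the claimed shape. The only feature of the cost function that is used is the strict superadditivity of $t\mapsto t^{\alpha}$ for $\alpha>1$: for all $a,b\geq 0$ we have $(a+b)^{\alpha}\geq a^{\alpha}+b^{\alpha}$, with equality only when $a=0$ or $b=0$. In words, splitting one hop of length $a+b$ into two consecutive hops of lengths $a$ and $b$ never costs more; equivalently, ``reaching past'' a point that must already carry a positive range of its own is never profitable.

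First I would handle the one-sided case, say $P\setminus\{s\}\subseteq R$. Write $r_0:=s$ and $g_i:=|r_{i-1}r_i|$, so the pure chain $s\to r_1\to\cdots\to r_{|R|}$ has cost $\sum_i g_i^{\alpha}$. For any valid assignment $\range$ and any $i$, since $r_i$ is reachable from $s$ in $\B$ there is a point $r_j$ with $j<i$ whose range reaches across the gap $g_i$, i.e.\ $\range(r_j)\geq |r_j r_i|=\sum_{k=j+1}^{i} g_k$. Charge each gap $g_i$ to one such point; the gaps charged to a fixed $r_j$ all have index $>j$ and index at most the largest index charged to $r_j$, so their total length is at most $\range(r_j)$, whence $\range(r_j)^{\alpha}\geq\sum_{\text{gaps charged to }r_j} g_i^{\alpha}$ by superadditivity. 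Summing over all points and all gaps gives $\cost_{\alpha}(\range)\geq\sum_i g_i^{\alpha}$, so the chain is optimal; strictness of superadditivity moreover forces any optimum to induce exactly the chain, with $r_{|R|}$ of range $0$.

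For the two-sided case I would take an optimal $(\range,\B)$ and perform three moves, always keeping a valid arborescence rooted at $s$.
\begin{enumerate}
\item \emph{Funnel all crossings into one point.} If $p\in R$ (say) is root-crossing then $\range(p)\geq|sp|$, so $p$'s reach is an interval around $s$ containing $s$ and every point of $R$ between $s$ and $p$. Exploiting this, one argues that there is a single point $p^{*}$ (possibly $p^{*}=s$) that can absorb all ``cross-$s$'' edges: reroute every such edge to emanate from $p^{*}$, enlarging $\range(p^{*})$ if necessary and shrinking the ranges freed up elsewhere. Superadditivity ensures the total cost does not increase, and since $p^{*}$'s new reach is an interval around $s$ the result is still an arborescence.
\item \emph{Make everything $p^{*}$ reaches a child of $p^{*}$, and shrink where possible.} Every point $q\neq s$ with $|p^{*}q|\leq\range(p^{*})$ that is not on the (unique) chain from $s$ to $p^{*}$ can be re-attached directly under $p^{*}$, at no change in ranges. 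The points on the $s$-to-$p^{*}$ chain are themselves reached by $p^{*}$, so each of them can have its range reduced to just reach its chain successor (the last one to range $0$), and $\range(p^{*})$ can be reduced to just reach the farthest point it still must reach, which by definition is the rightmost point $r_i$ and the leftmost point $\ell_j$ within its reach. All these changes only decrease cost.
\item \emph{The two tails are chains.} After Steps 1--2, the points beyond $r_i$ on the right together with the subtree on them form a self-contained one-sided instance with ``source'' $r_i$; by the one-sided case it may be replaced by the chain $r_i\to\cdots\to r_{|R|}$, and symmetrically $\ell_j\to\cdots\to\ell_{|L|}$ on the left. The assembled tree has exactly the claimed structure and cost at most the original optimum.
\end{enumerate}

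The step I expect to be the real obstacle is Step 1: establishing that all cross-$s$ traffic can be concentrated at a single point $p^{*}$ without increasing cost, with the correct choice of $p^{*}$ and a clean accounting of which ranges are enlarged versus shrunk. Care is needed to ensure that the rerouting creates no directed cycle and leaves every vertex reachable from $s$, and that the final reach of $p^{*}$ is compatible with peeling off the two one-sided tails in Step 3 (in particular, that $p^{*}$'s two-sided reach does not intrude into those tails). A secondary, more routine subtlety is that the theorem speaks of an optimal range \emph{assignment}: a given optimum need not already admit a tree of this form, so one works with the assignment throughout, picks whichever broadcast tree is convenient at each stage, and reads off the canonical tree only at the end.
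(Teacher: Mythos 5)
This theorem is imported from Caragiannis \emph{et al.}~\cite{DBLP:conf/isaac/CaragiannisKK02}; the paper itself contains no proof of it, so I can only judge your attempt on its own merits. Your one-sided argument is sound: charging each gap to a point whose range spans it and invoking superadditivity of $t\mapsto t^{\alpha}$ correctly shows the chain is optimal. The two-sided plan, however, has a genuine gap exactly where you flag it, and the justification you offer for that step points in the wrong direction. Step~1 asserts that all root-crossing edges can be funnelled into a single point $p^{*}$, ``enlarging $\range(p^{*})$ if necessary and shrinking the ranges freed up elsewhere,'' with superadditivity guaranteeing no cost increase. But superadditivity cuts the other way here: $(a+b)^{\alpha}\geq a^{\alpha}+b^{\alpha}$ says that concentrating coverage into one long range is \emph{more} expensive than splitting it over several shorter ones, so it cannot certify an exchange that enlarges one range while shrinking several others. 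Concretely, take two root-crossing points $p_1,p_2\in R$ with $s<p_1<p_2$ and write $a_i:=\range(p_i)-|sp_i|$ for how far each reaches into $L$; if $a_1>a_2$ but $2p_2+a_2>2p_1+a_1$, then $p_1$ reaches strictly further left and $p_2$ strictly further right, neither reach contains the other, and the natural exchange (grow $\range(p_2)$ by $a_1-a_2$, shrink $\range(p_1)$ to its standard range) changes the cost by $(p_2+a_1)^{\alpha}-(p_2+a_2)^{\alpha}-\bigl((p_1+a_1)^{\alpha}-\rst(p_1)^{\alpha}\bigr)$, which convexity of $t^{\alpha}$ can make arbitrarily positive as $p_2$ grows. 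Ruling out such configurations in an optimum (or transforming them away), including the case of crossing points on opposite sides of $s$, requires a genuine case analysis on the relative positions and reaches of the crossing points; that analysis is the actual content of the cited theorem, and your plan states its conclusion rather than supplying it.

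Two smaller points. In Step~2 the parenthetical ``the last one to range $0$'' cannot be right: every point on the chain from $s$ to $p^{*}$ other than $p^{*}$ itself must retain at least its standard range, or $p^{*}$ ceases to be reachable from $s$. And in Step~3 the tail beyond $r_i$ is not automatically ``self-contained'': before your reduction, the points $r_{i+1},\ldots,r_{|R|}$ may currently be covered by long ranges of points \emph{inside} $p^{*}$'s reach, so the one-sided charging argument has to be applied to the gaps beyond $r_i$ against all points to their left (not just those in the tail) before you may replace that coverage by the chain from $r_i$. Both of these are repairable; the missing uniqueness-of-the-root-crossing-point argument in Step~1 is not a detail but the heart of the theorem.
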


From now on, whenever we talk about optimal range assignments and their induced broadcast trees, 
we implicitly assume that the broadcast tree has the structure described in Theorem~\ref{thm:structure-1D}. 
Note that the communication graph $\cg(P)$ induced by an optimal range assignment~$\range$
can contain more edges than the ones belonging to the broadcast tree~$\B$. 
Obviously, for~$\range$ to be optimal it must be a minimum-cost assignment
inducing~$\B$. 

Define the \emph{standard range} of a non-extreme point~$r_i\in R$ to be~$|r_i r_{i+1}|$; 
the standard range of the extreme point $r_{|R|}$ is defined to be zero. The standard ranges of the points in~$L$
are defined similarly. 
The source $s$ has two standard ranges, $|s\ell_1|$ and $|sr_1|$.
A range assignment in which every point has a standard range is called a \emph{standard solution};
a standard solution may or may not be optimal.
Note that, in the static problem, it is never useful to give a point a non-zero range that is smaller than its standard range.
Hence, we only need to consider three types of points:
\emph{standard-range points}, \emph{zero-range points}, and \emph{long-range points}.
Here zero-range points are non-extreme points with a zero range, and
a point is said to have a \emph{long range} if its range is greater than its standard range.
Theorem~\ref{thm:structure-1D} implies that an optimal range assignment has the following
properties; see also Figure~\ref{fig:optimal-structure}.
\begin{itemize}   
\item There is at most one long-range point.
\item The set $Z\subset P$ of zero-range points (which may be empty) can be partitioned into two subsets,
      $\Zleft$ and $\Zright$, such that $\Zleft$ consists of consecutive points that lie
      to the left of the source~$s$, and 
      $\Zright$ consists of consecutive points to that lie to the right of~$s$.
\end{itemize}

%


\subsection{An efficient update algorithm}
Using Theorem~\ref{thm:structure-1D} an optimal solution for the broadcast
range-assignment problem can be computed in $O(n^2)$ time~\cite{DBLP:journals/tcs/DasDN06}. 
Below we show that maintaining an optimal solution under insertions and deletions 
can be done more efficiently than by re-computing it from scratch: 
using a suitable data structure, we can update the solution in $O(n\log n)$ time.
This will also be useful in later sections, when we give algorithms that
maintain a stable solution.
\medskip

Recall that an optimal solution for a given point set~$P$ has a single 
root-crossing point,~$p^*$. Once the range~$\range(p^*)$ is fixed, the solution 
is completely determined. Since $\range(p^*)=|p^* p|$ for
some point $p\neq p^*$, there are $n-1$ candidate ranges for a given choice of the 
root-crossing point~$p^*$. The idea of our solution is to implicitly
store the cost of the range assignment for each candidate range of $p^*$ such that, 
upon the insertion or deletion of a point in~$P$,
we can in $O(\log n)$ time find the best range for~$p^*$. By maintaining
$n$~such data structures~$\tree_{p^*}$, one for each choice of the
root-crossing point~$p^*$, we can then find the overall best solution.

\mypara{The data structure for a given root-crossing point.}
Next we explain our data structure for a given candidate root-crossing point~$p^*$.
We assume without loss of generality that $p^*$ lies to the right of
the source point~$s$; it is straightforward to adapt the structure to the 
(symmetric) case where $p^*$ lies to the left of~$s$, and 
to the case where $p^*=s$.

Let $\R_{p^*}$ be the set of all ranges we need to consider for~$p^*$, for the current set~$P$.
The range of a root-crossing point must extend beyond the source point. Hence,
\[
\R_{p^*} := \{ |p^* p| : p \in P \mbox{ and } |p^* p| > |p^* s| \}.
\]
Let $\lambda_1,\ldots, \lambda_m$ denote the sequence of ranges in~$\R_{p^*}$, ordered from small
to large. (If $\R_{p^*}=\emptyset$, there is nothing to do and our
data structure is empty.)  As mentioned, once we fix a range $\lambda_j$ for the given
root-crossing point~$p^*$, the solution is fully determined by Theorem~\ref{thm:structure-1D}:
there is a chain from $s$ to $p^*$, a chain from the rightmost point within range
of $p^*$ to the right-extreme point, and a chain from the leftmost point within range
of $p^*$ to the left-extreme point.
We denote the resulting range assignment\footnote{When $P$ lies completely to one side of~$s$,
then the range assignment is formally not root-crossing. We permit ourselves this slight 
abuse of terminology because by considering~$s$ as root-crossing point, setting 
$\range(s) := |s\suc(s)|$ and adding a chain from $\suc(s)$ to the extreme point,
we get an optimal solution.}
for $P$ by~$\Gamma(P,p^*,\lambda_j)$.

Our data structure, which implicitly stores the costs of the range assignments 
$\Gamma(P,p^*,\lambda_j)$ for all $\lambda_j\in\R_{p^*}$, is an augmented balanced binary 
search tree~$\tree_{p^*}$.
The key to the efficient maintenance of~$\tree_{p^*}$ is that, upon
the insertion of a new point~$p$ (or the deletion of an existing point),
many of the solutions change in the same way. To formalize this,
let $\Delta_j$ be the signed difference of the cost of the range assignment $\Gamma(P,p^*,\lambda_j)$
before and after the insertion of $q$, where $\Delta_j$ is positive if the cost increases. 
Figure~\ref{fig:update} shows various possible values for $\Delta_j$, depending on the location
of the new point~$q$ with respect to the range~$\lambda_j$. 
\begin{figure}
\begin{center}
\includegraphics{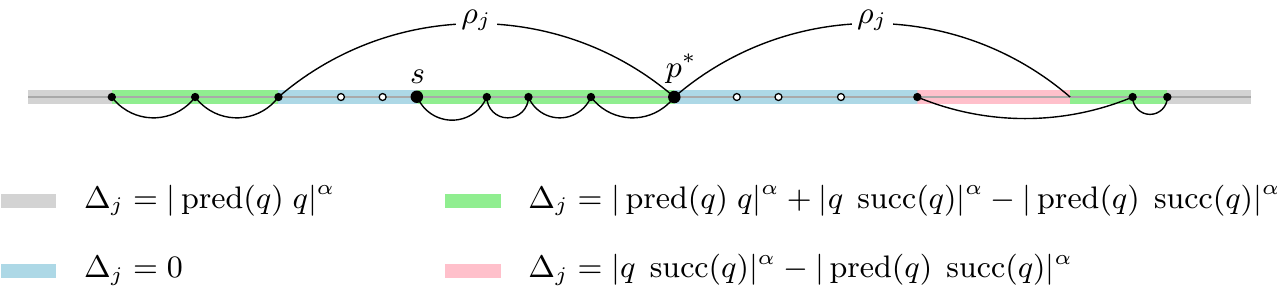}
\end{center}
\caption{Various cases that can arise when a new point~$q$ is inserted into~$P$.
Open disks indicate zero-range points. The arcs indicate the ranges of the points
before the insertion of~$q$, where the range of the root-crossing point is drawn both to its right and to its left.
The colored intervals relate the possible locations of~$q$ to the corresponding values~$\Delta_j$.}
\label{fig:update}
\end{figure}
As can be seen in the figure, 
there are only four possible values for $\Delta_j$.
This allows us to design our data structure $\tree_{p^*}$
such that it can be updated using $O(1)$ 
\emph{bulk updates} of the following form: 
\begin{quotation}
\noindent Given an interval $I$ of range values and an update value~$\Delta$, add $\Delta$ to the
cost of $\Gamma(P,p^*,\lambda_j)$ for all $\lambda_j\in I$.
\end{quotation}
In Appendix~\ref{app:data-structure} we define the information stored
in~$\tree_{p^*}$ and we
show how bulk updates can be done in $O(\log n)$ time.
We eventually obtain the following theorem.
\begin{restatable}{theorem}{broadcastupdate}
\label{thm:broadcast-update}
An optimal solution to the broadcast range-assignment problem 
for a point set $P$ in~$\Reals^1$ can be maintained in $O(n\log n)$ per insertion and deletion,
where $n$ is the number of points in the current set~$P$.
\end{restatable}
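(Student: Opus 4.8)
The plan is to prove Theorem~\ref{thm:broadcast-update} by maintaining, for each candidate root-crossing point~$p^*$, an augmented balanced binary search tree $\tree_{p^*}$ that implicitly stores the costs $\cost_\alpha(\Gamma(P,p^*,\lambda_j))$ for all $\lambda_j \in \R_{p^*}$, and then keeping these $n$ trees up to date under insertions and deletions. First I would fix the representation: the keys in $\tree_{p^*}$ are the ranges $\lambda_j \in \R_{p^*}$ (equivalently the points $p$ realising them), and at the node holding $\lambda_j$ we want to be able to retrieve $\cost_\alpha(\Gamma(P,p^*,\lambda_j))$ in $O(\log n)$ time, as well as the minimum such cost over all $\lambda_j$. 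The standard trick is to store the cost values only \emph{differentially}: each node carries a "pending offset" that applies to its whole subtree, plus a stored subtree-minimum, so that a bulk update that adds a fixed $\Delta$ to all keys in a contiguous key-range $I$ touches only $O(\log n)$ nodes (the $O(\log n)$ maximal subtrees covering $I$, each getting its offset and subtree-minimum bumped, with offsets pushed down lazily). This is the part I would spell out carefully in Appendix~\ref{app:data-structure}; here I would just assert it gives $O(\log n)$ per bulk update and $O(\log n)$ per point insertion/deletion into the tree, and $O(\log n)$ to query the global minimum.

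The core of the argument is the claim already previewed in the excerpt: when a point~$q$ is inserted into (or deleted from) $P$, the vector $(\cost_\alpha(\Gamma(P,p^*,\lambda_j)))_j$ changes by a step function in~$j$ taking only $O(1)$ distinct values, so it can be applied with $O(1)$ bulk updates per tree. I would establish this by a short case analysis driven by Theorem~\ref{thm:structure-1D}: for a fixed range $\lambda_j$ of $p^*$, the induced assignment $\Gamma(P,p^*,\lambda_j)$ consists of the chain $s \to p^*$, the chain from the rightmost point within range of $p^*$ to $r_{|R|}$, and the chain from the leftmost point within range of $p^*$ to $\ell_{|L|}$, with $p^*$ itself paying $\lambda_j^\alpha$. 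Inserting $q$ changes exactly one of these pieces, and which one — and hence $\Delta_j$ — depends only on where $q$ falls relative to the intervals determined by $p^*$, $s$, $\lambda_j$, and the neighbours of $q$ in the sorted order. I would enumerate the cases (as in Figure~\ref{fig:update}): $q$ strictly inside $[p^*-\lambda_j, p^*+\lambda_j]$ and not extreme — the chain edge it lands on is split, giving a $\Delta_j$ that is constant in $j$ over the relevant $j$-interval; $q$ outside that interval — it extends the outer chain toward a new extreme point, again a constant contribution over a $j$-interval; and the boundary subcases where $q$ becomes the new rightmost/leftmost point within range of $p^*$, or $p^*$ itself, or $q$ coincides with no one. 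In each case the set of $\lambda_j$ sharing a given $\Delta_j$ is a contiguous range of keys, so $O(1)$ bulk updates on $\tree_{p^*}$ suffice; one also has to insert or delete the key $|p^* q|$ in $\tree_{p^*}$ itself when $|p^*q| > |p^*s|$, with its own cost value computed in $O(\log n)$ time.

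Putting it together: an insertion or deletion of $q$ requires, for each of the $n$ candidate root-crossing points $p^*$, a constant number of bulk updates plus one key insertion/deletion in $\tree_{p^*}$, i.e.\ $O(\log n)$ work per tree, for $O(n \log n)$ total; then querying each $\tree_{p^*}$ for its minimum cost and taking the best over all $p^*$ (and over the two degenerate "$P$ on one side" chain solutions) costs another $O(n \log n)$; by Theorem~\ref{thm:structure-1D} this minimum is the optimal cost, and the corresponding assignment can be read off in $O(n)$ time. I expect the main obstacle to be the case analysis for $\Delta_j$ — making sure that \emph{every} possible position of $q$ (including all the boundary and coincidence cases, and symmetrically for deletions) yields a step function with $O(1)$ pieces each supported on a contiguous key-range, and that the required $\Delta$ values and key-ranges can themselves be determined in $O(\log n)$ time from the sorted-order data structure on $P$. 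The data-structure bookkeeping for lazy offsets (especially getting deletions of keys to interact correctly with pending subtree offsets) is routine but must be stated precisely, which I would defer to the appendix.
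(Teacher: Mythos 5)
Your proposal follows essentially the same route as the paper: $n$ augmented balanced search trees $\tree_{p^*}$ (one per candidate root-crossing point) with lazy subtree offsets and subtree minima, updated via $O(1)$ bulk updates per tree because the cost change $\Delta_j$ is a step function of $\lambda_j$ with $O(1)$ contiguous pieces, plus insertion of the new candidate range $|p^*q|$. The only step you leave implicit is that the newly inserted point $q$ itself becomes a candidate root-crossing point, so a fresh tree $\tree_{q}$ must be built from scratch in $O(n\log n)$ time (the paper does this by inserting the points of $P$ one by one), which still fits the stated budget.
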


\section{A stable approximation scheme in $\Reals^1$}
\label{sec:sas}
In this section we use the structure of an optimal solution provided by
Theorem~\ref{thm:structure-1D} to obtain a SAS for the 1-dimensional broadcast 
range-assignment problem. Our SAS has stability parameter~$k(\eps)= O((1/\eps)^{1/(\alpha-1)})$,
which we will show to be asymptotically optimal.

\medskip
The optimal range assignment can be very unstable. Indeed, suppose the current point set
is $P := \{s,r_1,\ldots,r_n\}$ with $s=0$ and $r_i=i$ ($1\leq i \leq n$), and take any $\alpha > 1$. 
Then the (unique) optimal assignment~$\ropt$
has $\ropt(s)=\ropt(r_1)=\cdots=\ropt(r_{n-1})=1$ and $\ropt(r_n)=0$.
If now the point $\ell_1=-n$ is inserted, then the optimal assignment becomes
$\ropt(s)=n$ and $\ropt(r_1)=\cdots=\ropt(r_n)=\ropt(\ell_1)=0$, 
causing~$n$ ranges to be modified.

Next, we will define a feasible solution, referred to as a {\em canonical range assignment}~$\rk$ that is more stable than an optimal
assignment, while still having a cost close to the cost of an optimal solution.
Here $k$ is a parameter that allows a trade-off between stability and quality of the solution.
The assignment~$\rho_k$ for a given point set~$P$ will be uniquely determined by the 
set~$P$--it does not depend on the order in which the points have been inserted or deleted.
This means that the update algorithm simply works as follows. 
Let $\rho_k(P)$ be the canonical range assignment for a point set~$P$,
and suppose we update $P$ by inserting a point~$q$. 
Then the update algorithm computes~$\rho_k(P\cup\{q\})$ and it modifies
the range of each point~$p\in P\cup\{q\}$ whose canonical range
in $\rk(P\cup \{q\})$ is different from its canonical range in~$\rk(P)$.
The goal is now to specify $\rho_k$ such that (i) many ranges in $\rk(P\cup \{q\})$ are the same as in~$\rk(P)$, 
(ii) the cost of $\rk(P)$ is close to the cost of $\ropt(P)$.

The instance in the example above shows that there can be many points whose range changes from being standard to
being zero (or vice versa) when preserving optimality of the consecutive instances. Our idea is therefore to construct solutions where the number of points with zero range is limited, and instead give many points their standard range; if we do this for points whose standard range is relatively small, then the cost of this solution remains bounded compared to the cost of an optimum solution.
We now make this idea precise. 
\medskip

Consider a point set $P$ and let $\ropt$ be an optimal range assignment satisfying the structure described in Theorem~\ref{thm:structure-1D}. Assuming there are points in~$P$ on both 
sides of the source, $\ropt$ induces a broadcast tree~$\B$ with the 
structure depicted in Figure~\ref{fig:optimal-structure}. Let $\rst(p)$
be the standard range of a point~$p$.
The canonical range assignment~$\rho_k$ is now defined as follows.
\begin{itemize}
\item If all points from $P$ lie to the same side of $s$, then $\rho_k(p) := \ropt(p)$
      for all $p\in P$. Note that in this case $\rho_k(p)=\rst(p)$ for all $p\in P$.
\item Otherwise, let $Z
$ be the set of zero-range points in $\ropt(P)$. 
      If $|Z|\leq k$ then let $Z_k := Z$;
      otherwise let $Z_k\subseteq Z$ be the $k$~points from $Z$ with the largest standard ranges, 
      with ties broken arbitrarily. We define $\rk$ as follows. 
      \begin{itemize}
      \item $\rk(p) := \ropt(p)$ for all $p\in P\setminus Z$. Observe that this means that $\rk(p)=\rst(p)$ for 
                        all $p\in P\setminus Z$ except (possibly) for the root-crossing point.
      \item $\rk(p) := 0$ for all $p\in Z_k$. 
      \item $\rk(p) := \rst(p)$ for all $p\in Z\setminus Z_k$.
      \end{itemize}
\end{itemize}
Notice that $\rk$ is a feasible solution since $\rk(p) \geq \ropt(p)$ for each $p \in P$. 
The next lemma analyzes the stability of the canonical range assignment~$\rk$.
Recall that for any range assignment~$\range$---hence, also for~$\rk$---and any point $q$ 
not in the current set $P$, we have $\range(q)=0$ by definition.
The proof of the following lemma is in the appendix.
\begin{restatable}{lemma}{sasstability}
\label{le:sas-stability}
Consider a point set $P$ and a point~$q\not\in P$. Let $\rold(p)$ be the range of a point~$p$
in $\rk(P)$ and let $\rnew(p)$ be the range of~$p$ in $\rk(P\cup \{q\})$. Then 
\[
\left| \{ p\in P\cup\{q\} : \rnew(p) > \rold(p) \} \right| \leq k+3
\mbox{ and }
\left| \{ p\in P\cup\{q\} : \rnew(p) < \rold(p)  \} \right| \leq k+3.
\]
\end{restatable}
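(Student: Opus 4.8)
The plan is to compare the canonical assignments $\rk(P)$ and $\rk(P\cup\{q\})$ by comparing \emph{each} of them to the corresponding optimal assignments, and controlling how much the optimal structure can change when a single point is inserted. Write $\Pold:=P$ and $\Pnew:=P\cup\{q\}$, and let $\ropt^{\old},\ropt^{\mynew}$ be the underlying optimal assignments (with the structure of Theorem~\ref{thm:structure-1D}) on which $\rk(\Pold)$ and $\rk(\Pnew)$ are based, with zero-sets $Z^{\old},Z^{\mynew}$ and root-crossing points $p^*_{\old},p^*_{\mynew}$. The key observation I would isolate first is that in the canonical assignment every point is either a standard-range point (if it is not the root-crossing point and not in $Z_k$) or a zero-range point (if it is in $Z_k\subseteq Z$) or the root-crossing point. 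So a point's canonical range can change only if (a) its standard range $\rst(p)$ changes, (b) it enters or leaves the role of root-crossing point, or (c) it enters or leaves $Z_k$. I would bound the number of points of each type separately.

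For type (a): inserting $q$ changes $\rst(p)$ for at most the two points that had $q$'s new neighbours as successor before the insertion (one on each side of $s$, or just the relevant side if $q$ is on one side), plus $q$ itself, so at most three points change their standard range, and each such change can push the count up by one on each side at most. For type (b): there is one old and one new root-crossing point, so at most two points are affected by the root-crossing role. For type (c): here I would use the degrees of freedom carefully. A point's canonical range increases (from $0$ to $\rst(p)$) exactly when it leaves $Z_k$, and decreases when it enters $Z_k$. Since the zero-set in an optimal solution consists of (at most) two contiguous blocks $\Zleft,\Zright$ straddling $s$, and inserting a point can only shrink or shift these blocks (a new point never forces previously-served points to become zero-range in the \emph{structured} optimum — or, where it can, this is absorbed into the root-crossing/standard-range accounting), the set $Z^{\mynew}$ differs from $Z^{\old}$ in a bounded way; moreover, since $Z_k$ always selects the $k$ largest-standard-range members of $Z$, the symmetric difference of $Z_k^{\old}$ and $Z_k^{\mynew}$ is governed by how many points enter/leave $Z$ and by the at most one "swap at the threshold" that a single insertion can cause. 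I would argue $|Z_k^{\mynew}\setminus Z_k^{\old}|$ and $|Z_k^{\old}\setminus Z_k^{\mynew}|$ are each $O(1)$, and bookkeeping the constants gives the claimed bound of $k+3$ on each side. (The additive $k$ comes from the worst case in which the entire set $Z_k$, of size up to $k$, is replaced — e.g.\ when $q$ causes the optimal solution to flip so that the old served side becomes the zero side — and the $+3$ absorbs the standard-range changes, the root-crossing point, and $q$ itself.)

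The step I expect to be the main obstacle is type (c): precisely controlling how $Z$, and hence $Z_k$, changes under a single insertion. Theorem~\ref{thm:structure-1D} pins down the \emph{shape} of $Z$ (two contiguous blocks around $s$) but not its \emph{size}, and the example in the paper shows $Z$ can swing from almost all of $P$ to empty in one insertion. The point is that this catastrophic swing is exactly what the truncation to $Z_k$ is designed to tame: even if $Z$ changes wildly, only the $k$ largest-standard-range members are zeroed out, so the number of points that actually flip between $0$ and $\rst$ is at most $|Z_k^{\old}|+|Z_k^{\mynew}|\leq 2k$ in the crude bound — and I would need a sharper argument to bring one of these two terms down to $O(1)$ and land on $k+3$ rather than $2k+O(1)$. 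I expect this to hinge on showing that in the "flip" scenario, the points that were in $Z_k^{\old}$ and are no longer zero-range get their \emph{standard} range back, i.e.\ their canonical range changes in only one direction, so the increase-count and the decrease-count each see $Z_k$ essentially once, not twice; combined with the symmetry $Z_k^{\old}\cap Z_k^{\mynew}$ being large whenever $Z$ itself is stable, this yields the asymmetric $k+3$ bound in each direction separately. Once that is in place, the remaining additions (the $\le 3$ standard-range changes, the $\le 1$ root-crossing change, $q$ itself) are routine and I would just tally them.
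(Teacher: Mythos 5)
Your proposal ultimately converges on the same argument as the paper, but the route is more complicated than necessary and two of the intermediate steps as written do not hold up. The paper's proof is a direct case enumeration per direction of change: a point's range can \emph{increase} only if it was a zero-range point in $\rk(P)$ (at most $k$ such points, since $|Z_k|\leq k$ by construction), or it is $q$, or it is the root-crossing point of $\rk(P\cup\{q\})$, or its standard range increased, or it is $s$ switching to its larger standard range; moreover the cases ``$q$ receives a positive range'' and ``some standard range increases'' are mutually exclusive (a standard range increases only when $q$ becomes the new extreme point, in which case $\rnew(q)=0$), which is exactly what brings the count to $k+3$ rather than $k+4$. The decrease bound is symmetric, charging to the zero-range points of $\rk(P\cup\{q\})$. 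Your final paragraph identifies precisely this charging scheme --- increases see only $Z_k^{\old}$, decreases see only $Z_k^{\mynew}$ --- so the key idea is present.

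Two concrete problems with the write-up. First, the claim that $|Z_k^{\mynew}\setminus Z_k^{\old}|$ and $|Z_k^{\old}\setminus Z_k^{\mynew}|$ are each $O(1)$ is false, as your own parenthetical remark then concedes: in the paper's motivating example ($s=0$, $r_i=i$, then insert $\ell_1=-n$) the set $Z$ jumps from empty to nearly all of $P$, so $Z_k$ is replaced wholesale and the symmetric difference has size $k$ on one side. You do not need this claim --- the per-direction charging makes any analysis of how $Z_k$ evolves unnecessary --- so it should simply be deleted rather than repaired. Second, the final tally as described (``the $\leq 3$ standard-range changes, the $\leq 1$ root-crossing change, $q$ itself'') sums to $k+5$, not $k+3$. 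To land on $k+3$ you must do the bookkeeping separately per direction: for increases, at most one point's standard range can increase, at most one point is the new root-crossing point, $s$ may switch to its larger standard range, and $q$ may receive a positive range --- and then invoke the exclusivity of the first and last of these. Without that exclusivity observation, which your proposal does not mention, the stated bound of $k+3$ is not reached.
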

Next we bound the approximation ratio of~$\rho_k$.
\begin{lemma}\label{le:sas-approx}
For any set $P$ and any $\alpha > 1$, we have
$\cost_{\alpha}(\rho_k(P)) \leq \left(1+\frac{2^\alpha}{k^{\alpha-1}}\right)\cdot \cost_{\alpha}(\ropt(P))$.
\end{lemma}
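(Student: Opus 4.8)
The plan is to exploit that the canonical assignment $\rho_k$ agrees with $\ropt$ everywhere except on the set $Z\setminus Z_k$ of zero-range points that are promoted to their standard range. By the definition of $\rho_k$ we have $\rho_k(p)=\ropt(p)$ for $p\in P\setminus Z$, and $\rho_k(p)=\ropt(p)=0$ for $p\in Z_k$, while $\rho_k(p)=\rst(p)$ and $\ropt(p)=0$ for $p\in Z\setminus Z_k$. Hence
\[
\cost_{\alpha}(\rho_k(P)) \;=\; \cost_{\alpha}(\ropt(P)) \;+\; \sum_{p\in Z\setminus Z_k}\rst(p)^{\alpha},
\]
so it suffices to bound the last sum by $\frac{2^{\alpha}}{k^{\alpha-1}}\cost_{\alpha}(\ropt(P))$. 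First I would dispose of the trivial cases: if all points of $P$ lie on one side of $s$ then $\rho_k=\ropt$, and if $|Z|\le k$ then $Z\setminus Z_k=\emptyset$; in both cases the claimed inequality is immediate. So assume from now on that $P$ has points on both sides of $s$ and that $|Z|>k$, hence $|Z_k|=k$.

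The first key step is a telescoping bound: $\sum_{p\in Z}\rst(p)\le 2\,\ropt(p^*)$, where $p^*$ is the root-crossing point of~$\ropt$. By Theorem~\ref{thm:structure-1D}, the zero-range points are exactly the children of $p^*$ not lying on the chain from $s$ to $p^*$; assuming (w.l.o.g.) that $p^*$ lies to the right of $s$, they form two runs of consecutive points, $\Zleft=\{\ell_1,\dots,\ell_{j-1}\}$ and $\Zright=\{r_{t+1},\dots,r_{i-1}\}$ where $p^*=r_t$ and $\ell_j,r_i$ are the leftmost and rightmost points within reach of $p^*$. Summing standard ranges along each run telescopes to the distance between its endpoints: $\sum_{p\in\Zleft}\rst(p)=|\ell_1\ell_j|\le|s\ell_j|\le|p^*\ell_j|\le\ropt(p^*)$ (the last step because $\ell_j$ is within reach of $p^*$), and likewise $\sum_{p\in\Zright}\rst(p)=|r_{t+1}r_i|\le|p^*r_i|\le\ropt(p^*)$. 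Adding the two sides gives the factor~$2$; the cases $p^*=s$ and $p^*$ to the left of $s$ are symmetric.

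The second step upgrades this $\ell_1$-type bound to an $\ell_\alpha$-type bound using that $Z_k$ collects the $k$ largest standard ranges of $Z$. Put $\mu:=\min_{p\in Z_k}\rst(p)$. Then every $p\in Z\setminus Z_k$ satisfies $\rst(p)\le\mu$, and $k\mu\le\sum_{p\in Z_k}\rst(p)\le\sum_{p\in Z}\rst(p)\le 2\,\ropt(p^*)$, so $\mu\le 2\,\ropt(p^*)/k$. Therefore
\[
\sum_{p\in Z\setminus Z_k}\rst(p)^{\alpha}\;\le\;\mu^{\alpha-1}\!\!\sum_{p\in Z\setminus Z_k}\!\!\rst(p)\;\le\;\mu^{\alpha-1}\cdot 2\,\ropt(p^*)\;\le\;\Bigl(\tfrac{2\,\ropt(p^*)}{k}\Bigr)^{\!\alpha-1}\!\!\cdot 2\,\ropt(p^*)\;=\;\tfrac{2^{\alpha}}{k^{\alpha-1}}\,\ropt(p^*)^{\alpha}.
\]
Since $p^*\in P$, we have $\ropt(p^*)^{\alpha}\le\cost_{\alpha}(\ropt(P))$, and combining with the displayed identity for $\cost_{\alpha}(\rho_k(P))$ finishes the proof.

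The only part requiring genuine care is the telescoping step: one must invoke the structure of Theorem~\ref{thm:structure-1D} to be sure the zero-range points really are two runs of consecutive points each bracketed between $s$ and a point within reach of $p^*$, and check the boundary cases (an empty run on one side, $p^*=s$, or $p^*$ on the left). Everything after that is a one-line estimate.
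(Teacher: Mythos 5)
Your proof is correct and follows essentially the same route as the paper's: both reduce to bounding $\sum_{p\in Z\setminus Z_k}\rst(p)^{\alpha}$, use $\sum_{p\in Z}\rst(p)\leq 2\,\ropt(p^*)$ (which the paper states directly from ``each $p\in Z$ is within reach of $p^*$'' after normalizing $\ropt(p^*)=1$, and which you justify by the telescoping argument), and then apply the same $\ell_1$-to-$\ell_\alpha$ upgrade via the fact that $Z_k$ holds the $k$ largest standard ranges. The only differences are cosmetic: you avoid the normalization and spell out the telescoping in more detail.
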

\begin{proof}
If all points in $P$ lie to the same side of~$s$ then $\rk(P)=\ropt(P)$, and we are done.
Otherwise, let $p^*$ be the root-crossing point. 
The only points receiving a different range in~$\rk(P)$ when compared to $\ropt(P)$ are the 
points in~$Z\setminus Z_k$; these points have $\rk(p)=\rst(p)$ while $\ropt(p)=0$. 
This means we are done when $Z\setminus Z_k =\emptyset$. Thus we can assume that $|Z|>k$, 
so $Z\setminus Z_k \neq \emptyset$.
Assume without loss of generality that~$\ropt(p^*)=1$. As each $p \in Z$ is within 
reach of $p^*$, we have
$\sum_{p\in Z} \rst(p) \leq 2$.
Since $Z_k$ contains the $k$~points with the largest standard ranges among the points in $Z$,
we have $\max \{ \rst(p) : p\in Z\setminus Z_k \} \leq 2/k$. Hence,
\[
\sum_{p\in Z\setminus Z_k} \rk(p)^{\alpha}
 = \sum_{p\in Z\setminus Z_k} \rst(p)^{\alpha}
 = \sum_{p\in Z\setminus Z_k} \rst(p)^{\alpha-1} \cdot \rst(p)
 \leq \left(\frac{2}{k}\right)^{\alpha-1} \sum_{p\in Z\setminus Z_k}  \rst(p)
 \leq \frac{2^{\alpha}}{k^{\alpha-1}}.
\]
(The analysis can be made tighter by using that 
$\sum_{p\in Z\setminus Z_k}  \rst(p) \leq 2 - k \max_{p\in Z\setminus Z_k}\rst(p)$,
but this will not change the approximation ratio asymptotically.)
We conclude that
\[
\frac{\cost_{\alpha}(\rk(P))}{\cost_{\alpha}(\ropt(P))}   
\leq  \frac{\sum_{p\in P\setminus(Z\setminus Z_k)} \rk(p)^{\alpha}+\sum_{p\in Z\setminus Z_k} \rk(p)^{\alpha}}{\sum_{p\in P\setminus(Z\setminus Z_k)} \ropt(p)^{\alpha}} 
\leq 1+ \frac{2^{\alpha}}{k^{\alpha-1}},
\]
where the last inequality follows because we have $\rk(p)=\ropt(p)$ for all 
$p\in P\setminus(Z\setminus Z_k)$ and $\sum_{p\in P\setminus(Z\setminus Z_k)} \ropt(p)^{\alpha} \geq 1$.
\end{proof}
By maintaining the canonical range assignment~$\rk$ for 
$k=(2^{\alpha}/\eps)^{1/(\alpha-1)}=O((1/\eps)^{1/(\alpha-1)})$ we obtain the following theorem. 
\begin{restatable}{theorem}{sasalg}
\label{thm:sas-alg}
There is a SAS for the dynamic broadcast range-assignment problem 
in $\Reals^1$ with stability parameter~$k(\eps) = O((1/\eps)^{1/(\alpha-1)})$,
where $\alpha > 1$ is 
the distance-power gradient.
The time needed by the SAS to compute the new range assignment upon the insertion or
deletion of a point is~$O(n\log n)$, where $n$ is the number of points in the current set.
\end{restatable}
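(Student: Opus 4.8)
The plan is to instantiate the canonical range assignment~$\rk$ at a fixed value of~$k$ depending only on~$\eps$ and~$\alpha$, and then simply to assemble the guarantee from Lemmas~\ref{le:sas-approx} and~\ref{le:sas-stability} together with the maintenance procedure of Theorem~\ref{thm:broadcast-update}. Concretely, I would set $k := \lceil (2^{\alpha}/\eps)^{1/(\alpha-1)}\rceil$, so that $2^{\alpha}/k^{\alpha-1}\leq\eps$ (here we use $\alpha>1$). The update algorithm maintains, after every insertion or deletion, the canonical assignment~$\rk(P)$ for the current set~$P$.

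Feasibility of~$\rk$ is already observed in the text, and Lemma~\ref{le:sas-approx} gives $\cost_{\alpha}(\rk(P))\leq\bigl(1+2^{\alpha}/k^{\alpha-1}\bigr)\cost_{\alpha}(\ropt(P))\leq(1+\eps)\cost_{\alpha}(\ropt(P))$, so the maintained solution is always a $(1+\eps)$-approximation. For stability, Lemma~\ref{le:sas-stability} bounds, per insertion, the number of increased ranges and the number of decreased ranges by $k+3$ each, so at most $2(k+3)$ ranges change; deletions are covered by the same bound, since deleting~$q$ from $P\cup\{q\}$ merely interchanges the roles of the increased and the decreased ranges of the corresponding insertion. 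Hence the algorithm is $k(\eps)$-stable with $k(\eps)\leq 2(k+3)=O((1/\eps)^{1/(\alpha-1)})$, and this quantity depends only on~$\eps$ and the constant~$\alpha$, which is exactly what Definition~\ref{def:SAS} demands.

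For the running time, I would keep the data structure of Theorem~\ref{thm:broadcast-update} to maintain an optimal structured solution~$\ropt$ in $O(n\log n)$ time per update. From the updated~$\ropt$ one reads off the zero-range set~$Z$ and the standard ranges of its points, selects the $k$ points of~$Z$ with largest standard range in $O(n)$ time, assigns ranges to all $O(n)$ points according to the definition of~$\rk$, and compares the result with the stored previous assignment to identify which (few) ranges actually changed. All of this is dominated by the $O(n\log n)$ cost of the optimal-solution update.

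The step that needs care—rather than any routine calculation—is that $\rk(P)$ must be a function of the set~$P$ alone, since Lemma~\ref{le:sas-stability} compares $\rk(P)$ and $\rk(P\cup\{q\})$ purely as set-indexed objects. So the maintained optimal solution, and the choice of~$Z_k$ inside it, must be pinned down by a deterministic tie-breaking rule (for instance: among optimal structured solutions take the one with smallest root-crossing range and, among those, the leftmost root-crossing point; inside~$Z$ take~$Z_k$ to be the $k$ zero-range points of largest standard range, ties broken by position). Once the data structure of Theorem~\ref{thm:broadcast-update} is arranged to return this canonical optimal solution, the approximation and stability bounds hold independently of the update history, and the theorem follows, with $\eps$ chosen freely.
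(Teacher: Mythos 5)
Your proposal is correct and follows essentially the same route as the paper's proof: fix $k=O((1/\eps)^{1/(\alpha-1)})$, invoke Lemma~\ref{le:sas-approx} for the $(1+\eps)$ ratio, Lemma~\ref{le:sas-stability} for the $2(k+3)$ stability bound (with deletions handled by symmetry of the set-indexed assignment), and Theorem~\ref{thm:broadcast-update} plus an $O(n)$ pass for the update time. Your added remark about pinning down $\rk(P)$ with a deterministic tie-breaking rule is a sensible point of care that the paper leaves implicit, but it does not change the argument.
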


Next we show that the stability parameter $k(\eps)$ in our SAS is asymptotically optimal.
\begin{theorem} \label{thm:sas-lb}
Any SAS for the dynamic broadcast range-assignment problem in $\Reals^1$ 
must have stability parameter~$k(\eps) = \Omega((1/\eps)^{1/(\alpha-1)})$, where $\alpha > 1$ is 
the distance-power gradient.
\end{theorem}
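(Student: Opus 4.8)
The plan is to construct a family of hard instances, parametrized by the size, on which any algorithm maintaining an approximation ratio of $1+\eps$ is forced to make $\Omega((1/\eps)^{1/(\alpha-1)})$ range modifications in response to a single update. The construction should mimic the instability example already discussed in Section~\ref{sec:sas}: start with a point set for which the optimal solution is a long chain of standard-range points on one side of the source, then insert a single far-away point on the other side, which makes the optimal solution flip to having the source (or root-crossing point) cover everything with a single long range and a large block of zero-range points. The key quantitative point is that in the ``chain'' solution, many points must have a \emph{positive} (standard) range, whereas in any near-optimal solution for the updated instance, all but $O(k(\eps))$ of those points must have range zero --- forcing at least $\Omega(k(\eps))$ decreases, where $k(\eps)$ is whatever stability bound the SAS claims.

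Concretely, I would set $s = 0$ and place points $r_i = i/n$ for $1 \le i \le n$, so the optimal assignment on this set is the chain with $\ropt(s) = \ropt(r_1) = \cdots = \ropt(r_{n-1}) = 1/n$ and $\ropt(r_n) = 0$, of cost $\Theta(n \cdot (1/n)^\alpha) = \Theta(n^{1-\alpha})$. Now insert a point $\ell_1$ far to the left, at $x = -L$ for a suitable $L$; the optimal assignment on the new set sets $\ropt(s) = L$ (covering all the $r_i$ as leaves) and gives every $r_i$ range zero, at cost $\Theta(L^\alpha)$. For the algorithm's solution after this update to be within a factor $1+\eps$ of optimum, it cannot afford to keep too many $r_i$ at their standard range $1/n$: each such point contributes $(1/n)^\alpha$, and retaining $t$ of them adds $t \cdot (1/n)^\alpha$ to the cost. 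Choosing $L = \Theta(1)$ so that the new optimum cost is $\Theta(1)$, staying within a $1+\eps$ factor means the total ``extra'' cost is at most $\eps \cdot \Theta(1)$, so $t \cdot n^{-\alpha} \le O(\eps)$, i.e. $t \le O(\eps n^{\alpha})$ --- this is not yet the bound we want, so the scaling has to be chosen more carefully. The right move is to scale the chain so that the new optimum is dominated by the chain-like part on the $r$-side only to within the slack: place the $r_i$ so that their standard ranges are $\Theta(1/k)$ rather than $\Theta(1/n)$, mirroring the tightness computation in Lemma~\ref{le:sas-approx}. That is, I would use $n$ large but cluster the $r_i$ into the interval $[0,1]$ with spacing $\Theta(1/k)$ only over the first $k$ of them and the rest tightly packed, so that the $\sum \rst(p) \le 2$ budget used in Lemma~\ref{le:sas-approx} is essentially saturated, and any solution that zeroes out fewer than $k = \Theta((1/\eps)^{1/(\alpha-1)})$ of them pays $\Theta((2/k)^{\alpha-1} \cdot (\text{remaining length})) > \eps$ relative to optimum. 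Matching the upper-bound construction's arithmetic here is what pins down the exponent $1/(\alpha-1)$.

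In more detail, the core inequality to exploit runs in the opposite direction to Lemma~\ref{le:sas-approx}: if a range assignment keeps $t$ of the chain points at their standard range $\lambda$ and the optimum has cost $\Theta(1)$, then its cost exceeds optimum by at least $t\lambda^\alpha$, which for this to be $\le \eps \cdot \Theta(1)$ forces $t \le O(\eps/\lambda^\alpha)$; with $\lambda = \Theta(1/k)$ this is $t \le O(\eps k^\alpha)$, and setting $k = \Theta((1/\eps)^{1/(\alpha-1)})$ makes this bound $O(\eps k^\alpha) = O(k)$, so the algorithm must zero out all but $O(k)$ of the $\Theta(k)$ chain points when it was forced (by the previous instance) to have them all at standard range --- hence $\Omega(k)$ decreases. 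To turn ``forced to have them at standard range in the previous instance'' into a rigorous statement, I would argue that on the pre-insertion instance (only points on the $r$-side) Theorem~\ref{thm:structure-1D} says the optimum \emph{is} the chain, so any $(1+\eps)$-approximate solution there must also keep all but $O(k)$ of them at positive range by the symmetric version of the same cost argument. Chaining these two facts: before the update, $\Omega(k)$ specific points have positive range in the algorithm's solution; after the update, all but $O(k)$ of them must have zero range; so the update triggers $\Omega(k) = \Omega((1/\eps)^{1/(\alpha-1)})$ decreases, independent of $n$, which is exactly the claimed lower bound on $k(\eps)$.

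The main obstacle I anticipate is making the two-sided argument airtight while keeping the instance genuinely adversarial against \emph{any} SAS, not just the canonical one: a SAS is free to maintain a non-optimal solution on the first instance too, so I cannot simply assert it uses the chain. The fix is to observe that whatever near-optimal solution the SAS maintains on the $r$-only instance, it must (by the cost argument applied in that instance) already have positive range on all but $O(k)$ of the $\Theta(k)$ widely-spaced chain points; and whatever near-optimal solution it switches to after the insertion of $\ell_1$ must have zero range on all but $O(k)$ of those same points; the intersection of ``positive before'' and ``zero after'' has size $\Omega(k)$ by inclusion--exclusion, forcing $\Omega(k)$ decreases in a single step. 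A secondary technical point is handling the constant $L$ and the tail of tightly-packed $r_i$ so that neither the pre- nor post-update optimum is accidentally something other than expected (e.g. one must check that in the post-update instance it is cheaper for the source to reach all $r_i$ directly than to use a root-crossing point deep on the $r$-side), but these are routine once the scaling is fixed. I would also note explicitly, as the theorem statement anticipates, that the adversary strategy can be repeated (re-insert on the $r$-side, delete $\ell_1$, etc.) to show the $\Omega(k)$ cost is incurred infinitely often, not just once, ruling out amortization.
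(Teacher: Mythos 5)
Your proposal is correct and follows essentially the same route as the paper: the identical hard instance (a chain of $\Theta(k)$ points at spacing $\Theta(1/k)$ on one side of $s$, followed by the insertion of a single point at distance $\Theta(1)$ on the other side), with the same two cost computations showing that a $(1+\eps)$-approximation forces few zero-range points before the insertion and few positive-range points after it. The only difference is organizational—you count the forced decreases directly, whereas the paper argues the contrapositive by assuming $k$-stability and lower-bounding the resulting approximation ratio—and your ``all but $O(k)$'' bounds need the explicit constants (e.g.\ fewer than $k/2$ zeros before, fewer than $3k/2$ positives after) that the paper pins down, but these work out exactly as you anticipate.
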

\begin{proof}
Let \alg be a $k$-stable algorithm, where $k\geq 4$ and $k^{\alpha-1} \geq \frac{1}{2^{\alpha+1}(2^{\alpha-1}-1)}$
and $k$ is even, and let $\ralg$ be the range assignment 
it maintains. Note that the condition on $k$ is satisfied for $k$ large enough.
We will show that the approximation ratio of \alg is 
at least~$1+\frac{1}{2^{\alpha+2} k^{\alpha-1}}$.
Since a SAS has approximation ratio $1+\eps$, this implies
that the stability parameter $k(\eps)$ of \alg must satisfy $k(\eps)=\Omega((1/\eps)^{1/(\alpha-1)})$.

Consider the point set $P := \{s,r_1,r_2,\ldots r_{2k}\}$, where $s=0$ and $r_i = i/(2k)$ 
for $i=1,2, \ldots, 2k$. We consider two cases.
\\[2mm]
\emph{Case I: The number of zero-range points in $\ralg(P)$ is at least~$k/2$,}
     where we assume without loss of generality that all points with range
      less than~$1/(2k)$ actually have range zero. 
      The cheapest possible solution in this case is to have exactly $k/2$ zero-range points,
      $k$ points with range~$1/(2k)$, and $k/2$ points with range $1/k$, for a total cost
      of
      \[
      \cost_{\alpha}(\ralg(P)) \geq k \cdot \left(\frac{1}{2k}\right)^{\alpha} + \frac{k}{2} \cdot \left(\frac{1}{k}\right)^{\alpha}
      = \left(1+\frac{2^{\alpha-1}-1}{2}\right)\cdot 2k \left(\frac{1}{2k}\right)^{\alpha}.
      \]
      An optimal solution has cost~$2k\cdot (1/(2k))^{\alpha}$, 
      and so the approximation ratio of \alg in Case~I is at least~$1+\frac{2^{\alpha-1}-1}{2}$, 
      which is at least $1+\frac{1}{2^{\alpha+2} k^{\alpha-1}}$ since  
      $k^{\alpha-1} \geq \frac{1}{2^{\alpha+1}(2^{\alpha-1}-1)}$.
      \\[2mm]
\emph{Case II: The number of  zero-range points $\ralg(P)$ is less than~$k/2$.}
      Now suppose the point $\ell_1 = -1$ arrives. Since $\ralg(P)$
      had less than $k/2$ zero-range points and \alg can modify at most~$k$ ranges, 
      $\ralg(P\cup\{\ell_1\})$ has less than $3k/2$ zero-range points. Hence, at least 
      $k/2$ points in $P\cup\{\ell_1\}$ have a range that is at least~$1/(2k)$,
      one of which must have a range at least~1. This 
      implies that
      $
      \cost_{\alpha}(\ralg(P\cup\{\ell_1\})) \geq 1 + (k/2-1) \cdot \left(\frac{1}{2k}\right)^{\alpha} 
      \geq 1+\frac{1}{2^{\alpha+2} k^{\alpha-1}},
      $
      where the last inequality holds since $k/2-1\geq k/4$ (because $k\geq 4$).
      An optimal range assignment on $P\cup\{\ell_1\}$
      has $\ropt(s)=1$ and all other ranges equal to zero, for a total cost of~1,
      and so the approximation ratio of \alg in Case~II is at least
      $1+\frac{1}{2^{\alpha+2} k^{\alpha-1}}$ as well.
\end{proof}

\section{The problem in $\bbS^1$}
\label{sec:S1}
We now turn to the setting where the underlying space is $\bbS^1$, that is,
the points in $P$ lie on a circle and distances are measured along the circle. In Section~\ref{subsec:S1-structure}, we prove that the structure of an optimal solution in $\bbS^1$
is very similar to the structure of an optimal solution in~$\Reals^1$ as formulated in Theorem~\ref{thm:structure-1D}. In spite of this, and contrary to the problem in $\Reals^1$, we prove in Section~\ref{subsec:S1-no-SAS} that no SAS exists for the problem in $\bbS^1$.

Again, we denote the source point by~$s$.
The clockwise distance from a point $p\in \bbS^1$ to a point~$q\in \bbS^1$
is denoted by $\dcw(p,q)$, and
the counterclockwise distance by $\dccw(p,q)$. The actual distance
is then $d(p,q) := \min(\dcw(p,q),\dccw(p,q))$. 
The closed and open clockwise interval from $p$ to $q$ are denoted by~$[p,q]^{\cw}$
and $(p,q)^{\cw}$, respectively.

\subsection{The structure of an optimal solution in $\bbS^1$}
\label{subsec:S1-structure}
Here we prove that the structure of an optimal solution in $\bbS^1$
is very similar to the structure of an optimal solution in~$\Reals^1$.
The heart of this proof is the following lemma, whose (rather intricate)
proof is given in Appendix~\ref{app:structure-in-S1}.
Define  the \emph{covered region} of $P$ with respect to 
a range assignment $\rho$, denoted by $\cov(\rho,P)$, to be the set of all points $r\in \bbS^1$ 
such that there 
exists a point $p\in P$ with $\rho(p)\geq d(p,r)$. 
\begin{restatable}{lemma}{Sstructure}
\label{lem:S1-structure}
Let $P$ be a point set in $\bbS^1$ with $|P|>2$ and let $\ropt$ be an optimal range assignment for $P$.
Then there exists a point $r\in \bbS^1$ such that $r\notin \cov(\ropt,P)$.
\end{restatable}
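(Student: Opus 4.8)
The goal is to show that an optimal range assignment on a circle cannot cover the whole circle, so that we can "cut" at an uncovered point and reduce to the known linear structure. I would argue by contradiction: suppose $\cov(\ropt,P)=\bbS^1$, i.e.\ every point of the circle lies within reach of some point of $P$. The key idea is that covering the entire circle is wasteful, and that we can modify $\ropt$ into a strictly cheaper assignment, contradicting optimality. More precisely, I would look at the point $p^\star\in P$ whose transmission "arc" $[p^\star-\ropt(p^\star),\,p^\star+\ropt(p^\star)]$ contains the source $s$ at its boundary, or rather at the structure of the broadcast tree $\B$ induced by $\ropt$: since $\B$ is an arborescence rooted at $s$ spanning all of $P$, and since the circle is covered, there must be "redundant coverage" somewhere — some arc of $\bbS^1$ is reached by two different points of $P$ whose ranges could be shortened.

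The cleanest route, I expect, is the following. Let $p^\star$ be the point of $P$ with the largest range in $\ropt$. Consider the two endpoints $a=p^\star-\ropt(p^\star)$ and $b=p^\star+\ropt(p^\star)$ of its covered arc (measured along the circle). If the circle is fully covered, then both $a$ and $b$ are themselves covered by \emph{other} points of $P$. Using this, I would argue that we can decrease $\ropt(p^\star)$ by a small amount $\delta>0$ while keeping all of $P$ reachable from $s$: the children of $p^\star$ in $\B$ that lie near $a$ or $b$ can be re-attached to the points that were covering $a$ and $b$, or the range of $p^\star$ simply did not need to extend that far because the extreme children are strictly inside the arc. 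Since $\alpha>1$, the cost function $\range(p^\star)^\alpha$ is strictly increasing, so shrinking the largest range by $\delta$ and possibly increasing one other range by at most a compensating amount still yields a net decrease — here one has to be careful and instead argue that \emph{no} compensation is needed, i.e.\ that full coverage of the circle forces $\ropt(p^\star)$ to be strictly larger than what the broadcast tree actually requires. That is the crux.

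Concretely, I would split into cases according to how many points of $P$ have positive range. If only $s$ has positive range, then $\cov(\ropt,P)=\bbS^1$ forces $\ropt(s)$ to be at least half the circumference, i.e.\ the antipode of $s$ is covered; but then a cheaper solution sets $\ropt(s)$ to reach only the farthest point of $P$ (in one direction) and chains the rest, which is cheaper for $|P|>2$ by convexity of $x\mapsto x^\alpha$ — this is essentially Theorem~\ref{thm:structure-1D} applied after cutting at a point strictly between $s$'s antipode and the nearest point of $P$. If at least two points have positive range, I would take the point $q$ realizing $\ropt(q)>0$ that is "last" in clockwise order among positive-range points, look at the clockwise-farthest point $r_q$ it reaches, and observe that full coverage means the arc just clockwise of $r_q$ is covered by yet another point; a short exchange argument then shrinks $\ropt(q)$ to exactly $\dcw(q,r_q)$ (if it was longer) or reassigns $r_q$'s subtree, producing a strictly cheaper feasible assignment.

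\textbf{Main obstacle.} The delicate part is handling the interaction between \emph{coverage} (a geometric/metric condition on $\bbS^1$) and \emph{feasibility of the broadcast tree} (a combinatorial reachability condition): a cheaper assignment must still admit an $s$-rooted arborescence, so when I shrink $\ropt(p^\star)$ I must re-route, in the tree, every child of $p^\star$ that is no longer within reach, and show the re-routing does not itself require a range increase that cancels the savings. I expect the careful bookkeeping here — choosing which point to shrink, identifying a strictly redundant piece of coverage, and verifying the arborescence survives — to be where the "rather intricate" proof in the appendix spends its effort; the high-level strategy (full coverage $\Rightarrow$ redundancy $\Rightarrow$ a cheaper feasible assignment $\Rightarrow$ contradiction with optimality) is, however, forced.
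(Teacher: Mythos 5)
Your high-level plan---assume full coverage, locate redundancy, produce a strictly cheaper feasible assignment---is indeed the strategy the paper follows, and your treatment of the degenerate case where only $s$ has positive range matches the paper's auxiliary lemma that $\ropt(p)<\tfrac12$ for all $p$ when $|P|>2$. But the step you yourself flag as ``the crux'' is a genuine gap, and neither of your two candidate mechanisms closes it. First, full coverage of $\bbS^1$ does \emph{not} force any single range to exceed what the current broadcast tree requires: every $\ropt(p)$ can equal the distance to $p$'s farthest child in the tree while the arcs still cover the whole circle, so there need not be a point $p^\star$ that can be shrunk ``for free'' within the same tree; the waste is only exposed after re-routing, which is exactly what has to be constructed. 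Second, shrinking the range of your clockwise-last point $q$ to $\dcw(q,r_q)$ ignores that a range acts in both directions: $q$ may be using that same range to reach children counterclockwise of itself, so the shrink can disconnect the arborescence, and ``last in clockwise order'' is not even well defined on a circle without first fixing a reference frontier.

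The missing idea is the paper's choice of \emph{which} tree and \emph{which} arc to analyze. The paper fixes a broadcast tree $\B$ that is hop-shortest from $s$ and, among all such trees, maximizes the number of clockwise edges; it then defines $q_{\cw}$ (the point farthest clockwise from $s$ that is reached from its parent by a clockwise edge) and $q_{\ccw}$ (symmetrically), proves that these two frontiers do not wrap past each other---so the open arc between them contains no point of $P$---and finally shows that if that specific arc were covered, then in each of three cases (one point covering it from the $q_{\cw}$ side, one from the $q_{\ccw}$ side, or two points jointly) some point's range could be strictly reduced while every point of $P$ remains reachable via alternative paths guaranteed by the hop-minimality and clockwise-maximality of $\B$. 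Without this frontier construction, the exchange argument you sketch has no handle on which point to shrink or why the re-routing succeeds, so the proposal as written does not constitute a proof.
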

Lemma~\ref{lem:S1-structure} implies that an optimal solution for an instance 
in $\bbS^1$ corresponds to an optimal solution for an instance in~$\Reals^1$
derived as follows. 
For a point $r\in \bbS^1$, define the mapping $\mu_r:P\rightarrow \Reals^1$ such that $\mu_r(s) := 0$,
and $\mu_r(p) := \dcw(s,p)$ for all $p\in [s,r]^{\cw}$,
and $\mu_r(p) := - \dccw(s,p)$ for all $p\in [r,s]^{\cw}$.
Let $\mu_r(P)$ denote the resulting point set in $\Reals^1$.
\begin{theorem} \label{thm:opt-in-S1}
Let $P$ be an instance of the broadcast range-assignment problem in~$\bbS^1$.
There exists a point~$r\in \bbS^1$ such that an optimal range assignment for $\mu_r(P)$ 
in $\Reals^1$ induces an optimal range assignment for $P$.
Moreover, we can compute an optimal range assignment for~$P$
in $O(n^2 \log n)$ time, where $n$ is the number of points in $P$.
\end{theorem}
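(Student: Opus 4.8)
The plan is to derive Theorem~\ref{thm:opt-in-S1} directly from Lemma~\ref{lem:S1-structure} together with Theorem~\ref{thm:structure-1D} and the efficient subroutine of Theorem~\ref{thm:broadcast-update}. First I would handle the trivial small cases ($|P|\le 2$) separately, since Lemma~\ref{lem:S1-structure} requires $|P|>2$; for $|P|\le 2$ an optimal assignment is immediate. For $|P|>2$, let $\ropt$ be an optimal range assignment for $P$ in~$\bbS^1$. By Lemma~\ref{lem:S1-structure} there is a point $r\in\bbS^1$ with $r\notin\cov(\ropt,P)$. The key observation is that if a point of the circle is not covered, then no communication edge used by the broadcast tree "wraps around" through $r$: for every $p\in P$ we have $\rho(p)<d(p,r)$, so the closed ball of radius $\rho(p)$ around $p$ is entirely contained in $\bbS^1\setminus\{r\}$, which (after cutting at $r$) is an interval isometric to a subinterval of $\Reals^1$. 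Hence the map $\mu_r$ is an isometry from $P$ (with the restriction of the circular metric to pairs at distance realized without passing through $r$) onto $\mu_r(P)\subset\Reals^1$, and moreover $\mu_r$ preserves exactly the set of edges $(p,q)$ with $d(p,q)\le\rho(p)$ that can occur in any assignment whose covered region misses~$r$. I would make this precise by noting: for any assignment $\rho$ on $P$ with $r\notin\cov(\rho,P)$, the communication graph $\cg_\rho(P)$ equals the communication graph on $\mu_r(P)$ under the "same" ranges, and the costs coincide.

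Given this, the two directions follow. For one direction, $\ropt$ induces (via $\mu_r$) a feasible range assignment for $\mu_r(P)$ in~$\Reals^1$ of the same cost, because the broadcast arborescence rooted at $s$ is carried over edge-for-edge. For the other direction, any optimal assignment $\rho^{\Reals}$ for $\mu_r(P)$ in $\Reals^1$ pulls back via $\mu_r^{-1}$ to a feasible assignment on $P$ in $\bbS^1$ of the same cost --- here one must check that the pulled-back assignment still has covered region avoiding (or at least, that all its edges avoid) the cut point, which holds because in $\Reals^1$ every range is at most the distance to the extreme point on that side, so no ball reaches past the endpoints of the cut interval. Combining: $\cost_\alpha(\text{opt in }\bbS^1) = \cost_\alpha(\rho^{\Reals})= \cost_\alpha(\text{opt for }\mu_r(P)\text{ in }\Reals^1)$, and the pulled-back optimal solution for $\mu_r(P)$ is an optimal range assignment for $P$. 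This proves the first sentence of the theorem.

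For the running time, the point $r$ is not known in advance, but Lemma~\ref{lem:S1-structure} guarantees that \emph{some} valid cut point exists, and it suffices to try $O(n)$ candidate cut points. Concretely, I would argue that one may always take $r$ to lie in an open arc between two cyclically consecutive points of $P$ (an uncovered point can be pushed to such an arc without becoming covered, since the covered region is a union of closed arcs), and the choice of $r$ within a fixed gap between consecutive points of $P$ does not affect $\mu_r(P)$ up to translation --- so there are only $n$ combinatorially distinct instances $\mu_r(P)$, one per gap. For each of the $n$ gaps we solve the resulting $\Reals^1$ instance optimally; using the static $O(n^2)$ algorithm of \cite{DBLP:journals/tcs/DasDN06} this gives $O(n^3)$, so to obtain the claimed $O(n^2\log n)$ I would instead process the $n$ candidate cuts incrementally: moving the cut from one gap to the next corresponds to moving a single point from the right end of the line to the left end (or vice versa), i.e. one deletion and one re-insertion, which by Theorem~\ref{thm:broadcast-update} costs $O(n\log n)$ per step. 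Iterating over all $n$ gaps then yields $O(n^2\log n)$ total, and we return the best solution found.

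The main obstacle I anticipate is the bookkeeping in the equivalence between assignments on $\bbS^1$ whose covered region misses $r$ and assignments on $\mu_r(P)\subset\Reals^1$: one has to verify carefully that (a) an optimal $\bbS^1$-solution really does miss some gap point $r$ (combining Lemma~\ref{lem:S1-structure} with the "push into a gap" argument), and (b) pulling back an optimal $\Reals^1$-solution yields a genuinely feasible $\bbS^1$-broadcast tree with no wrap-around edge sneaking in --- this needs the standard-range observation that in an optimal (indeed, in any sensible) $\Reals^1$-solution no point's range exceeds the distance to its far extreme point. Everything else is routine once the dictionary between the two metrics is set up, and the dynamic re-computation is a direct invocation of Theorem~\ref{thm:broadcast-update}.
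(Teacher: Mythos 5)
Your proposal is correct and follows essentially the same route as the paper: it invokes Lemma~\ref{lem:S1-structure} to obtain an uncovered cut point~$r$, transfers feasibility in both directions via $\mu_r$ (using that an uncovered $r$ makes the circle-to-line direction work, while the line-to-circle direction is automatic because $d(p,q)\leq|\mu_r(p)\mu_r(q)|$), and then enumerates the $n$ combinatorially distinct cuts incrementally via Theorem~\ref{thm:broadcast-update} at $O(n\log n)$ per step. The extra precautions you take (the $|P|\leq 2$ case, pushing $r$ into a gap, and ruling out wrap-around edges in the pull-back) are harmless but largely unnecessary, since any uncovered point already lies strictly inside a gap and the pulled-back assignment is feasible simply because distances can only shrink on the circle.
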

\begin{proof}
Let $r\in \bbS^1$ be a point such that $r\notin \cov(\ropt,P)$, which exists by
Lemma~\ref{lem:S1-structure}. Consider
the mapping $\mu_r$. Any feasible range assignment for $\mu_r(P)$ induces a 
feasible range assignment for $P$ in $\bbS^1$, since $d(p,q)\leq |\mu_r(p)\mu_r(q)|$ 
for any two points~$p,q\in P$.
Conversely, an optimal range assignment for $P$ induces a feasible
range assignment for $\mu_r(P)$, since the point $r$ is not covered in the optimal solution.
This proves the first part of the theorem.

Now let $P:=\{s,p_1,\ldots,p_n\}$, where the points $p_i$ are ordered 
clockwise from~$s$. For $0\leq i\leq n$, let $r_i$ be a point in~$(p_i,p_{i+1})^{\cw}$,
where $p_0 = p_{n+1} = s$. Since $\mu_{r_i}=\mu_r$ for any $r\in(p_i,p_{i+1})^{\cw}$, 
an optimal solution can be computed by finding the best solution over all mappings~$\mu_{r_i}$.
The only difference between $\mu_{r_i}$ and $\mu_{r_{i+1}}$ 
is the location that~$p_{i+1}$ is mapped to, so after computing an optimal solution for $\mu_1(P)$ in $O(n^2\log n)$ time, 
we can go through the mappings $\mu_2,\ldots,\mu_{n}$
and update the optimal solution in $O(n\log n)$ time using 
Theorem~\ref{thm:broadcast-update}. Hence, an optimal range assignment
for $P$ can be computed in $O(n^2 \log n)$ time.
\end{proof}

\subsection{Non-existence of a SAS in $\bbS^1$}
\label{subsec:S1-no-SAS}
We have seen that an optimal solution for a set $P$ in~$\bbS^1$ can be obtained from 
an optimal solution in~$\Reals^1$, if we cut~$\bbS^1$ at an appropriate point~$r$.
It is a fact however that the insertion of a new point into~$P$ may cause the location of the cutting 
point~$r$ to change drastically. Next we show that this means that
the dynamic problem in $\bbS^1$ does not admit a SAS. 
\begin{theorem} \label{thm:S1-no-SAS}
The dynamic broadcast range-assignment problem in~$\bbS^1$ with distance power gradient~$\alpha>1$ does not admit a SAS. In particular,
there is a constant $\ca>1$ such that the following holds: for any $n$ large enough, there is 
a set $P :=\{s,p_1,\ldots,p_{2n+1}\}$ and a point $q$ in $\bbS^1$ such that any update algorithm~\alg
that maintains a $\ca$-approximation must modify more than $2n/3-1$ ranges upon the insertion of~$q$ into~$P$.
\end{theorem}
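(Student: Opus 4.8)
The idea is to exhibit a circular instance whose unique near-optimal structure forces the "cut point" to sit in a very specific gap, and then insert a single point that makes that gap illegal, forcing the cut to move to the diametrically opposite side of the circle. Since moving the cut changes which points are zero-range (and which point is root-crossing), almost all ranges must change. Concretely, I would place the source $s$ somewhere on $\bbS^1$ and put $2n$ points densely and (almost) uniformly spread over, say, the clockwise half of the circle starting just after $s$, plus one ``anchor'' point $p_{2n+1}$ placed far around on the counterclockwise side, at distance just under $1$ from $s$ (measured the short way), chosen so that in the optimal solution $s$ itself is the root-crossing point with range reaching $p_{2n+1}$, and the $2n$ dense points form a chain away from $s$ with tiny standard ranges. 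In that solution essentially all $2n$ dense points are zero-range except those needed to start the chain; the covered region misses exactly the arc on the clockwise-far side, so by Lemma~\ref{lem:S1-structure}/Theorem~\ref{thm:opt-in-S1} the cut $r$ lies in that arc.

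\medskip\noindent
Then I would insert a new point $q$ diametrically placed so that any feasible solution now must ``cover'' what used to be the uncovered arc --- e.g.\ $q$ is placed so that reaching it from $s$ the short way runs through the old cut arc, and reaching it the other way is too expensive --- forcing the new cut $r'$ to move into the formerly dense region. Once $r'$ is inside the dense region, the linearized instance $\mu_{r'}(P\cup\{q\})$ has $s$ no longer root-crossing in the cheap way: to stay within a $\ca$ factor of optimal one is forced into a configuration where a constant fraction (at least $2n/3$, say) of the dense points that were zero-range must now receive positive (standard) range, or vice versa. The counting is: before the update $\ralg(P)$ is a $\ca$-approximation, hence by Lemma~\ref{le:sas-approx}-type reasoning it has at least, say, $2n - O(1/\eps)$ zero-range points among the dense points; after the update, any $\ca$-approximation must have at most $O(1/\eps)$ zero-range points on the ``wrong'' side of the new cut, i.e.\ at most a small constant fraction; the symmetric difference is therefore $> 2n/3 - 1$, so that many ranges change. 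Choosing $\ca$ to be a fixed constant $>1$ slightly smaller than the ratio that distinguishes the two structural regimes makes both bounds go through simultaneously for $n$ large.

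\medskip\noindent
The two structural claims I need are: (i) in the original instance $P$, any $\ca$-approximate solution must look ``essentially like'' the cheap $s$-root-crossing solution, in particular it has $\Omega(n)$ zero-range points in the dense block and its uncovered arc lies in the far clockwise region; and (ii) after inserting $q$, any $\ca$-approximate solution must have its uncovered arc (equivalently, its root-crossing point and zero-range block) relocated into the dense block, forcing $\Omega(n)$ of the dense points to switch between zero-range and standard-range. For (i) I would compare costs: the cheap solution has cost $\Theta(1)$ (dominated by $\range(s)^\alpha \approx 1$), while any solution with $o(n)$ zero-range points in the dense block, or with the wrong uncovered arc, pays $\Omega(n)\cdot(1/(2n))^{\alpha}$ extra on the dense chain \emph{plus} is forced to still pay $\Theta(1)$ to reach the anchor, yielding a ratio bounded away from $1$; the key is that reaching the anchor $p_{2n+1}$ is unavoidably $\Omega(1)$ no matter what, so extra spending in the dense block is not absorbed. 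For (ii) the argument is symmetric with $q$ designed so that the old uncovered arc is now necessarily covered in any feasible solution (because $q$ is placed exactly in that arc, or forces an edge crossing it), hence Theorem~\ref{thm:opt-in-S1} forces a different cut.

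\medskip\noindent
\textbf{Main obstacle.} The delicate part is engineering the instance so that \emph{both} regimes are genuinely forced by the \emph{same} constant $\ca$: I must ensure that in $P$ the cheap solution beats every ``dense-block-cut'' solution by a factor $>\ca$, while in $P\cup\{q\}$ every ``far-clockwise-cut'' solution is beaten by a factor $>\ca$, and these two inequalities must be compatible. This requires carefully balancing the anchor distance (so $\range(s)\approx 1$ dominates and pins the cost at $\Theta(1)$ in both instances) against the density $1/(2n)$ of the dense block and against the placement of $q$ (so that $q$'s insertion genuinely invalidates the old cut without itself being cheap to serve from the old configuration). I would also need to be careful with the ``without loss of generality all ranges $<1/(2n)$ are zero'' normalization used in Theorem~\ref{thm:sas-lb}, and with boundary/tie cases where the root-crossing point is $s$ itself versus a dense point. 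Finally, translating ``the uncovered arc moved'' into a clean lower bound of precisely $2n/3-1$ modified ranges (rather than just $\Omega(n)$) will require tracking constants in the two zero-range-count bounds, which is routine but must be done honestly.
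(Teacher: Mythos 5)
Your high-level plan (force a structural flip so that the ``cheap'' configuration before the insertion of $q$ is incompatible with every cheap configuration after it) is the right one, but the specific mechanism you chose --- counting points that switch between zero range and standard range in a \emph{dense} block --- does not force anything. With $2n$ points spread over an arc of constant length, the standard ranges are $\Theta(1/n)$ each, so giving \emph{every} dense point its standard range costs $2n\cdot\Theta(1/n)^{\alpha}=\Theta(n^{1-\alpha})=o(1)$, while $\opt=\Theta(1)$ is pinned by the anchor. Hence an adversarial algorithm can simply keep all dense points at their standard ranges forever, pay an additive $o(1)=o(\opt)$ penalty, never modify a single dense-point range across the update, and still be a $(1+o(1))$-approximation. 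Your claim~(i), that any $\ca$-approximate solution must have $\Omega(n)$ zero-range points in the dense block, is therefore false, and your remark that the ``extra spending in the dense block is not absorbed'' is exactly backwards: the quantity you write, $\Omega(n)\cdot(1/(2n))^{\alpha}$, tends to $0$ and is entirely absorbed by the $\Theta(1)$ cost of reaching the anchor. The same issue sinks claim~(ii).

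The paper's construction is engineered precisely to avoid this. The consecutive gaps along the chain alternate between the \emph{constant} lengths $1$ and $2$, and the long distance $\delta$ is scaled so that $\delta^{\alpha}=(2^{\alpha}+1)n$; consequently the chain cost and the long-range cost are both $\Theta(n)$, i.e.\ each is a constant fraction of $\opt$. The quantity that is forced to change is not ``zero vs.\ standard'' but ``\CW-minimal vs.\ \CCW-minimal'': because the gaps alternate between $1$ and $2$, each point's clockwise-minimal range differs from its counterclockwise-minimal range, and assigning the wrong one costs at least $2^{\alpha}-1=\Omega(1)$ extra per point, which over $\Omega(n)$ points is $\Omega(\opt)$. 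An excess-charging argument along a minimum-hop path (anchored by the fact that $p_{2n+1}$ cannot be reached counterclockwise before the insertion and $p_1$ cannot be reached clockwise after it) then shows at least $4n/3+1$ points must be \CW-minimal before and at least $4n/3+1$ must be \CCW-minimal after, giving the $2n/3-1$ bound. If you want to salvage your approach you must redesign the instance so that the per-point penalty for not updating is a constant fraction of that point's contribution to $\opt$, summed over $\Omega(n)$ points; tiny standard ranges cannot achieve this.
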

The rest of this section is dedicated to proving Theorem~\ref{thm:S1-no-SAS}.
We will prove the theorem for 
\[
\ca := \min \left(  1 + 2^{\alpha-4} - \frac{1}{8},  \
                    1 + \frac{2^{\alpha-1}-1}{3\cdot 2^{\alpha}+2}, \   
                     1 + \frac{\min \left( 2^{\alpha}-1, \frac{3^{\alpha}-2^{\alpha}-1}{2}, \frac{ 4^{\alpha}-2^{\alpha}-2}{3} \right)}{4(2^{\alpha}+1)}
            \right).
\]
Note that each term is a constant strictly greater than~1 for any fixed constant~$\alpha>1$.
In particular, for $\alpha=2$ we have $\ca = 1+\frac{1}{14}$.
\medskip

Let $P:=\{s,p_1,\ldots,p_{2n+1}\}$, where $\dcw(p_i,p_{i+1})=2$ for odd~$i$ 
and $\dcw(p_i,p_{i+1})=1$ for even~$i$; see Fig.~\ref{fig:no-sas-cirle-new}(i). 
Let $\dcw(s,p_1) = \delta$, where $\delta^{\alpha} = (2^{\alpha}+1)n$. 
Finally, let $\dcw(p_{2n+1},q) = \dcw(q,s) = x\delta$,
where  $x^{\alpha}= \frac{1}{4}+\left(\frac{1}{2}\right)^{\alpha+1}$.
Note that $(1/2)^{\alpha} < x^{\alpha} < 1/2$
for any~$\alpha>1$.
\begin{figure}
\begin{center}
\includegraphics{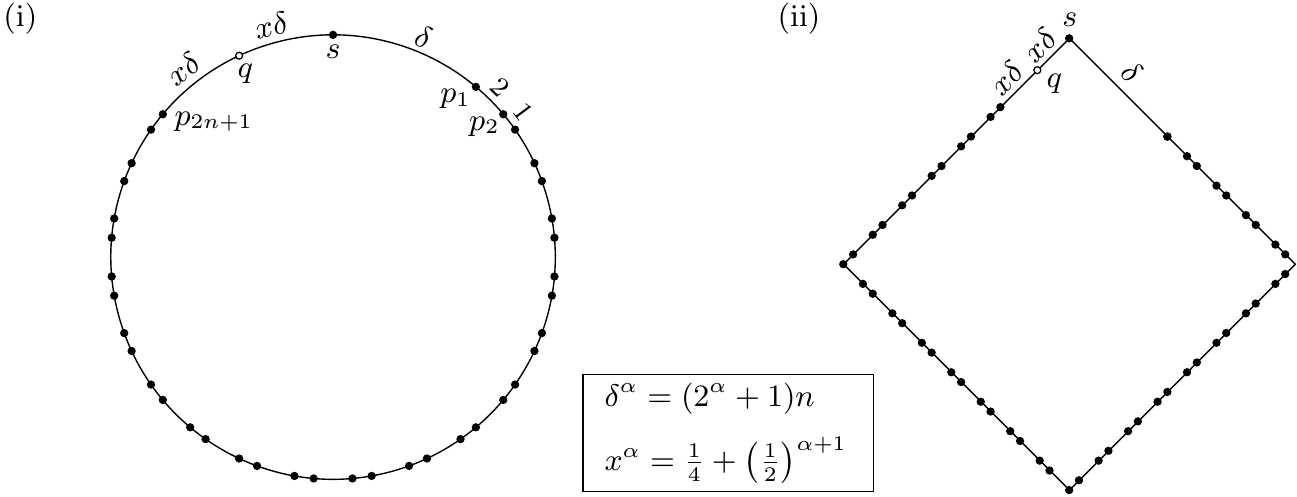}
\end{center}
\caption{(i) The instance showing that there is no SAS in~$\bbS^1$.
(ii) The instance in~$\Reals^2$.}
\label{fig:no-sas-cirle-new}
\end{figure}

Let $\range(p)$ denote the range given to a point $p$ by~\alg.
A directed edge $(p,p')$ in the communication graph induced by~$\range$
is called a \emph{clockwise edge} if $\range(p)\geq \dcw(p,p')$, and it is called
a \emph{counterclockwise edge} 
if $\range(p)\geq \dccw(p,p')$. Observe that we may assume that no edge $(p,p')$ is both 
clockwise and counterclockwise, because otherwise $\range(p) \geq (\delta+3n+2x\delta)/2$, 
which is much too expensive for an approximation ratio of at most~$\ca$.
Define the range $\range(p)$ of a point in~$P$ to be \emph{\CW-minimal} 
if $\range(p)$ equals the distance from $p$ to its clockwise neighbor in~$P$. 
Similarly, $\range(p)$ is \emph{\CCW-minimal} if $\range(p)$ equals the distance 
from $p$ to its counterclockwise neighbor. The idea of the proof is to show that
before the insertion of~$q$, most of the points~$s,p_1,\ldots,p_{2n+1}$
must have a \CW-minimal range, while after the insertion most points must
have a \CCW-minimal range. This will imply that many ranges must 
be modified from being \CW-minimal to being \CCW-minimal. 
\medskip

Before the insertion of~$q$, giving every point a \CW-minimal range leads to a 
feasible assignment of total cost $\delta^{\alpha} + (2^{\alpha}+1)n = 2\delta^\alpha$. 
After the insertion of~$q$, giving every point a \CCW-minimal range leads to a 
feasible assignment of total cost $2(x\delta)^{\alpha} + (2^{\alpha}+1)n = (2x^{\alpha}+1)\delta^\alpha$. 
Hence, if $\opt(\cdot)$ denotes the cost of an optimal range assignment,
then we have:
\begin{observation} \label{obs:S1-opt}
$\opt(P)\leq 2\delta^{\alpha}$ \ \ and \ \
$\opt(P\cup\{q\})\leq (2x^{\alpha}+1)\delta^{\alpha} < 2\delta^{\alpha}$.
\end{observation}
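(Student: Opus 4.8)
The plan is to prove each inequality by exhibiting a single explicit feasible range assignment and bounding its cost; since $\opt(\cdot)$ is the minimum cost over all feasible assignments, each such cost is immediately an upper bound on the optimum. The two assignments to use are exactly the ones already named in the surrounding text: the clockwise-minimal assignment on $P$ and the counterclockwise-minimal assignment on $P\cup\{q\}$. The whole argument is then a matter of checking that each assignment induces a valid broadcast tree rooted at~$s$ and summing the arc lengths, grouped by the parity of the index.

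For the first bound, I would assign every point its \CW-minimal range on $P$. This makes $\range(s)=\dcw(s,p_1)=\delta$ and $\range(p_i)=\dcw(p_i,p_{i+1})$ for $1\le i\le 2n$, while $p_{2n+1}$ (the clockwise-last point) is a leaf and may be given range~$0$. These ranges realise the chain $s\to p_1\to\cdots\to p_{2n+1}$, which is an arborescence rooted at~$s$ reaching all of~$P$, so the assignment is feasible. Among the indices $1\le i\le 2n$ there are $n$ odd ones (each contributing $\dcw(p_i,p_{i+1})^\alpha=2^\alpha$) and $n$ even ones (each contributing $1$), so the total cost is $\delta^\alpha+n\,2^\alpha+n=\delta^\alpha+(2^\alpha+1)n$. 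Substituting $\delta^\alpha=(2^\alpha+1)n$ gives $\opt(P)\le 2\delta^\alpha$.

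For the second bound, I would assign every point its \CCW-minimal range on $P\cup\{q\}$. Here $q$ sits counterclockwise-adjacent to~$s$, so the induced chain is $s\to q\to p_{2n+1}\to p_{2n}\to\cdots\to p_1$, again an arborescence rooted at~$s$ reaching every point, with $p_1$ a leaf of range~$0$. The two edges leaving $s$ and $q$ each have length $\dccw(s,q)=\dccw(q,p_{2n+1})=x\delta$, contributing $2(x\delta)^\alpha$, while the reversed chain over $p_{2n+1},\ldots,p_1$ uses the very same multiset of arc lengths as before ($n$ arcs of length~$2$ and $n$ of length~$1$, now matched to the opposite parity), contributing $(2^\alpha+1)n=\delta^\alpha$. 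Hence $\opt(P\cup\{q\})\le 2(x\delta)^\alpha+\delta^\alpha=(2x^\alpha+1)\delta^\alpha$. The strict inequality $(2x^\alpha+1)\delta^\alpha<2\delta^\alpha$ then follows directly from $x^\alpha<1/2$, which holds by the choice $x^\alpha=\tfrac{1}{4}+(\tfrac{1}{2})^{\alpha+1}$.

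There is no genuine obstacle here; the only thing to be careful about is the bookkeeping. One must confirm that each of the two chains really is an $s$-rooted arborescence covering all points (so that no extra ranges beyond the chain edges are needed and the terminal point of each chain may be assigned range~$0$), and correctly match the alternating arc lengths~$2$ and~$1$ to the index parity in both the forward traversal on $P$ and the reversed traversal on $P\cup\{q\}$.
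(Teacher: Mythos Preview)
Your proposal is correct and follows exactly the approach the paper uses: exhibit the \CW-minimal assignment on~$P$ and the \CCW-minimal assignment on~$P\cup\{q\}$, compute their costs as $\delta^\alpha+(2^\alpha+1)n=2\delta^\alpha$ and $2(x\delta)^\alpha+(2^\alpha+1)n=(2x^\alpha+1)\delta^\alpha$ respectively, and invoke $x^\alpha<1/2$ for the strict inequality. The paper states this in the two sentences immediately preceding the observation without spelling out the parity bookkeeping, but your added detail is consistent with that argument.
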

We first prove a lower bound on the total cost of the points~$p_1,\ldots,p_{2n+1}$.
Intuitively, only $o(n)$ of those points can be reached from~$s$ or $q$ (otherwise the range
of $s$ or $q$ would be too expensive) and the cheapest way to reach
the remaining points will be to use only \CW-minimal or \CCW-minimal
ranges. A formal proof of the lemma is given in Appendix~\ref{app:no-sas-missing-proofs}.
\begin{restatable}{lemma}{costofpi}
\label{lem:cost-of-pi}
$\sum_{i=1}^{2n+1}\rho(p_i)^\alpha \geq (2^{\alpha}+1)n - o(n)$, both before and after the insertion of~$q$.
\end{restatable}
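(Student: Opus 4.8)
The plan is to charge the cost of the points $p_1,\dots,p_{2n+1}$ against the "chain cost" of covering the arc that contains them, after first arguing that only a vanishing fraction of these points can be served cheaply from $s$ or $q$. Concretely, consider any feasible range assignment $\rho$ whose total cost is at most $\ca\cdot\opt$ (before insertion, $\opt(P)\le 2\delta^\alpha$ by Observation~\ref{obs:S1-opt}; after insertion, $\opt(P\cup\{q\})<2\delta^\alpha$ as well). First I would show that $\rho(s)<\delta\cdot g(n)$ and $\rho(q)<\delta\cdot g(n)$ for some $g(n)=o(1)$: indeed $\rho(s)^\alpha\le \ca\cdot 2\delta^\alpha$ would allow $\rho(s)$ to be a constant multiple of $\delta=\Theta(n^{1/\alpha})$, so "only $o(n)$ points are reached from $s$" needs a slightly more careful phrasing — the right statement is that $s$ (and likewise $q$) can have at most $O(\delta/1)=O(n^{1/\alpha})=o(n)$ of the $p_i$ within reach, since consecutive $p_i$ are at distance $\ge 1$ and the $p_i$ within reach of $s$ lie in an interval of length $\rho(s)=O(n^{1/\alpha})$. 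So the set $S_0$ of indices $i$ such that $p_i$ is within reach of $s$ or $q$ has $|S_0|=o(n)$.

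Next, every $p_i$ with $i\notin S_0$ must receive its in-edge in the broadcast tree from some $p_j$ with $j\notin S_0$ as well (following the tree upward, the first ancestor reached that is $s$ or $q$ would put $p_i$'s subtree-root within reach of $s$ or $q$; more simply, the parent of $p_i$ is one of the $p_j$, or is $s$ or $q$ — and if it is $s$ or $q$ then $i\in S_0$). Thus the induced forest on $\{p_i: i\notin S_0\}$ must, for each such $p_i$, have an incoming edge from another $p_j$; equivalently, these $\ge 2n+1-o(n)$ points are covered by the ranges $\{\rho(p_j)\}$ using arcs that do not "jump over" $s$ or $q$ cheaply. The key quantitative step is then a lower bound of the form: if a set $Q$ of consecutive points on a line/arc with gap sequence $g_1,g_2,\dots$ must all be reachable within $Q$ via the ranges assigned to points of $Q$ (plus possibly a few external entry points), then $\sum_{p\in Q}\rho(p)^\alpha \ge \sum_\ell g_\ell^\alpha$ minus lower-order terms, because each gap $g_\ell$ between consecutive points must be "paid for" by some point whose range spans it, and by convexity of $t\mapsto t^\alpha$ with $\alpha>1$ it is cheapest to pay for each gap separately (i.e.\ use \CW-minimal or \CCW-minimal ranges). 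Here the gap multiset on $p_1,\dots,p_{2n+1}$ is $\{2,1,2,1,\dots\}$ with $n$ twos and $n$ ones, giving $\sum g_\ell^\alpha = n\cdot 2^\alpha + n\cdot 1 = (2^\alpha+1)n$, and removing the $o(n)$ external entry points loses only $o(n)$ gaps, hence $o(n)$ in the sum. This yields $\sum_{i=1}^{2n+1}\rho(p_i)^\alpha \ge (2^\alpha+1)n - o(n)$, as claimed, and the same argument applies verbatim after the insertion of $q$ (the gap structure among $p_1,\dots,p_{2n+1}$ is unchanged; only $\dcw(p_{2n+1},q)$ and $\dcw(q,s)$ differ).

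I expect the main obstacle to be making the "each gap is paid for separately, and by convexity this is optimal" step fully rigorous in the presence of long ranges that span several gaps. The clean way is a charging argument: for each consecutive pair $(p_i,p_{i+1})$ with $i,i+1\notin S_0$, some point $p_j$ has a range reaching from one side of the gap to the other (either a clockwise edge from $p_i$ onward, or a counterclockwise edge from $p_{i+1}$ backward, or a longer range covering the gap from outside); assign gap $(p_i,p_{i+1})$ to the point $p_j$ whose range covers it, breaking ties by, say, the nearest such $p_j$. A single point $p_j$ may be charged by several consecutive gaps on one side; but if $p_j$ covers gaps summing to total length $L$ on (say) its clockwise side, then $\rho(p_j)\ge L$, and since all those gaps are disjoint subintervals we have $\rho(p_j)^\alpha \ge L^\alpha \ge \sum (\text{those gap lengths})^\alpha$ by superadditivity of $t\mapsto t^\alpha$ for $\alpha>1$. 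Summing over all $p_j$ recovers $\sum_{\ell} g_\ell^\alpha$ over all charged gaps, and only $o(n)$ gaps (those incident to $S_0$) go uncharged. The bookkeeping — handling the at most two gaps per point that straddle the boundary between its clockwise-covered and counterclockwise-covered sides, and confirming no gap is double-counted — is routine but is where care is needed; the rest is arithmetic on the gap multiset.
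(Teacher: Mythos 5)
Your overall strategy is the same as the paper's: discard the $O(\delta)=o(n)$ points that can be reached from $s$ or $q$, and then lower-bound the cost of serving the remaining consecutive points by the sum $\sum_t g_t^{\alpha}=(2^{\alpha}+1)n$ of the powered gap lengths, up to lower-order corrections. The first half of your argument (only $o(n)$ indices in $S_0$, every other $p_i$ must get its incoming edge from another $p_j$) matches the paper's proof. The problem is in the charging step, and it is not just bookkeeping. First, it is not true that every gap $(p_i,p_{i+1})$ with $i,i+1\notin S_0$ is spanned by some range: $p_i$ can be covered only from its counterclockwise side and $p_{i+1}$ only from its clockwise side, in which case no range bridges the gap. (The paper accounts for this by allowing one uncovered interval and subtracting $2^{\alpha}$; the reason at most \emph{one} such gap can occur is reachability from $s$ --- two unbridged gaps would isolate the points between them, since no range can wrap around the circle --- and this uses feasibility of the broadcast tree, not covering alone.) Second, and more seriously, your per-point bound only controls one side at a time: if $p_j$ bridges gaps of total length $L_{\mathrm{cw}}$ clockwise and $L_{\mathrm{ccw}}$ counterclockwise, then $\range(p_j)\geq\max(L_{\mathrm{cw}},L_{\mathrm{ccw}})$, so the charge you assign to $p_j$ can be as large as $2\range(p_j)^{\alpha}$. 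Summing then yields only $\sum_i\range(p_i)^{\alpha}\geq\tfrac12(2^{\alpha}+1)n-o(n)$, which is not the statement of Lemma~\ref{lem:cost-of-pi} and would break its downstream use (the constant in front of $n$ must be exactly $2^{\alpha}+1$ for the contradictions in Lemma~\ref{lem:no-cw-or-ccw-edge} to go through).

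To close this you need to orient the charging using the broadcast-tree structure rather than covering alone: at most one gap is unbridged, each of the (at most two) resulting segments must be swept from its unique entry point, and along a minimum-hop path from that entry point every edge goes in one direction, so each range on the path is charged only for gaps on one of its sides; superadditivity of $t\mapsto t^{\alpha}$ then gives the full $\sum_t g_t^{\alpha}$ without the factor $2$. This is exactly the excess/minimum-hop-path device the paper uses in the proof of Lemma~\ref{lem:many-minimal-ranges}; the paper's own proof of Lemma~\ref{lem:cost-of-pi} compresses it into the one-line claim that the covering cost is at least $\sum_t \dcw(p_t,p_{t+1})^{\alpha}-2^{\alpha}$. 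So your proposal identifies the right quantities and the right $o(n)$ error terms, but the step you label as routine is precisely where the real argument lives, and as written it does not deliver the claimed constant.
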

\noindent The following lemma gives a key property of the construction.
\begin{restatable}{lemma}{nocworccwedge}
\label{lem:no-cw-or-ccw-edge}
The point~$p_{2n+1}$ cannot have an incoming counterclockwise edge before $q$ is inserted,
and the point~$p_{1}$ cannot have an incoming clockwise edge after $q$ has been inserted.
\end{restatable}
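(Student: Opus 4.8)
The plan is to prove the two symmetric claims by a cost argument: each forbidden edge would force an excessively large range that, together with the lower bound of Lemma~\ref{lem:cost-of-pi} on $\sum_i \rho(p_i)^\alpha$, pushes the total cost past $\ca\cdot\opt$. I will prove the first claim (no incoming counterclockwise edge at $p_{2n+1}$ before $q$ is inserted) in detail; the second is entirely analogous after the insertion of $q$, with the roles of clockwise/counterclockwise and of $p_1$/$p_{2n+1}$ interchanged, and the (now slightly shorter) gap $\dcw(s,p_1)=\delta$ replaced by $\dcw(p_{2n+1},q)+\dcw(q,s)=2x\delta$.

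First I would observe that the only points that could possibly reach $p_{2n+1}$ via a counterclockwise edge are the points lying clockwise-ahead of $p_{2n+1}$, i.e.\ $s$ (directly, at counterclockwise distance $\delta$) or some $p_j$ with $j$ close to $2n+1$ going ``the long way around'' — but going counterclockwise from any $p_j$ to $p_{2n+1}$ means traversing at least the arc of length $\delta+2x\delta$ past $s$ and $q$, so the range needed is at least $\delta$. (From $s$ itself the counterclockwise distance to $p_{2n+1}$ is exactly $\delta$, since $\dccw(s,p_{2n+1})=\dcw(s,q)+\dcw(q,p_{2n+1})=2x\delta<\delta$ — wait, I need to recheck this; $\dccw(s,p_{2n+1})$ equals $\dcw(p_{2n+1},q)+\dcw(q,s)=2x\delta$, and since $x^\alpha<1/2$ this is indeed less than $\delta$, so the relevant lower bound on a range reaching $p_{2n+1}$ counterclockwise from $s$ is $2x\delta$, while from any $p_j$ it is at least $\delta$.) So I would split into cases according to which point $p$ carries the hypothetical incoming counterclockwise edge of $p_{2n+1}$, and in each case extract a lower bound on $\rho(p)^\alpha$: either $\rho(p)^\alpha\geq(2x\delta)^\alpha=2x^\alpha\delta^\alpha$ (when $p=s$) or $\rho(p)^\alpha\geq\delta^\alpha$ (when $p=p_j$ for some $j$). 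In either case this is a contribution to the total cost that is \emph{on top of} the $(2^\alpha+1)n-o(n)=2\delta^\alpha-o(n)$ guaranteed by Lemma~\ref{lem:cost-of-pi}, except that one must be careful: if $p=p_j$ then $\rho(p_j)^\alpha$ is already being counted in that sum, so the correct statement is that $\rho(p_j)$ must now be a \emph{long} range (at least $\delta$, versus its $O(1)$ standard range), which replaces an $O(1)$ term by a $\delta^\alpha$ term in the sum, again a net gain of $\delta^\alpha-O(1)$.

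Combining, in every case the total cost is at least $2\delta^\alpha-o(n)+\bigl(\min(2x^\alpha,1)\delta^\alpha-O(1)\bigr)$, and since $\min(2x^\alpha,1)=2x^\alpha$ is a constant bounded away from $0$ and $\delta^\alpha=(2^\alpha+1)n=\Theta(n)$ dominates the $o(n)$ and $O(1)$ terms, this is at least $(2+2x^\alpha-o(1))\delta^\alpha$, which for $n$ large enough exceeds $2x^\alpha\delta^\alpha+(2^\alpha+1)n$ — no wait, I should compare against $\ca\cdot\opt(P)\leq\ca\cdot 2\delta^\alpha$ using Observation~\ref{obs:S1-opt}. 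So the requirement is $2\delta^\alpha+2x^\alpha\delta^\alpha-o(\delta^\alpha)>\ca\cdot 2\delta^\alpha$, i.e.\ $1+x^\alpha-o(1)>\ca$; since $x^\alpha>(1/2)^\alpha$ and $\ca\leq 1+2^{\alpha-4}-\tfrac18$, one checks $1+x^\alpha>1+2^{-\alpha}$ comfortably exceeds this for the relevant range of $\alpha$ — the first term $1+2^{\alpha-4}-\tfrac18$ in the definition of $\ca$ is exactly what makes this go through, and this is precisely why that term appears. I would verify this numeric inequality (at $\alpha=2$: $1+1/8-1/8=1$ versus $1+x^2$ with $x^2=1/4+1/8=3/8$, giving $11/8>1$, with huge slack) and conclude that a $\ca$-approximation cannot contain such an edge.

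The main obstacle is the bookkeeping in the case $p=p_j$: I must argue cleanly that giving $p_j$ a range reaching all the way around to $p_{2n+1}$ does not somehow let us \emph{save} on the ranges of the other $p_i$'s in a way that absorbs the $\delta^\alpha$ overhead. This is handled by noting that Lemma~\ref{lem:cost-of-pi}'s bound $\sum_{i=1}^{2n+1}\rho(p_i)^\alpha\geq(2^\alpha+1)n-o(n)$ holds \emph{unconditionally} (regardless of which edges are present), so the long range of $p_j$ is genuinely additive: $\sum_i\rho(p_i)^\alpha\geq(2^\alpha+1)n-o(n)$ already, and separately $\rho(p_j)^\alpha\geq\delta^\alpha$ forces $\rho(p_j)$ to be much larger than the $O(1)$ it would contribute in the cheap solution, so the sum is at least $(2^\alpha+1)n-o(n)-O(1)+\delta^\alpha$ — more carefully, one subtracts the $O(1)$ that Lemma~\ref{lem:cost-of-pi}'s cheap witness assigns to $p_j$ and adds back $\delta^\alpha$. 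Adding $\rho(s)^\alpha\geq 0$ this already beats $\ca\cdot 2\delta^\alpha$ for large $n$. I expect this to be the one place where the write-up needs a careful sentence, but no real difficulty beyond that.
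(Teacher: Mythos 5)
Your overall strategy coincides with the paper's: force a lower bound on the range of whichever point carries the forbidden edge, add the unconditional bound of Lemma~\ref{lem:cost-of-pi}, and contradict the $\ca$-approximation via Observation~\ref{obs:S1-opt}. However, the quantitative core of your first claim contains an error that breaks the verification. You write $(2x\delta)^\alpha = 2x^\alpha\delta^\alpha$, but in fact $(2x\delta)^\alpha = 2^\alpha x^\alpha\delta^\alpha = \left(2^{\alpha-2}+\tfrac12\right)\delta^\alpha$; this stronger value is exactly what the first term of $\ca$ is calibrated against, since the correct total is at least $\left(2^{\alpha-2}+\tfrac32\right)\delta^\alpha - o(n) = \left(1+2^{\alpha-3}-\tfrac14\right)\cdot 2\delta^\alpha - o(n)$, and halving the excess to absorb the $o(n)$ gives precisely $1+2^{\alpha-4}-\tfrac18$. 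With your weakened bound the ratio you derive is only $1+x^\alpha-o(1)$ with $x^\alpha=\tfrac14+2^{-\alpha-1}$, and your claim that this ``comfortably exceeds'' $1+2^{\alpha-4}-\tfrac18$ is false once $\alpha\geq 3$ (at $\alpha=3$ you would need $\tfrac14+\tfrac1{16}>\tfrac12-\tfrac18$, which fails); your spot check at $\alpha=2$ also miscomputes $2^{\alpha-4}$ as $\tfrac18$ instead of $\tfrac14$, so the first term of $\ca$ is $\tfrac98$, not $1$. (Your conclusion could still be salvaged because $\ca\leq 1+\tfrac{2^{\alpha-1}-1}{3\cdot2^\alpha+2}<1+\tfrac16<1+x^\alpha$, but you never compare against that term.)

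The second claim is also not ``entirely analogous'' in the way you describe. You have the two gaps interchanged: an incoming counterclockwise edge at $p_{2n+1}$ forces a range of $\dccw(s,p_{2n+1})=2x\delta$, whereas an incoming clockwise edge at $p_1$ forces only $\dcw(s,p_1)=\delta$. More importantly, after $q$ is inserted the benchmark drops to $\opt(P\cup\{q\})\leq(2x^\alpha+1)\delta^\alpha<2\delta^\alpha$, and the contradiction in the second case comes from the ratio $\tfrac{2\delta^\alpha}{(2x^\alpha+1)\delta^\alpha}$ being bounded away from $1$ --- this is where the \emph{second} term of $\ca$ comes from, so it is a genuinely different computation rather than a symmetric repetition. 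On the positive side, your observation that the forbidden edge might emanate from some $p_j$ rather than from $s$, so that its cost is already inside the sum bounded by Lemma~\ref{lem:cost-of-pi}, is a legitimate subtlety that the paper's own proof glosses over (it only treats the cheapest case $\range(s)\geq 2x\delta$); but your proposed repair (``subtract the $O(1)$ the cheap witness assigns to $p_j$'') is not a statement about the lower bound of Lemma~\ref{lem:cost-of-pi} and would itself need to be made precise, e.g.\ by bounding $\range(p_j)^\alpha\geq\bigl((1+2x)\delta\bigr)^\alpha$ directly against $\ca\cdot\opt(P)$.
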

\begin{proof}
The cheapest incoming counterclockwise edge for~$p_{2n+1}$ before the insertion of~$q$
is from~$s$, but this is too expensive for \alg to achieve approximation ration~$\ca$.
Similarly, the cheapest incoming clockwise edge for~$p_1$ is from $s$,
but this is too expensive after the insertion of~$q$.
The computations for both cases can be found Appendix~\ref{app:no-sas-missing-proofs}. 
\end{proof}
We are now ready to prove that many edges must change from being \CW-minimal
to being \CCW-minimal when~$q$ is inserted.

\begin{lemma} \label{lem:many-minimal-ranges}
Before the insertion of~$q$, at least $4n/3+1$ of the points from $\{s,p_1,\ldots,p_{2n}\}$ 
have  a \CW-minimal range and after the insertion of~$q$ at least $4n/3+1$ of the points 
from $\{q,p_1,\ldots,p_{2n}\}$ have  a \CCW-minimal range.
\end{lemma}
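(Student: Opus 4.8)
The plan is to combine the global cost bound from Lemma~\ref{lem:cost-of-pi} with the structural obstruction from Lemma~\ref{lem:no-cw-or-ccw-edge}, using a counting argument on the points $p_1,\ldots,p_{2n}$. First I would focus on the ``before the insertion of $q$'' case; the ``after'' case is symmetric (clockwise $\leftrightarrow$ counterclockwise, $p_{2n+1}\leftrightarrow p_1$, $s\leftrightarrow q$), so it suffices to do one and remark on the symmetry. By Lemma~\ref{lem:cost-of-pi}, $\sum_{i=1}^{2n+1}\rho(p_i)^\alpha \geq (2^\alpha+1)n - o(n)$, which is essentially the cost of giving every $p_i$ its \CW-minimal range, namely $\delta^\alpha + (2^\alpha+1)n$ minus the $\delta^\alpha$ contributed by $s$; i.e.\ the points $p_1,\ldots,p_{2n+1}$ are forced to pay, collectively, almost exactly the cost of an all-\CW-minimal assignment on them. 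The key quantitative step is to argue that any point $p_i$ with $1\le i\le 2n$ whose range is \emph{not} \CW-minimal must either be a zero-range point (range $<$ its clockwise gap, effectively $0$ in the approximation sense, so not contributing to ``covering'' its clockwise neighbour), or have a range at least as large as its \emph{counterclockwise} gap or larger, in which case its cost exceeds the \CW-minimal cost by a constant factor. Since the clockwise and counterclockwise gaps alternate between $1$ and $2$, a point whose range ``skips'' or ``overshoots'' pays strictly more than its \CW-minimal share. A budget argument then bounds the number of such deviating points: if more than $2n/3 - 1$ of the points $p_1,\ldots,p_{2n}$ fail to be \CW-minimal, the accumulated surplus exceeds the $o(n)$ slack permitted by Lemma~\ref{lem:cost-of-pi}, a contradiction.

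More concretely, here is how I would carry out the counting. Consider the broadcast tree maintained by \alg before the insertion of $q$. Every point $p_i$ with $1\le i\le 2n$ must be reached, so it has an incoming edge, and its parent must pay at least the relevant gap distance. I would classify the points $p_1,\dots,p_{2n}$ according to whether they have \CW-minimal range, and set up the inequality
\[
\sum_{i=1}^{2n+1}\rho(p_i)^\alpha \;\geq\; \sum_{i : \rho(p_i)\text{ \CW-minimal}} (\text{cw-gap of }p_i)^\alpha \;+\; \sum_{i : \rho(p_i)\text{ not \CW-minimal}} c_i,
\]
where for a non-\CW-minimal point $p_i$ the term $c_i$ is the true contribution of $p_i$'s range, which I claim is either $0$ (zero-range point) or at least $(\text{ccw-gap of }p_i)^\alpha \ge (\text{cw-gap of }p_i)^\alpha$ plus a strictly positive constant excess, except possibly for a bounded number of ``interface'' points near $s$ (and near $p_{2n+1}$) that can legitimately carry a long range reaching across $s$. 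Using Lemma~\ref{lem:no-cw-or-ccw-edge}, $p_{2n+1}$ has no incoming counterclockwise edge, so the chain reaching it from the right must be built of \CW-minimal-type edges among $p_1,\ldots,p_{2n}$, which is exactly what forces the bulk of these points to be \CW-minimal. Comparing against the upper bound $\sum_{i=1}^{2n+1}\rho(p_i)^\alpha \leq \opt(P) \leq 2\delta^\alpha = 2\delta^\alpha$, or more precisely against the lower bound $(2^\alpha+1)n - o(n)$ from Lemma~\ref{lem:cost-of-pi} read in the other direction together with Observation~\ref{obs:S1-opt}, the number of non-\CW-minimal points among $p_1,\ldots,p_{2n}$ is forced to be at most $2n/3 - 1$; hence at least $2n - (2n/3 - 1) = 4n/3 + 1$ of them (counting $s$ if needed to reach the ``$\{s,p_1,\ldots,p_{2n}\}$'' statement; in fact $s$ itself will be \CW-minimal or the argument adjusts by one) are \CW-minimal, which is the claimed bound.

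The main obstacle I anticipate is making the ``surplus'' accounting airtight: I must show that a point that is not \CW-minimal genuinely costs a constant amount \emph{more} than a \CW-minimal point, and carefully handle the $O(1)$ exceptional points whose ranges are allowed to be large because they cross $s$ (these are precisely the root-crossing point and its immediate neighbours in the structure of Theorem~\ref{thm:structure-1D} applied after cutting at the point guaranteed by Lemma~\ref{lem:S1-structure}). I would absorb all such exceptional points into the $o(n)$ (indeed $O(1)$) slack, which is why the constant $\ca$ in Theorem~\ref{thm:S1-no-SAS} is chosen as the minimum of three terms — each term corresponds to one way a point can deviate (being zero-range, overshooting by one gap, or overshooting by more), and the relevant inequalities $2^\alpha - 1 > 0$, $3^\alpha - 2^\alpha - 1 > 0$, $4^\alpha - 2^\alpha - 2 > 0$ for $\alpha > 1$ are exactly the statements that each deviation type incurs a strictly positive excess. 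Once the excess-per-deviation is a fixed positive constant and the total available slack is $o(n)$, the bound $2n/3 - 1$ on the number of deviating points (hence $4n/3 + 1$ on the \CW-minimal ones) follows by dividing; the precise choice of the fraction $1/3$ just needs the surplus from $2n/3$ deviating points to exceed the slack for $n$ large, which it does.
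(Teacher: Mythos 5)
Your high-level strategy---handle the ``before'' case and get the ``after'' case by symmetry, then derive a contradiction from a constant excess incurred per non-\CW-minimal point---matches the paper's. But two of your key steps do not hold up. First, your per-point dichotomy is false, and with it the charging scheme. You claim a non-\CW-minimal point is either zero-range or pays at least its \CW-minimal cost plus a positive constant; but a point whose clockwise gap is $2$ can be given range $1$ (covering only its counterclockwise neighbour), making it neither \CW-minimal nor zero-range while paying \emph{less} than its \CW-minimal cost. More fundamentally, a zero-range point contributes no cost and hence no excess on its own; the excess must be charged to whichever point covers it, and your proposal never sets up that charging. The paper does it concretely: by Lemma~\ref{lem:no-cw-or-ccw-edge}, $p_{2n+1}$ has no incoming counterclockwise edge, so a minimum-hop path $\pi^*$ from $s$ to $p_{2n+1}$ consists only of clockwise edges; each $p_j$ is assigned to the unique edge of $\pi^*$ spanning it, and the excess of that edge (its cost minus the total \CW-minimal cost of the hops it spans) is distributed over the points assigned to it. The bound $\exc(p_j)\geq c'_{\alpha}$ is then proved per edge by cases on how many points it spans (one, two, three, or more via subadditivity). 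Without this path and assignment, the ``surplus'' of a deviating point is not well defined.

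Second, you misidentify the budget. You assert that the total available slack is the $o(n)$ error term of Lemma~\ref{lem:cost-of-pi}, so that ``any'' constant fraction of deviating points yields a contradiction. In fact the budget is $(\ca-1)\cdot\opt(P)\leq 2(\ca-1)(2^{\alpha}+1)n$, which is $\Theta(n)$: the contradiction is with the approximation ratio~$\ca$, not with Lemma~\ref{lem:cost-of-pi} (which the paper does not use in this proof at all---it is used to prove Lemma~\ref{lem:no-cw-or-ccw-edge}). The entire purpose of the third term in the definition of $\ca$ is to guarantee $c'_{\alpha}\cdot(2n/3) > (\ca-1)\cdot\opt(P)$; this comparison, which is where the fraction $1/3$ and the precise value of $\ca$ actually enter, is absent from your plan. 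As written, your argument would ``show'' that all but $o(n)$ points are \CW-minimal, a conclusion that is stronger than the lemma and not supported by the construction.
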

\begin{proof}
We prove the lemma for the situation before $q$ is inserted; the proof for the situation
after the insertion of~$q$ is similar. Observe that before and after the insertion of~$q$, the distance
between any two points is either~1,~2 or at least~3. Hence, in what
follows we may assume
that $\range(p)\in\{0,1,2\}\cup [3,\infty)$ for any point~$p \in P \cup\{q\}$.

It will be convenient to define $p_0 := s$ (although we may still use~$s$
if we want to stress that we are talking about the source). 
Recall that $p_{2n+1}$ does not have an incoming counterclockwise edge in the communication 
graph~$\cg(P)$ before the insertion of~$q$.
Let $\pi^*$ be a minimum-hop path from $s$ to $p_{2n+1}$ in~$\cg(P)$. 
Since $p_{2n+1}$ does not have an incoming counterclockwise edge and $\pi^*$
is a minimum-hop path, all edges in $\pi$ are clockwise.
We assign each point $p_j$ with $1\leq j\leq 2n+1$ to the edge
$(p_i,p_t)$ in $\pi^*$ such that $i+1\leq j\leq t$, and we define
$A(p_i,p_t):=\{p_{i+1},\ldots,p_t\}$ to be the set of all points assigned to~$(p_i,p_t)$.
We define the \emph{excess} of a point~$p_j\in A(p_i,p_t)$ to be
\[
\exc(p_j) := \frac{1}{|A(p_i,p_t)|} \cdot \left( \range(p_i)^{\alpha} - \sum_{p_{\ell}\in A(p_i,p_t)} d(p_{\ell-1}, p_{\ell})^{\alpha} \right).
\]
We say that an edge $(p_i,p_t)$ in $\pi^*$ is \CW-minimal
if $p_i$ has a \CW-minimal range.
Note that if a point~$p_j$ is assigned to
a \CW-minimal edge, then this is the edge~$(p_{j-1},p_j)$ and 
$\exc(p_j)=0$. Intuitively, $\exc(p_j)$ denotes the additional cost we
pay for reaching~$p_j$ compared to reaching it by a \CW-minimal edge,
if we distribute the additional cost of a non-\CW-minimal edge over
the points assigned to it. Because each of the points $p_1,\ldots,p_{2n+1}$
is assigned to exactly one edge on the path $\pi^*$,
we have
\begin{equation}  \label{eq:total-cost-with-excess}
\sum_{p_i\in \pi^*} \range(p)^{\alpha}
\geq \sum_{j=1}^{2n+1} d(p_{j-1}, p_{j})^{\alpha}  +  \sum_{j=1}^{2n+1} \exc(p_j)
\geq \opt(P) +  \sum_{j=1}^{2n+1} \exc(p_j)
\end{equation}
where the second inequality follows from Observation~\ref{obs:S1-opt} and
because $p_0=s$. The following claim is proved in Appendix~\ref{app:no-sas-missing-proofs}.
(Essentially, the smallest possible excess is obtained when
$|A(p_i,p_t)|\in \{1,2,3\}$; the three terms in the claim correspond to these cases.)
\begin{claiminproof}
If $p_j$ is not assigned to a \CW-minimal edge then $\exc(p_j)\geq c'_{\alpha}$,
where $c'_{\alpha} = \min \left( 2^{\alpha}-1, \frac{3^{\alpha}-2^{\alpha}-1}{2}, \frac{ 4^{\alpha}-2^{\alpha}-2}{3}  \right)$.
\end{claiminproof}
Now suppose for a contradiction that less than $4n/3+1$ points from 
$\{s,p_1,\ldots,p_{2n+1}\}$ have a \CW-minimal range. Then at least $2n/3+1$ points~$p_j$
have $\exc(p_j) \geq c'_{\alpha}$ by the claim above. By Inequality~(\ref{eq:total-cost-with-excess})
the total cost incurred by \alg is therefore more than
\begin{eqnarray}
\opt(P) + c'_{\alpha} \cdot (2n/3) 
& = & \opt(P) +  \frac{c'_{\alpha}}{3(2^{\alpha}+1)} \cdot 2(2^{\alpha}+1) n   \label{eq2} \\
& > & \left( 1 + \frac{\min \left( 2^{\alpha}-1, \frac{3^{\alpha}-2^{\alpha}-1}{2}, \frac{ 4^{\alpha}-2^{\alpha}-2}{3}  \right)}{4(2^{\alpha}+1)} \right) \cdot \opt(P) \label{eq3} \\[1mm]
& \geq & \ca \cdot \opt(P) \label{eq4}
\end{eqnarray}
which contradicts the approximation ratio achieved by~\alg.
\end{proof}
Lemma~\ref{lem:many-minimal-ranges} implies that at least $4n/3$ of the points $p_1,\ldots,p_{2n+1}$
have a \CW-minimal range before $q$ is inserted, and at least $4n/3$ of 
those points have a \CCW-minimal range after the insertion.
Hence, at least $2n+1 - 2 \cdot (2n/3+1) = 2n/3 -1$ points must change from being
\CW-minimal to being \CCW-minimal, thus finishing the proof of Theorem~\ref{thm:S1-no-SAS}.
\section{The 2-dimensional problem}
\label{sec:higher-dim}
The broadcast range-assignment problem is \np-hard in $\Reals^2$, so we cannot expect a 
characterization of the structure of an optimal solution similar to Theorem~\ref{thm:structure-1D}.
Using a similar construction as in~$\bbS^1$
we can also show that the problem in $\Reals^2$ does not admit a~SAS.
\begin{restatable}{theorem}{noSASplane}
\label{thm:no-SAS-R2}
The dynamic broadcast range-assignment problem in~$\Reals^2$ with distance power gradient~$\alpha>1$ 
does not admit a SAS. In particular, there is a constant $\ca>1$ such that the following holds: 
for any $n$ large enough, there is  a set $P :=\{s,p_1,\ldots,p_{2n+1}\}$ and a point $q$ in $\Reals^2$ 
such that any update algorithm~\alg
that maintains a $\ca$-approximation must modify at least $2n/3-1$ ranges upon the insertion of~$q$ into~$P$.
\end{restatable}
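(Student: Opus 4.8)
The plan is to mimic the $\bbS^1$ construction of Section~\ref{subsec:S1-no-SAS}, but realized in the plane, so that the circular structure that forced the cut point to jump is recreated by an almost-closed polygonal chain. Concretely, I would place the points $p_1,\ldots,p_{2n+1}$ along a long, thin ``horseshoe'' (a path that bends around so that its two endpoints $p_1$ and $p_{2n+1}$ come back close to the source $s$), keeping the consecutive gaps $|p_i p_{i+1}|$ alternating between $2$ and $1$ exactly as in $\bbS^1$, and keeping $s$ far ($\delta$ away, with $\delta^{\alpha}=(2^{\alpha}+1)n$) from the near endpoint $p_1$; see Fig.~\ref{fig:no-sas-cirle-new}(ii). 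The inserted point $q$ sits between $p_{2n+1}$ and $s$ at distance $x\delta$ from each, with $x^{\alpha}=\tfrac14+(\tfrac12)^{\alpha+1}$, just as before. The crucial geometric point to verify is that, because the horseshoe is long and thin, the Euclidean distance between any two \emph{non-consecutive} chain points $p_i,p_j$ can be made at least $3$ (in fact it can be made arbitrarily large for indices that are far apart), so that locally the instance behaves exactly like the line, while globally the only short way to ``close the loop'' is through $s$ (before the insertion) or through $q$ (after).

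With the geometry set up, the remaining argument is essentially a transcription of Lemmas~\ref{lem:cost-of-pi}, \ref{lem:no-cw-or-ccw-edge} and \ref{lem:many-minimal-ranges}. First I would reprove the analogue of Observation~\ref{obs:S1-opt}: giving every $p_i$ its ``forward-minimal'' range (toward $p_{i+1}$) together with $\range(s)=\delta$ yields a feasible broadcast tree of cost $2\delta^{\alpha}$ before insertion, and the symmetric ``backward-minimal'' assignment with $\range(q)=x\delta$ gives cost $(2x^{\alpha}+1)\delta^{\alpha}<2\delta^{\alpha}$ afterwards, so $\opt(P)\le 2\delta^{\alpha}$ and $\opt(P\cup\{q\})<2\delta^{\alpha}$. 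Then, exactly as in Lemma~\ref{lem:cost-of-pi}, any $\ca$-approximate solution can afford at most $o(n)$ of the $p_i$ to be reached directly from $s$ or $q$ (one long range costs $\ge \delta^{\alpha}\approx \opt/2$), so all but $o(n)$ of the $p_i$ must be reached along the chain, forcing $\sum_i \range(p_i)^{\alpha}\ge(2^{\alpha}+1)n-o(n)$. The analogue of Lemma~\ref{lem:no-cw-or-ccw-edge} holds because the cheapest ``wrong-direction'' incoming edge for $p_{2n+1}$ (resp.\ $p_1$) is again the one from $s$, and by the distance computations this edge is too expensive for a $\ca$-approximation before (resp.\ after) the insertion.

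The heart of the argument is the plane analogue of Lemma~\ref{lem:many-minimal-ranges}, obtained via the same excess-charging scheme: take a minimum-hop path $\pi^*$ from $s$ to $p_{2n+1}$ in the communication graph; since $p_{2n+1}$ has no incoming ``backward'' edge, every edge of $\pi^*$ is ``forward'' along the chain; assign each $p_j$ to the unique edge $(p_i,p_t)$ of $\pi^*$ that spans it, define $\exc(p_j)$ as before, and use the same case analysis on $|A(p_i,p_t)|\in\{1,2,3,\ldots\}$ to conclude $\exc(p_j)\ge c'_{\alpha}$ whenever $p_j$ is not charged to a forward-minimal edge. (This case analysis is purely about the alternating $1,2$ gap pattern and the fact that all longer distances are $\ge 3$, so it carries over verbatim.) Combining with Inequality~(\ref{eq:total-cost-with-excess}) and $\opt(P)=2(2^{\alpha}+1)n$ shows that fewer than $4n/3+1$ forward-minimal ranges before insertion would force cost $>\ca\cdot\opt(P)$; symmetrically after insertion; hence at least $2n/3-1$ of the $p_i$ must switch from forward-minimal to backward-minimal, proving the theorem. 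The main obstacle I anticipate is purely geometric bookkeeping: checking that the horseshoe can be drawn so that (i) \emph{all} non-consecutive pairwise distances are $\ge 3$ (so the line-like case analysis applies unchanged) while (ii) the distances $|s p_1|$, $|p_{2n+1}q|$, $|qs|$ take exactly the prescribed values $\delta$, $x\delta$, $x\delta$ and the ``wrong-direction'' edges into $p_1$ and $p_{2n+1}$ remain as expensive as in $\bbS^1$ — this amounts to a slightly careful but routine embedding of the circular instance, using that the circle has large radius $\Theta(n^{1/\alpha})$ so it is locally almost straight.
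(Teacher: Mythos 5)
Your overall strategy---realize the $\bbS^1$ instance on an almost-closed curve in the plane and re-run Lemmas~\ref{lem:cost-of-pi}, \ref{lem:no-cw-or-ccw-edge} and \ref{lem:many-minimal-ranges}---is exactly the paper's strategy, but the geometric realization you propose has a genuine problem. A \emph{long, thin} horseshoe is the wrong shape: points on the two opposite arms that face each other across the opening are at Euclidean distance equal to the width of the horseshoe, so your parenthetical claim that non-consecutive distances ``can be made arbitrarily large for indices that are far apart'' fails precisely for a thin shape. Cheap crossing edges between the arms would then invalidate both the covering argument behind Lemma~\ref{lem:cost-of-pi} (a point need no longer be covered along the chain) and the minimum-hop-path/excess argument of Lemma~\ref{lem:many-minimal-ranges}. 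The curve must be \emph{fat}---the paper uses a square of side $\Theta(n)$ (note the perimeter is $\delta+3n+2x\delta=\Theta(n)$, not $\Theta(n^{1/\alpha})$ as you write)---so that any two points not on the same straight portion are at distance $\gg 3\delta$, which together with the a priori bound $\range(p)\leq 3\delta$ rules out shortcuts.

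The second issue is the condition you flag as ``routine bookkeeping'': that \emph{all} non-consecutive pairwise distances be at least~$3$. This cannot be achieved exactly on any closed convex curve, because chords cut across corners (or curvature), so points at chain distance~$3$ that straddle a corner are at Euclidean distance strictly less than~$3$. The paper does not try to achieve it; instead it keeps the square, accepts that the local $\{1,2\}\cup[3,\infty)$ analysis is only valid for points on the same edge, and observes that only $o(n)$ points lie within distance $3\delta$ of a corner. It then simply discards the excess of those $o(n)$ points in the proof of Lemma~\ref{lem:many-minimal-ranges} (and restricts the covering argument in Lemma~\ref{lem:cost-of-pi} to points far from corners), checking that subtracting $o(n)$ from the left-hand side of Inequality~(\ref{eq:total-cost-with-excess}) still exceeds $\ca\cdot\opt(P)$. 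Your write-up needs this explicit $o(n)$-loss step (or an equivalent perturbation argument with adjusted constants $c'_{\alpha}$); as stated, the claim that the excess case analysis ``carries over verbatim'' is not correct near the bends of the curve.
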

\begin{proof}
We use the same construction as in $\bbS^1$, where we embed the points on a square and
the distances used to define the instance are measured along the square; see Fig.~\ref{fig:no-sas-cirle-new}(ii). 
For any two points on the same edge of the square, their distance along the square 
is the same as their distance in~$\Reals^2$. Moreover, we know that no range can be larger
than~$3\delta$, otherwise the range assignment is already too expensive. Note that the
number of points $p_i$ within distance $3\delta$ from a corner is $o(n)$.
Using this fact we can argue that all lemmas from Section~\ref{subsec:S1-no-SAS} still hold.
(For example, in the proof of Lemma~\ref{lem:many-minimal-ranges} we can simply ignore the
excess of $o(n)$ points.) This is argued more extensively in Appendix~\ref{app:no-SAS-R2}.
\end{proof}
Although the problem in~$\Reals^2$ does not admit a SAS, there is a relatively simple 
$O(1)$-stable $O(1)$-approximation algorithm for~$\alpha \geq 2$.
The algorithm is based on a result by Amb\"uhl~\cite{DBLP:conf/icalp/Ambuhl05}, 
who showed that a minimum spanning tree (MST) on~$P$ gives a 6-approximation 
for the static broadcast range-assignment problem:
turn the MST into a directed tree rooted at the source~$s$, and assign 
as a range to each point $p\in P$  the maximum length of any of its outgoing edges.
To make this solution stable, we set the range of each
point to the maximum length of any of its incident edges (not just the outgoing ones).
Because an MST in $\Reals^2$ has maximum degree~6, this leads to 
17-stable 12-approximation algorithm; see Appendix~\ref{app:mst-in-R2}.

\section{Concluding remarks}
\label{sec:conclusions}
%
We studied the dynamic broadcast range-assignment problem from a stability perspective,
introducing the notions of $k$-stable algorithms and stable approximation schemes (SASs).
Our results provide a fairly complete picture of the problem in $\Reals^1$, in $\bbS^1$,
and in $\Reals^2$. In particular, we presented a SAS in~$\Reals^1$ that has an
asymptotically optimal stability parameter, and showed that the problem does
not admit a SAS in $\bbS^1$ and $\Reals^2$. Future work can focus on improving 
(the upper and/or lower bounds for) the approximation ratios we have obtained for 
algorithms with constant stability parameter. In particular, 
it is open whether there exists a 2-stable algorithm with approximation ratio less than~2 in $\Reals^1$.
It would also be interesting to use develop algorithm with small stability parameter
in~$\bbS^1$, possibly using the relation we proved between the structure of an optimal
structure in in $\bbS^1$ and in $\Reals^1$.



\bibliography{references}

\newpage

\appendix
\section{A data structure for maintaining an optimal solution}
\label{app:data-structure}
Below we give a detailed description of the data structure to maintain an optimal solution 
in~$\Reals^1$. For convenience, we repeat somematerial from the main text, so that the
description below is self-contained.
\medskip

Recall that an optimal solution for a given point set~$P$ has a single 
root-crossing point,~$p^*$. Once the range of $p^*$ is fixed, the solution 
is completely determined. The range of~$p^*$
is defined by some other point from $P$---we have $\range(p^*)=|p^* p|$ for
some point $p\neq p^*$---and so there are $n-1$ candidate ranges for a given choice of the 
root-crossing point~$p^*$. The idea of our solution is to implicitly
store the cost of the range assignment for each candidate range of $p^*$ such that, 
upon the insertion or deletion of a point in~$P$,
we can in $O(\log n)$ time find the best range for~$p^*$. By maintaining
$n$~such data structures~$\tree_{p^*}$, one for each choice of the
root-crossing point~$p^*$, we can then find the overall best solution.

Besides the data structures $\tree_{p^*}$ which are described below,
we also maintain a global data structure~$\tree_P$ that supports the following operations.
\begin{itemize}
\item Find the predecessor $\prd(q)$ and successor $\suc(q)$ in $P$ of a query point~$q$.
\item Given two points $p,p'\in P$, report the total cost of the chain from~$p$ to~$p'$.
\item Insert or delete points from~$P$.
\end{itemize}
By implementing~$\tree_P$ as a suitably augmented binary search tree---see the book 
by Cormen~\etal~\cite[Chapter~15]{intro-to-alg} for the design and maintenance of such structures---each
of these operations can be performed in $O(\log n)$ time.  

\mypara{The data structure for a given root-crossing point.}
Next we explain our data structure for a given candidate root-crossing point~$p^*$.
We assume without loss of generality that $p^*$ lies to the right of
the source point~$s$; it is straightforward to adapt the structure to the 
(symmetric) case where $p^*$ lies to the left of~$s$, and 
to the case where $p^*=s$.

Let $\R_{p^*}$ be the set of all ranges we need to consider for~$p^*$, for the current set~$P$.
The range of a root-crossing point must extend beyond the source point. Hence,
\[
\R_{p^*} := \{ |p^* p| : p \in P \mbox{ and } |p^* p| > |p^* s| \}.
\]
Let $\lambda_1,\ldots, \lambda_m$ denote the sequence of ranges in~$\R_{p^*}$, ordered from small
to large. (If $\R_{p^*}=\emptyset$, there is nothing to do and our
data structure is empty.)  As mentioned, once we fix a range $\lambda_j$ for the given
root-crossing point~$p^*$, the solution is fully determined by Theorem~\ref{thm:structure-1D}:
there is a chain from $s$ to $p^*$, a chain from the rightmost point within range
of $p^*$ to the right-extreme point, and a chain from the leftmost point within range
of $p^*$ to the left-extreme point.
We denote the resulting range assignment\footnote{When all points in $P$ lies to the same side of~$s$,
then the range assignment is formally not root-crossing, but we will permit ourselves this slight 
abuse of terminology. Notice that in this case the range assignment induced by considering~$s$
as root-crossing point and setting $\range(s) := |s\suc(s)|$ gives a chain from $s$ to the extreme point
as solution, which is optimal.}
for $P$ by~$\Gamma(P,p^*,\lambda_j)$.

There is one subtlety in the definition of 
$\Gamma(P,p^*,\lambda_j)$, namely when there are no points within reach of $p^*$
to, say, the right of~$p^*$; see Figure~\ref{fig:dummy}.
Such a solution can never be optimal, but we must
maintain it nevertheless, because the range $\lambda_j$ may become relevant later.
To deal with this situation, we will insert a dummy point whose location coincides
with~$p^*$ and that is defined to be the predecessor of $\suc(p^*)$.
Note that the dummy will become a zero-range point as soon as an actual point 
is inserted that is within the range of $p^*$ and lies to the same side of $p^*$ as the dummy.
\begin{figure}
\begin{center}
\includegraphics{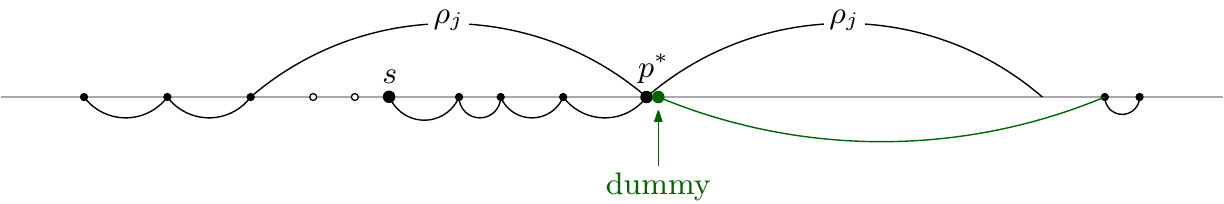}
\end{center}
\caption{A dummy point is inserted to ensure that points to the right of~$p^*$ can be reached.
For clarity the dummy is drawn to the right of~$p^*$, but it actually coincides with $p^*$.}
\label{fig:dummy}
\end{figure}
\medskip

Our data structure, which implicitly stores the costs of the range assignments 
$\Gamma(P,p^*,\lambda_j)$ for all $\lambda_j\in\R_{p^*}$, is an augmented balanced binary search tree~$\tree_{p^*}$, 
defined as follows.
\begin{itemize}
\item The leaves of $\tree_{p^*}$ are in one-to-one correspondence with the candidate ranges in $\R_{p^*}$:
    the leftmost leaf corresponds to $\lambda_1$, the next left to $\lambda_2$, and so on.
    From now on, with a slight abuse of notation, we use $\lambda_j$ to refer to a
    range in $\R_{p^*}$ as well as to the corresponding leaf.
\item Each leaf stores, besides the corresponding range~$\lambda_j$, 
    a value~$f(\lambda_j)$. Initially $f(\lambda_j)$ will be equal to the cost of 
    $\Gamma(P,p^*,\lambda_j)$. Later this may no longer be the case, however. 
\item The internal nodes of $\tree_{p^*}$ are augmented with extra information, as follows.
    For an internal node~$v$, let $\R_{p^*}(v) \subseteq \R_{p^*}$ be the set of all ranges
    stored in the leaves of the subtree rooted at~$v$. The node $v$ stores the
    following additional information, besides the splitting values that we have
    because $\tree_{p^*}$ is a search tree on the ranges in~$\R_{p^*}$:
    \begin{itemize}
    \item A \emph{correction value} $\Delta(v)\in \Reals$.
    \item A value $\mc(v)$ defined as follows. For  a range $\lambda_j\in \R_{p^*}(v)$
          define the \emph{local cost of $\lambda_j$ at $v$} to be $f(\lambda_j)+\sum_{u} \Delta(u)$,
          where the sum is over all nodes $u$ on the path from $v$ 
          (and including $v$) to $\lambda_j$.
          Then $\mc(v)$ is defined to be the minimum local cost over all ranges in $\R_{p^*}(v)$.
    \item A range $\lambda_j\in \R_{p^*}(v)$ whose local cost at $v$ is $\mc(v)$. This range is denoted by $\br(v)$.
    \end{itemize}
\end{itemize}

Our update algorithm  will ensure the following invariant:
\begin{equation}
\parbox{12cm}{For any range $\lambda_j\in \R_{p^*}$, the total cost of
    $\Gamma(P,p^*,\lambda_j)$ is equal to $f(\lambda_j)+\sum_{u} \Delta(u)$, where the sum is
             over all nodes on the search path from $\myroot(\tree_{p^*})$ to~$\lambda_j$.} \label{inv}
\end{equation}
In other words, Invariant~(\ref{inv}) states that, for any range $\lambda_j$, the local cost of $\lambda_j$
at the root of $\tree_{p^*}$ is equal to the actual cost of $\Gamma(P,p^*,\lambda_j)$.
Since $\R_{p^*}(\myroot(\tree_{p^*}))=\R_{p^*}$, this implies that $\mc(\myroot(\tree_{p^*}))$ equals the minimum cost
that can be obtained by a solution that uses~$p^*$
as a root-crossing point.

\mypara{Updating the data structure.}
We now describe how to update the structure upon the insertion of a new point.
Deletions can be handled in a symmetrical manner. 
To simplify the presentation, we assume that no two points in~$P$ coincide;
the solution is easily adapted to the case where $P$ can be a multi-set.
\begin{figure}
\begin{center}
\includegraphics{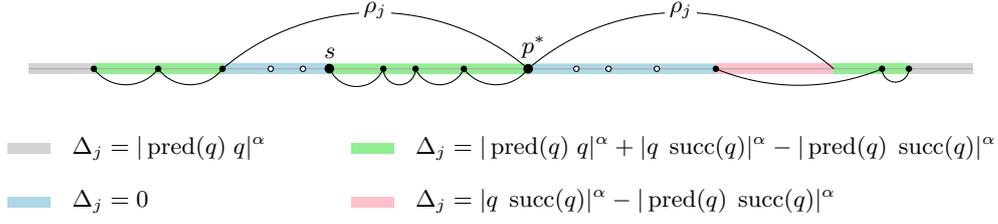}
\end{center}
\caption{Various cases that can arise when a new point~$q$ is inserted into~$P$.
Open disks indicate zero-range points. The arcs indicate the ranges of the points
before the insertion of~$q$, where the range of the root-crossing point is drawn both to its right and to its left.
The colored intervals relate the possible locations of~$q$ to the corresponding values~$\Delta_j$.}
\label{fig:update2}
\end{figure}
Let $\Delta_j$ be the (signed) difference of the cost of the range assignment $\Gamma(P,p^*,\lambda_j)$
before and after the insertion of $q$, where $\Delta_j$ is positive if the cost increases. 
Figure~\ref{fig:update2} shows various possible values for $\Delta_j$, depending on the location
of the new point~$q$. The figure is for the generic case, when $\Zleft,\Zright\neq\emptyset$
and there are points to the right as well as to the left of the interval that is within
reach of the root-crossing point~$p^*$. Lemma~\ref{le:cost-update},
which is easy to verify, gives the values for $\Delta_j$
for all cases, where we write $p<q$ when a point~$p$ is to the left of a point~$q$.
\begin{lemma}\label{le:cost-update}
Let $\Delta_j := \cost\left(\Gamma(P\cup \{q\},p^*,\lambda_j)\right) - \left(\Gamma(P,p^*,\lambda_j)\right)$. 
If $s<q<p^*$ or $p^*<q<s$ we have
\[
\Delta_j =  |\prd(q)\;q|^{\alpha} + |q\;\suc(q)|^{\alpha} - |\prd(q)\;\suc(q)|^{\alpha}
\]
Otherwise we have 
\[
\Delta_j = 
\left\{ \begin{array}{ll}
|q\;\suc(q)|^{\alpha} - |\prd(q)\;\suc(q)|^{\alpha} & \mbox{if  $|p^*q| \leq \rho_j < |p^*\;\suc(q)|$} \\[2mm]
0 & \mbox{if $\rho_j \geq  |p^*\;\suc(q)|$} \\
  & \hspace*{3mm} \mbox{or $ \big(\lambda_j \geq |p^* q|$ and $\suc(q)=\nil \big)$} \\[2mm]
|\prd(q)\;q|^{\alpha} + |q\;\suc(q)|^{\alpha} - |\prd(q)\;\suc(q)|^{\alpha} & \mbox{if $\lambda_j <|p^* q|$ and $\suc(q)\neq\nil$} \\[2mm]
|\prd(q)\;q|^{\alpha} & \mbox{if $\lambda_j <|p^* q|$ and $\suc(q)=\nil$}
\end{array}
\right.
\]
\end{lemma}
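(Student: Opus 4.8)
The plan is to verify Lemma~\ref{le:cost-update} by a direct case analysis driven entirely by the description of the optimal structure in Theorem~\ref{thm:structure-1D}. Recall that for a fixed root-crossing point~$p^*$ (to the right of~$s$) and a fixed range~$\lambda_j$, the assignment $\Gamma(P,p^*,\lambda_j)$ consists of three chains: a chain from $s$ to $p^*$, a chain from the rightmost point within reach of~$p^*$ to $r_{|R|}$, and a chain from the leftmost point within reach of $p^*$ to $\ell_{|L|}$. Inserting a new point~$q$ can only affect the \emph{one} chain that the interval $[\prd(q),\suc(q)]$ happens to straddle, since outside that interval every standard edge is untouched and the range of $p^*$ itself is held fixed at~$\lambda_j$. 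So the whole computation reduces to: identify which chain $q$ falls into, and compute how the cost of that one chain changes.

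First I would handle the case $s<q<p^*$ (and symmetrically $p^*<q<s$). Here $q$ always lands strictly inside the chain from $s$ to $p^*$, regardless of the value of~$\lambda_j$: that chain is present in every $\Gamma(P,p^*,\lambda_j)$, and its vertices visit all points between $s$ and $p^*$ consecutively. Inserting $q$ splits the edge $(\prd(q),\suc(q))$ of that chain into the two edges $(\prd(q),q)$ and $(q,\suc(q))$, so the cost changes by exactly $|\prd(q)\,q|^{\alpha} + |q\,\suc(q)|^{\alpha} - |\prd(q)\,\suc(q)|^{\alpha}$, independent of~$j$. This gives the first displayed formula. Note one has to observe that $\suc(q)\neq\nil$ here because $q$ lies strictly between $s$ and $p^*$, so $q$ is not the right-extreme point and its successor in the chain exists; similarly $\prd(q)$ exists.

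Next I would treat the remaining case, where $q$ lies on the far side of $p^*$ from $s$, i.e. $q>p^*$ in our convention (or $q<s$ on the other side, which is symmetric but needs the mirror-image reasoning). Now the behaviour genuinely depends on $\lambda_j$, because which chain $q$ sits in depends on whether $q$ is within reach of $p^*$. There are four subcases to separate, matching the four branches of the second display: (a) $|p^*q|\leq\lambda_j<|p^*\suc(q)|$ — here $q$ is within reach of $p^*$ but $\suc(q)$ is not, so $q$ becomes a child of $p^*$ and the chain from the rightmost point within reach of~$p^*$ now starts at $\suc(q)$; the net change is $|q\,\suc(q)|^{\alpha}-|\prd(q)\,\suc(q)|^{\alpha}$, being the cost of the new starting edge $(q,\suc(q))$ minus the old edge $(\prd(q),\suc(q))$ that used to begin that chain; (b) $\lambda_j\geq|p^*\suc(q)|$, or $\lambda_j\geq|p^*q|$ with $\suc(q)=\nil$ — here $q$ and (if it exists) $\suc(q)$ are both within reach of $p^*$, so $q$ simply becomes another child of $p^*$ with zero range and nothing else moves, giving $\Delta_j=0$; (c) $\lambda_j<|p^*q|$ with $\suc(q)\neq\nil$ — here $q$ is not within reach of $p^*$, so it is an internal point of the chain to $r_{|R|}$, and the edge $(\prd(q),\suc(q))$ splits as before, giving $|\prd(q)\,q|^{\alpha}+|q\,\suc(q)|^{\alpha}-|\prd(q)\,\suc(q)|^{\alpha}$; (d) $\lambda_j<|p^*q|$ with $\suc(q)=\nil$ — here $q$ becomes the new right-extreme point, appended to the end of the chain with zero range, adding only the edge $(\prd(q),q)$, giving $|\prd(q)\,q|^{\alpha}$.

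The mildly delicate point — the main obstacle, such as it is — is bookkeeping the role of the dummy point and the $\nil$ conventions consistently: one must check that $\prd(q)$ is always well defined in every branch (it is, since $q$ is on the $p^*$-side of $s$ and hence has a neighbour between it and $s$), that the cases are genuinely exhaustive and mutually exclusive (the conditions partition on $\lambda_j$ versus $|p^*q|$ and $|p^*\suc(q)|$, together with whether $\suc(q)=\nil$), and that when $q$ coincides with a dummy or when a dummy gets absorbed the formulas still read correctly under the stated ``no two points coincide'' simplification. Once these conventions are pinned down, each branch is a one-line edge-cost comparison as above, so the lemma follows; I expect no real difficulty beyond careful case enumeration, which is why the paper simply calls it ``easy to verify''.
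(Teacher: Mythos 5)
Your case analysis is correct and is exactly the verification the paper leaves implicit: the paper offers no written proof of Lemma~\ref{le:cost-update} beyond declaring it ``easy to verify'' with reference to Figure~\ref{fig:update2}, and your argument---locating $q$ in the unique affected chain of $\Gamma(P,p^*,\lambda_j)$ and comparing the one or two changed edge costs in each branch---is precisely that intended check, including the correct handling of the $\suc(q)=\nil$ and dummy-point conventions. The only nitpick is a harmless slip of wording in your subcase (a), where the chain after the insertion starts at $q$ (not at $\suc(q)$); the cost difference you compute there is nevertheless the right one.
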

Lemma~\ref{le:cost-update} implies that, after computing $\prd(q)$ and $\suc(q)$, 
we can update our data structure using $O(1)$ \emph{bulk updates} of the following form: 
\begin{quotation}
\noindent Given an interval $I$ of range values and an update value~$\Delta$, add $\Delta$ to the
cost of $\Gamma(P,p^*,\lambda_j)$ for all $\lambda_j\in I$.
\end{quotation}
We cannot afford to do this explicitly, so we implement bulk
updates by updating the auxiliary information stored in $O(\log n)$ nodes in~$\tree_{p^*}$, as follows.
\begin{enumerate}
\item Let $\lambda_{\min}$ and $\lambda_{\max}]$ be the two endpoints of the
      interval~$I$ (possibly $\lambda_{\max}]=\infty$). By searching with
      $\lambda_{\min}$ and $\lambda_{\max}]$ in $\tree_{p^*}$, identify a collection
      $\C(I)$ of $O(\log n)$ nodes in $\tree_{p^*}$ such that $\lambda_i\in I$
      if and only if the leaf storing~$\lambda_j$ is a descendant of a node in~$\C(I)$.
\item Add $\Delta$ to the correction values~$\Delta(v)$ of all  
      nodes~$v\in \C(I)$ and to the value~$\mc(v)$.
\item Update the values $\Delta(v)$, $\mc(v)$, and $\br(v)$
      of the $O(\log n)$ ancestors of the nodes in~$\C(v)$ in a bottom-up manner.
\end{enumerate}
Since algorithms for updating this type of auxiliary information are
rather standard we omit further details.

Besides updating the costs of the assignments $\Gamma(P,p^*,\lambda_j)$, 
we may also need to introduce another candidate range for~$p^*$.
In particular, we need to introduce the range~$|p^* q|$ if $|p^*q| > |p^*s|$.
To this end we need to compute the cost of the range assignment~$\Gamma(P,p^*,|p^* q|)$.
After computing~$\prd(q)$, $\suc(q)$, and the cost of $O(1)$ chains---this can
all be done in $O(\log n)$ time using the global tree~$\tree_P$---we can 
compute the cost of~$\Gamma(P,p^*,|p^* q|)$ in $O(1)$ time.
We then insert a leaf~$w$ for the range~$|p^* q|$ into~$\tree_{p^*}$
with $f(|p^* q|)$ equal to the just computed cost.
Finally, we update the values $\mc(v)$ and $\br(v)$ of the ancestors
$v$ of $w$ whose the current value of $\mc(v)$ is larger than $f(|p^* q|)$). 
(Re-balancing $\tree_{p^*}$, when necessary, can be done in a standard manner~\cite[Chapter~15]{intro-to-alg}.)  

The following lemma summarizes the discussion above.
\begin{lemma}\label{le:update-tree}
$\tree_{p^*}$ can be updated in $O(\log n)$ time per insertion and deletion.
\end{lemma}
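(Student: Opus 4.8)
The plan is to decompose an insertion of a point~$q$ into three tasks inside $\tree_{p^*}$ and bound each by $O(\log n)$: (i)~correcting the stored costs of the assignments $\Gamma(P,p^*,\lambda_j)$ for every range $\lambda_j$ already present in $\R_{p^*}$; (ii)~creating a leaf for the new candidate range $|p^* q|$, in case $|p^* q|>|p^* s|$; and (iii)~re-balancing. A deletion will be handled symmetrically.

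For task~(i) I would first query the global tree~$\tree_P$ to obtain $\prd(q)$, $\suc(q)$ and the $O(1)$ numerical thresholds occurring in Lemma~\ref{le:cost-update}, all in $O(\log n)$ time. By that lemma each cost change $\Delta_j$ equals one of $O(1)$ values, and because $\R_{p^*}$ is sorted and every condition in the lemma is a threshold on $\lambda_j$, the set of ranges receiving a common change $\Delta$ is a contiguous sub-interval $I$ of $\R_{p^*}$; thus $O(1)$ bulk updates suffice. A single bulk update on $(I,\Delta)$ proceeds as sketched in the main text: compute the canonical collection $\C(I)$ of $O(\log n)$ nodes --- so that each leaf in $I$ has exactly one ancestor in $\C(I)$ and no leaf outside $I$ has one --- add $\Delta$ to $\Delta(v)$ and to $\mc(v)$ for every $v\in\C(I)$, and then, bottom-up, recompute $\mc(v)$ (as $\Delta(v)$ plus the smaller $\mc$-value of $v$'s two children) and $\br(v)$ at the $O(\log n)$ ancestors of the nodes of $\C(I)$. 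I would then verify Invariant~(\ref{inv}): by the covering property of $\C(I)$ the accumulated correction along the root-to-leaf path changes by exactly $\Delta$ for leaves in $I$ and by $0$ otherwise; shifting $\mc(v)$ by $\Delta$ at $v\in\C(I)$ is legitimate precisely because \emph{every} leaf in the subtree of $v$ lies in $I$ and hence shifts uniformly, and the bottom-up pass then restores $\mc$ and $\br$ at the nodes above $\C(I)$.

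For task~(ii), after $q$ has been inserted into $\tree_P$ I would compute $\cost(\Gamma(P\cup\{q\},p^*,|p^* q|))$ in $O(\log n)$ time via $\tree_P$ (a predecessor/successor query plus the costs of $O(1)$ chains), insert a leaf $w$ with $f(w)$ equal to this value into its sorted position in $\tree_{p^*}$, and update $\mc(v)$ and $\br(v)$ at the $O(\log n)$ ancestors of $w$; performing (ii) after (i) ensures that the bulk updates never touched $w$. For task~(iii), a single rotation affects only $O(1)$ nodes, but the lazy correction values must be reconciled: before rotating at a node $v$ I would push $\Delta(v)$ into both children (adding it to their $\Delta$ and $\mc$ fields) and set $\Delta(v):=0$, then rotate, then recompute $\mc$ and $\br$ bottom-up at the $O(1)$ affected nodes --- the standard recipe for augmenting a height-balanced search tree. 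Since an update causes $O(\log n)$ rotations, this contributes $O(\log n)$. Deletions mirror the above: negate the $\Delta_j$ of Lemma~\ref{le:cost-update}, remove the leaf for $|p^* q|$ (well-defined under the assumption that no two points of $P$ coincide), and re-balance as just described.

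The step that needs the most care --- and which I would argue in detail --- is the interplay between the lazily-propagated correction values $\Delta(v)$ and the structural operations: one must check that pushing $\Delta(v)$ down before a rotation and re-aggregating $\mc$/$\br$ afterward preserves Invariant~(\ref{inv}), and, for bulk updates, that the direct shift of $\mc(v)$ at a node $v\in\C(I)$ is justified exactly by the canonical-decomposition property. Everything else amounts to bookkeeping over $O(\log n)$ nodes per operation, so the total time is $O(\log n)$ per insertion and deletion.
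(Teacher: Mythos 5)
Your proposal is correct and follows essentially the same route as the paper: $O(1)$ bulk updates derived from Lemma~\ref{le:cost-update} via a canonical node decomposition $\C(I)$, insertion of a new leaf for the candidate range $|p^*q|$ with its cost computed from $\tree_P$, and standard re-balancing. The only difference is that you spell out the lazy-propagation details during rotations and the verification of Invariant~(\ref{inv}), which the paper dismisses as ``rather standard'' and omits.
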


\mypara{Putting it all together.}
To summarize, upon the insertion of a new point~$q$ into~$P$, we first update each 
tree~$\tree_{p^*}$, as described above. This takes  in $O(\log n)$ time per tree,
so $O(n\log n)$ time in total. Then we update the global tree~$\tree_P$ in $O(\log n)$ time. 
Finally, we create a tree $\tree_{q}$ with $q$ being the root-crossing point.
This can be done in $O(n\log n)$ time, by inserting the points
from $P$ one by one as described above. Thus inserting a new point~$q$
can be done in $O(n\log n)$ time in total, after which we know the cost of the
optimal solution for~$P\cup \{q\}$. Deletions can be handled in a similar manner,
so we obtain the following theorem.
\broadcastupdate*
\section{Missing proofs for Section~\ref{sec:sas}}
\label{appendix:omitted-details}

\subsection{Proof of Lemma~\ref{le:sas-stability}}
\label{app:proof-of-sas-stability}
\sasstability*
\begin{proof}
The range of a point~$p \in P\cup \{q\}$ can increase due to the insertion of~$q$ only if 
\begin{description}
\item[(i)] $p=q$ and $\rnew(q)>0$, or
\item[(ii)] $p$ is a zero-range point in $\rho_k(P)$, or 
\item[(iii)] $p$ is the root-crossing point in $\rho_k(P\cup\{q\})$, or
\item[(iv)] the standard range of $p$ increases due to the insertion of~$q$, or
\item[(v)] $p=s$ and, out of the two standard ranges it has, $s$ gets assigned a larger
    one in $\rk(P\cup\{q\})$ than in $\rk(P)$.
\end{description}
Recall that we defined $\rk$ such that the number of zero-range points is at most~$k$.
Furthermore, at most one standard range can increase due to the insertion of~$q$,
namely, the standard range of a point that is extreme in $P$ but not in $P\cup\{q\}$.
When this happens, however, $q$ is extreme in $P\cup\{q\}$ and so $\rnew(q)=0$; this implies that cases (i) and (iv) cannot both happen.
Hence, $\left| \{ p\in P\cup\{q\} : \rnew(p) > \rold(p) \} \right| \leq k+3$.
 
The range of a point~$p$ can decrease only if
\begin{description}
\item[(i)] $p$ is a zero-range point in $\rho_k(P\cup\{q\})$, or 
\item[(ii)] $p$ is the root-crossing point in $\rho_k(P)$, or
\item[(iii)] the standard range of $p$ decreases due to the insertion of~$q$, or
\item[(iv)] $p=s$ and, out of the two standard ranges it has, $p$ gets a assigned a smaller
one in $\rk(P\cup\{q\})$ than in $\rk(P)$.
\end{description}
Since the only point whose standard range decreases is the predecessor of~$q$
in $P$, we conclude that 
$\left| \{ p\in P\cup\{q\} : \rnew(p) < \rold(p) \} \right| \leq k+3$.
\end{proof}

\subsection{Proof of Theorem~\ref{thm:sas-alg}}
\label{app:sas-alg}
\sasalg*
\begin{proof}
Our SAS maintains the canonical range assignment~$\rk$ for 
$k=(2^{\alpha}/\eps)^{1/(\alpha-1)}=O((1/\eps)^{1/(\alpha-1)})$. 
We then have $\cost_{\alpha}(\rk(P))\leq (1+\eps) \cdot \ropt(P)$
by Lemma~\ref{le:sas-approx}. Furthermore, the number of modified ranges when~$P$
is updated is~$2k+6$ by Lemma~\ref{le:sas-stability}. To determine the assignment~$\rk$, 
we need to know an optimal assignment~$\ropt$ with the structure from Theorem~\ref{thm:structure-1D}.
Such an optimal assignment can be maintained in $O(n\log n)$ time per update, by
Theorem~\ref{thm:broadcast-update}. Once we have the new optimal assignment, the
new optimal assignment can be determined in~$O(n)$ time.
\end{proof}

\section{1-Stable, 2-Stable, and 3-Stable Algorithms in $\Reals^1$}
\label{app:123stable}
In Section~\ref{sec:sas} we presented a $(2k+6)$-stable algorithm whose approximation ratio
is~$1+2^{\alpha}/k^{\alpha-1}$, which provided us with a SAS. For small~$k$ the algorithm
is not very good: the most stable algorithm we can get is 6-stable,
by setting~$k=0$. A more careful analysis shows that the approximation ratio
of this 6-stable algorithm is~3, for $\alpha=2$. In this section we study more stable algorithms.
We first present a 1-stable $O(1)$-approximation algorithm; obviously,
this is the best we can do in terms of stability. This algorithm can only
handle insertions. (Deletions would be 3-stable.)
We then present a straightforward 2-stable 2-approximation
algorithm. Finally, we show that it is possible
to get an approximation ratio strictly below~2 using a 3-stable algorithm; see Table~\ref{ta:stableresults-app} for an overview.
%
\begin{table}[h]
\begin{center}
\begin{tabular}{c||c|l} 
$\ell$-stable algorithm & Approximation Ratio  & Remarks  \\ 
\hline
$\ell=1$  & $6+2\sqrt{5}\approx10.47$   & $\alpha=2$, insertions only \\
$\ell=2$  & 2                           & for any $\alpha > 1$  \\
$\ell=3$  & 1.97                        & $\alpha = 2$  \\ 
\end{tabular}
\caption{An overview of the approximation ratio of 1-stable, 2-stable and 3-stable algorithms}
\label{ta:stableresults-app}
\end{center}
\end{table}

\subsection{A 1-stable insertion-only algorithm}
\label{subsec:1-stable-alg}
We first describe our algorithm for the \emph{one-sided} version of the problem, where all
points in $P$ lie to the same side of the source. 
Let $P=\{s,p_1,\ldots,p_n\}$, where the points are numbered in order of increasing distance to
the source. It will be convenient to define $p_0:= s$. Our algorithm maintains a range
assignment~$\range$ that satisfies the following invariant.
\begin{itemize}
\item There is a path $\pi^*$ in $\cg(P)$ from $p_0$ to $p_n$ such that for each edge $(p_i,p_j)$
      on the path we have $\range(p_i) = |p_i p_j|$ and $i<j\leq i+4$. 
      For an edge $(p_i,p_j)$ on~$\pi^*$, we call the subsequence $p_i,\ldots,p_j$ a \emph{block},
      and we denote it by $B[i,j]$. 
\item A point $p_t$ in a block $B[i,j]$ is a zero-range point, unless $B[i,j]$ consists of
      five points (including $p_i$ and $p_j$) of which $p_t$ is the middle one.
      In the latter case $\range(p_t) = |p_t p_j|$.
\end{itemize}
Algorithm \textsc{1-Stable-Insert}, presented below, shows how to insert a point $q$ into $P$. 
Figure~\ref{fig:life-cycle} shows the life cycle of a block in the solution maintained
by the algorithm. 
\begin{figure}[b]
\begin{center}
\includegraphics{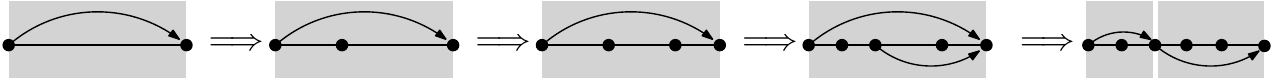}
\end{center}
\caption{Life cycle of a block. At the last step, the block is split into two smaller blocks,
which start in the middle of their life cycle: one block consists of three points, the other of four points.}
\label{fig:life-cycle}
\end{figure}
%
\begin{algorithm}[H] 
\caption{\textsc{1-Stable-Insert}$(P,q)$}
\begin{algorithmic}[1]
\State $\rhd$ By default $\range(q)=0$,  so we only set $\range(q)$ when it receives a non-zero range.
\If{$q$ is extreme}
\State Set $\range(\prd(q)) := |\prd(q) q|$, thus creating a new block.
\Else
\State Let $B[i,j]$ be the block containing~$q$, after the insertion of~$q$.
      \If{$B[i,j]$ consists of at most four points}
      \State Do nothing.
      \ElsIf{$B[i,j]$ consists of five points} 
      \State Set $\range(p_{\mathrm{mid}}) := |p_{\mathrm{mid}} p_j|$,
              where $p_{\mathrm{mid}}$ is the middle point from $B[i,j]$.
      \Else \hspace*{20mm} $\rhd$ $B[i,j]$ consists of six points 
      \State Let $p_{\mathrm{mid}}\not\in \{p_i,p_j\}$ be the point in $B[i,j]$ with non-zero range.
      \State Split the block $B[i,j]$ by decreasing the range of $p_i$ to $|p_i p_{\mathrm{mid}}|$.
      \EndIf
\EndIf
\end{algorithmic}
\end{algorithm}

It is readily verified that \textsc{1-Stable-Insert} maintains the invariant---hence, the
solution remains valid---and that it is 1-stable. We now analyze its approximation ratio.
\begin{lemma} \label{lem:1-stable-ratio}
Algorithm \textsc{1-Stable-Insert} maintains a $c'_{\alpha}$-approximation of an optimal solution
for the one-sided range-assignment problem in~$\Reals^1$, where the approximation ratio
$c'_{\alpha}$ depends on the distance-power gradient~$\alpha$. For $\alpha=2$ the
approximation ratio is $c'_2 = 3+\sqrt{5}$.
\end{lemma}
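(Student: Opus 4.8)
\textbf{Proof proposal for Lemma~\ref{lem:1-stable-ratio}.}
The plan is to bound the cost of the solution maintained by \textsc{1-Stable-Insert} block by block, and compare it against a lower bound on the optimal cost for the one-sided instance. First I would recall that, by Theorem~\ref{thm:structure-1D}, an optimal solution for the one-sided problem induces a chain from $s=p_0$ to the extreme point~$p_n$, so $\ropt$ assigns to each $p_i$ (with $i<n$) exactly its standard range~$|p_i p_{i+1}|$ and $\ropt(p_n)=0$; hence
\[
\cost_\alpha(\ropt(P)) = \sum_{i=0}^{n-1} |p_i p_{i+1}|^{\alpha}.
\]
On the algorithm's side, the solution is described entirely by the path $\pi^*$ of blocks: each block $B[i,j]$ contributes $|p_i p_j|^{\alpha}$ for its left endpoint, plus (only in the five-point case) $|p_{\mathrm{mid}} p_j|^{\alpha}$ for its middle point, and nothing else. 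So the whole argument reduces to a local comparison: for each block $B[i,j]$ appearing in the final solution, I will show that the algorithm's cost on $B[i,j]$ is at most $c'_\alpha$ times $\sum_{\ell=i}^{j-1}|p_\ell p_{\ell+1}|^{\alpha}$, the part of $\cost_\alpha(\ropt)$ attributable to the same stretch of points. Summing over the blocks, which partition the index range $\{0,\ldots,n-1\}$ for the edges, yields the claimed approximation ratio.

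The local bound is a finite case analysis over block sizes $2,3,4,5$ (the invariant guarantees $j\le i+4$, and the six-point state is only transient, immediately split in line~12). For a block $B[i,j]$ with $j-i\le 2$, the algorithm pays only $|p_i p_j|^\alpha \le (|p_ip_{i+1}|+\cdots+|p_{j-1}p_j|)^\alpha$, which by convexity (power-mean inequality) is at most $(j-i)^{\alpha-1}\sum_{\ell=i}^{j-1}|p_\ell p_{\ell+1}|^\alpha$, giving a factor at most $2^{\alpha-1}$ for two-edge blocks. The four- and five-point blocks are the tight cases: a four-point block $B[i,i+3]$ costs $|p_ip_{i+3}|^\alpha \le 3^{\alpha-1}\sum|p_\ell p_{\ell+1}|^\alpha$, and the five-point block $B[i,i+4]$ costs $|p_ip_{i+4}|^\alpha + |p_{i+2}p_{i+4}|^\alpha$, which one bounds against $\sum_{\ell=i}^{i+3}|p_\ell p_{\ell+1}|^\alpha$ by splitting the two terms over the four edges and applying convexity to each sub-sum. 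Taking the worst of these constants over block types gives $c'_\alpha$, and plugging in $\alpha=2$ one should verify the stated value $c'_2 = 3+\sqrt{5}$ arises as the maximum; for $\alpha=2$ the dominant term is the five-point block, where $|p_ip_{i+4}|^2 + |p_{i+2}p_{i+4}|^2$ against the four squared edge lengths, optimized over the edge-length configuration, yields $3+\sqrt5$.

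The two-sided case is handled by observing that the algorithm runs one independent copy of the one-sided structure on each side of~$s$, rooted at~$s$; $s$ itself receives the maximum of its two block-defining ranges rather than the sum, so its cost is dominated by the cost it would have in the worse of the two one-sided sub-instances. Since an optimal two-sided solution costs at least as much as an optimal one-sided solution on either side (the root-crossing structure of Theorem~\ref{thm:structure-1D} only adds cost), the same per-block bound goes through with the same constant. The main obstacle I anticipate is the exact optimization in the five-point block: one must carefully identify the worst-case ratio over all admissible placements of the four inner edge lengths subject to the invariant's constraint $j\le i+4$, and confirm that no pathological configuration beats $3+\sqrt5$ when $\alpha=2$ — this is where a short Lagrange-multiplier or boundary-analysis computation is unavoidable, but it is routine once the block decomposition above is in place.
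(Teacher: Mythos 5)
Your proof takes essentially the same route as the paper's: decompose the maintained solution into blocks along $\pi^*$, bound the ratio of the algorithm's cost to the chain cost (which is the unique one-sided optimum) separately per block, observe that the five-point block is the worst case, and for $\alpha=2$ optimize over the position $x$ of the middle point (with the other two interior points bisecting their sub-intervals to minimize the denominator), which yields $\max_x \frac{2(1+(1-x)^2)}{x^2+(1-x)^2} = 3+\sqrt{5}$ at $x=(3-\sqrt{5})/2$, exactly as in the paper. One caution on your final paragraph, which is outside the scope of this one-sided lemma: the claim that the two-sided case goes through \emph{with the same constant} is wrong, since the algorithm's total cost is the \emph{sum} of the two one-sided costs while $\opt$ is only bounded below by their \emph{maximum}; this is why Theorem~\ref{thm:1-stable} incurs the extra factor $2$, giving $2(3+\sqrt{5})$.
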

\begin{proof}
The unique optimal solution for the one-sided problem on the current point set $P=\{p_0,\ldots,p_n\}$
is the chain from $p_0$ to $p_n$, which has cost $\sum_{i=0}^{n-1} |p_i p_{i+1}|^{\alpha}$.
This implies that the approximation ratio of the current range assignment~$\range$
is bounded by the maximum, over all blocks $B[i,j]$ in the current assignment, of
the quantity
$
\sum_{t=i}^{j-1} \range(p_t)^{\alpha} \; \big/ \; \sum_{t=i}^{j-1} |p_t p_{t+1}|^{\alpha},
$
where the numerator gives the cost incurred by the algorithm on~$B[i,j]$
and the denominator gives the cost of the optimal solution on~$B[i,j]$.

To analyze this maximum, consider a block $B[i,j]$ and assume without loss
of generality that~$p_i=0$ and $p_j=1$. Clearly, the maximum approximation ratio that 
can be achieved is when $B[i,j]$ consists of five points; see the fourth block 
in Figure~\ref{fig:life-cycle}. Let $p_{(i+j)/2}$, the middle point in $B[i,j]$,
be located at position $x$, for some $0<x<1$. Then the cost of the algorithm incurred
on~$B[i,j]$ is $1 + (1-x)^{\alpha}$. The cost of the optimal solution on~$B[i,j]$
is minimized when the second point in the block is located at position~$x/2$ (this
is true because $\alpha>1$ and so for $0<y<x$ the function
$y^{\alpha} + (x-y)^{\alpha}$ is minimized at $y=x/2$) and,
similarly, when the fourth point is located at~$(x+1)/2$. Hence,
\[
\frac{\sum_{t=i}^{j-1} \range(p_t)^{\alpha}}{\sum_{t=i}^{j-1} |p_t p_{t+1}|^{\alpha}}
=
\frac{1 + (1-x)^{\alpha}}{2(x/2)^\alpha + 2((1-x)/2)^{\alpha}}
=
\frac{2^{\alpha-1} \cdot (1 + (1-x)^{\alpha})}{x^{\alpha} + (1-x)^{\alpha}}.
\]
Thus the approximation ratio is 
$
c'_{\alpha} = \max_{0\leq x \leq 1} \frac{2^{\alpha-1} \cdot (1 + (1-x)^{\alpha})}{x^{\alpha} + (1-x)^{\alpha}}.
$
For $\alpha=2$ this is maximized at $x=(3-\sqrt{5})/2$, giving an approximation
ratio~$c'_2 = 3+\sqrt{5}$.
\end{proof}
The approximation ratio of Algorithm \textsc{1-Stable-Insert} for the one-sided 
range assignment problem in $\Reals^1$ is actually tight for $\alpha =2$,
as the next observation shows.
\begin{observation}
For the one-sided range-assignment problem, the approximation ratio of $3+\sqrt{5}$ is tight for Algorithm~\textsc{1-Stable-Insert}.
\end{observation}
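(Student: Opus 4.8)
The plan is to exhibit a concrete sequence of insertions that drives Algorithm \textsc{1-Stable-Insert} into a configuration whose single block is the ``worst-case'' five-point block identified in the proof of Lemma~\ref{lem:1-stable-ratio}, with the interior points placed at exactly the positions that make the analysis there tight. Recall that the life cycle of a block (Figure~\ref{fig:life-cycle}) means that whenever the algorithm currently has a block $B[i,j]$ on five points, the middle point has a non-zero range reaching $p_j$, while the second and fourth points are zero-range points. So the whole approximation ratio on such a block equals $\frac{1+(1-x)^{\alpha}}{|p_{i}p_{i+1}|^{\alpha}+\cdots+|p_{j-1}p_{j}|^{\alpha}}$, where $x$ is the position of the middle point (after rescaling $p_i=0$, $p_j=1$).

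First I would fix $\alpha=2$ and let $x^\star = (3-\sqrt5)/2$ be the maximizer found in Lemma~\ref{lem:1-stable-ratio}, and let $P$ be the one-sided instance $\{p_0,\dots,p_4\}$ with $p_0=s=0$, $p_4=1$, middle point $p_2=x^\star$, and the two remaining points at the optimal split positions $p_1 = x^\star/2$ and $p_3 = (1+x^\star)/2$. Then I would describe an insertion order that realizes this configuration as an actual state of the algorithm: start with $p_0$ and $p_4$ only (a two-point block $B[0,4]$, created by the ``extreme'' branch), then insert $p_2$ (three-point block, nothing happens), then insert $p_1$ and $p_3$ in some order. After these insertions the block $B[0,4]$ has five points, so the algorithm executes the five-point branch and sets $\range(p_2) := |p_2 p_4| = 1-x^\star$, with $p_1,p_3$ being zero-range. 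This is exactly the configuration whose cost ratio against the optimal chain equals $c'_2 = 3+\sqrt5$ by the computation in Lemma~\ref{lem:1-stable-ratio}. Hence the approximation ratio of \textsc{1-Stable-Insert} on this instance is at least $3+\sqrt5$; combined with Lemma~\ref{lem:1-stable-ratio} this proves tightness.

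The one subtlety — and the main thing to check carefully — is that the five points must genuinely all land in a single block when inserted in the chosen order, i.e. that no earlier insertion triggered a split. Since a split only happens when a block reaches six points, and we insert only four points after the two endpoints, the block $B[0,4]$ never splits; and since inserting $p_1$ or $p_3$ while the block has three or four points falls into the ``do nothing'' branch, the only action taken is the final five-point update. I would also note the only mild nuisance: when inserting $p_3$ one must make sure it is not (re)classified as extreme — it is not, because $p_4$ is already present and $p_3 < p_4$ — so the correct (non-extreme) branch is taken. The rest is just the arithmetic already done in Lemma~\ref{lem:1-stable-ratio}, so no new calculation is needed: the ratio on this five-point block is $\frac{1+(1-x^\star)^2}{2(x^\star/2)^2 + 2((1-x^\star)/2)^2} = 3+\sqrt5$, which completes the observation.
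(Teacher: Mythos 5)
Your proposal is correct and takes essentially the same approach as the paper: the paper's proof uses exactly the same five-point instance (endpoints $0$ and $1$, middle point at $c=(3-\sqrt{5})/2$, the other two interior points at $c/2$ and $(1+c)/2$), inserts the extreme point first and the interior points afterwards so that only the five-point branch of the algorithm ever fires, and computes the same ratio $\frac{(5-\sqrt{5})/2}{(5-2\sqrt{5})/2}=3+\sqrt{5}$. The only (immaterial) difference is the order in which the interior points are inserted, which does not change the final configuration since the middle point is determined by position.
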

\begin{proof}
Let $c=(3-\sqrt{5})/2$, and consider the instance 
$P=\{s, p_1, p_2, p_3, p_4\}$ with $s=0,  p_1 = 1, p_2=\frac{c}{2}, p_3=c,p_4=\frac{(1+c)}{2}$. 
The insertion order of the points is $p_1, p_2, p_3, p_4$. Clearly, by setting 
$\rho(s)=\rho(p_2)=\frac{c}{2}$, and $ \rho(p_3) = \rho(p_4) = \frac{1-c}{2}$, 
an optimum solution with $\cost_2(\rho(P))= \frac{5-2\sqrt{5}}{2}$ is found.
Algorithm~\textsc{1-Stable-Insert} sets $\range(s):=1$ and $\range(p_3):=1-c$, and all other ranges to~0. 
The resulting cost equals $\frac{5-\sqrt{5}}{2}$, and the ratio follows.
\end{proof}

To handle the case with points to both sides
of~$s$, we proceed as follows. Let $P=L\cup \{s\}\cup R$, where $L$ and $R$ contain the
points to the left and to the right of $s$, respectively. We simply run the above algorithm
separately on $L\cup\{s\}$ and $\{s\} \cup R$. This way the points in~$R\cup L$ are assigned
one range, while~$s$ gets assigned two ranges; the actual range of~$s$ is the largest
of these two ranges. 
\begin{restatable}{theorem}{onestable}
\label{thm:1-stable}
There exists a 1-stable $c_{\alpha}$-approximation algorithm for the broadcast 
range-assignment problem in~$\Reals^1$, where the approximation ratio
$c_{\alpha}$ depends on the distance-power gradient~$\alpha$. For $\alpha=2$ the
approximation ratio is $c_2 = 2(3+\sqrt{5})\approx 10.47$.
\end{restatable}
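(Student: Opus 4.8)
The plan is to reduce the two-sided problem to two independent instances of the one-sided problem, and then invoke Lemma~\ref{lem:1-stable-ratio}. Concretely, partition the current point set as $P = L \cup \{s\} \cup R$, where $L$ contains the points left of $s$ and $R$ the points right of $s$. We run Algorithm~\textsc{1-Stable-Insert} on the instance $\{s\}\cup R$ (treating $s$ as $p_0$) and, independently, on $L\cup\{s\}$ (again with $s$ playing the role of $p_0$, after reflecting so that $L$ lies "to one side"). Each run maintains its own path invariant and its own block structure, and each assigns a range to every point it governs. The point $s$ receives two candidate ranges, one from each run; we set $\range(s)$ to be the larger of the two. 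All other points in $R\cup L$ receive a single range, from the unique run that governs them. Since an insertion of a new point $q$ falls entirely on one side of $s$ (or is $s$ itself, which cannot be inserted), only the corresponding one-sided run is triggered, and \textsc{1-Stable-Insert} modifies exactly one range there; the other run, and hence the rest of the solution, is untouched. Thus the combined algorithm is $1$-stable. The only subtlety: when the triggered run changes the candidate range of $s$, this could in principle change $\range(s)$ (if that side now produces the larger of the two candidates); but that is still a single modified range, so $1$-stability is preserved.

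For feasibility, note that the concatenation of the path $\pi^*$ maintained on the $R$-side and the path maintained on the $L$-side, joined at $s$, gives a broadcast tree rooted at $s$: every point on the $R$-side is reached from $s$ along clockwise-type edges, every point on the $L$-side is reached along the mirrored edges, and $\range(s)$ being the max of its two candidates guarantees $s$ can still reach its successor on whichever side required the larger range (it only needs to reach the closer successor on the other side, for which the smaller candidate already suffices — here we use that the two one-sided invariants only demand $\range(p_0)=|p_0 p_j|$ for the first block on each side, and the larger value dominates). So the maintained assignment is always a valid broadcast range assignment.

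For the approximation ratio, the key observation is that the static optimum for the two-sided instance decomposes. By Theorem~\ref{thm:structure-1D}, except for the single root-crossing point an optimal solution consists of a chain from (near) $s$ out to the left-extreme point and a chain out to the right-extreme point; dropping the root-crossing point's long range only decreases cost, so $\cost_\alpha(\ropt(P)) \geq \cost_\alpha(\mathrm{chain}_L) + \cost_\alpha(\mathrm{chain}_R)$ minus a bounded correction — more cleanly, $\cost_\alpha(\ropt(P))$ is at least the cost of the all-standard (chain) solution restricted to each side minus the cost saved at the root-crossing point, and in any case it is at least $\tfrac12\bigl(\cost_\alpha(\ropt_R) + \cost_\alpha(\ropt_L)\bigr)$ where $\ropt_R,\ropt_L$ denote optimal one-sided assignments. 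Meanwhile our algorithm's cost is $\cost_\alpha(\range(P)) \leq \cost_\alpha(\range_R) + \cost_\alpha(\range_L) + \range(s)^\alpha$; since $\range(s)^\alpha \leq \range_R(s)^\alpha + \range_L(s)^\alpha$ and these are already counted in the two one-sided costs up to a factor, we get $\cost_\alpha(\range(P)) \leq c'_\alpha\bigl(\cost_\alpha(\ropt_R) + \cost_\alpha(\ropt_L)\bigr) \leq 2c'_\alpha \cdot \cost_\alpha(\ropt(P))$, where $c'_\alpha$ is the one-sided ratio from Lemma~\ref{lem:1-stable-ratio}. For $\alpha = 2$ this yields $c_2 = 2(3+\sqrt5)$.

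The main obstacle I expect is making the charging argument in the last paragraph fully rigorous: one must be careful that the root-crossing point's contribution to $\ropt(P)$ is correctly accounted for on both sides (it is "double-counted" in the natural decomposition), and that the extra $\range(s)^\alpha$ term on the algorithm side is genuinely dominated. The factor-$2$ loss comes precisely from this: the two one-sided optima do not simply add up to the two-sided optimum because of the shared source and the root-crossing point. One should verify that no better constant than $2c'_\alpha$ is forced by the analysis, but since the theorem only claims $2(3+\sqrt5)$, the crude bound above suffices.
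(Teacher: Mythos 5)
Your proposal is correct and follows essentially the same route as the paper: run \textsc{1-Stable-Insert} independently on $L\cup\{s\}$ and $\{s\}\cup R$, give $s$ the larger of its two candidate ranges, and lose a factor $2$ over the one-sided ratio $c'_\alpha$ via the inequality $\opt(L\cup\{s\}\cup R)\geq\max\bigl(\opt(L\cup\{s\}),\opt(\{s\}\cup R)\bigr)$. The only difference is that the paper simply cites this last inequality from prior work~\cite{DBLP:conf/isaac/Berg0U20}, whereas you sketch a (somewhat informal) justification of the weaker averaged form $\opt(P)\geq\tfrac12(\opt_L+\opt_R)$, which suffices for the same constant $2c'_\alpha$.
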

\begin{proof}
Recall that our algorithm simply runs the one-sided algorithm
separately on $L\cup\{s\}$ and $\{s\} \cup R$,  where the actual range of~$s$
is defined to be the largest of the two ranges it receives.

To analyze the approximation ratio of this algorithm we use that for any~$\alpha>1$ we have
$\opt(L\cup\{s\}\cup R) \geq \max (\opt(L\cup\{s\}),\opt(\{s\}\cup R))$, where
$\opt(\cdot)$ denotes the cost of an optimal range assignment~\cite{DBLP:conf/isaac/Berg0U20}. 
Hence, the cost of the
range assignment~$\range$ that we maintain is
\[
\begin{array}{lll}
\costa(\range(L\cup\{s\}\cup R)) 
    & \leq & \costa(\range(L\cup\{s\})) + \costa(\range(\{s\}\cup R))  \\
    & \leq & c'_{\alpha} \cdot (\opt(L\cup\{s\}) + \opt(\{s\}\cup R)) \\
    & \leq & 2 c'_{\alpha} \cdot \max ( \opt(L\cup\{s\}) , \opt(\{s\}\cup R))
    \\
    & \leq & 2 c'_{\alpha} \cdot \opt(L\cup\{s\}\cup R). \\
\end{array}
\]
Lemma~\ref{lem:1-stable-ratio} thus implies the theorem.
\end{proof}

We note that a 1-stable algorithm \alg that handles deletions cannot have a bounded 
approximation ratio, as we show next for $\alpha=2$. Suppose for a contradiction that 
\alg has approximation ratio~$c$, where we assume for simplicity that $c$ is an integer.
Let $P := \{s,r_1,\ldots,r_{c+1}\}$ where $s=0$ and $r_i = i/(c+1)$.
Then $\opt=(c+1)\cdot (1/(c+1))^2=1/(c+1)$, so \alg cannot give the source a range of~1.
But if we then delete all non-zero points in $P\setminus\{s\}$, the algorithm is stuck:
the deletion of a non-zero point already causes a modification, so the algorithm
is not allowed to increase any range; hence, the solution is invalid
after all non-zero-range points from $P\setminus\{s\}$ have been deleted.
One may argue that it is unfair that the algorithm has to pay for the deletion of
a non-zero point. We defined it like this to keep the definition symmetric
for deletions and insertions, where the algorithm also has to pay for assigning
a non-zero range to a new point. It is then actually surprising that for insertions
it is possible to obtain a 1-stable algorithm with $O(1)$-approximation ratio.

\paragraph*{A lower bound for 1-stable algorithms}
The next theorem shows that any $1$-stable algorithm in~$\Reals^1$ has an approximation ratio greater than $2.61$ for $\alpha=2$. 
\begin{theorem} \label{thm:1-stable lower}
Any 1-stable algorithm for the dynamic broadcast range-assignment problem in $\Reals^1$ 
has an approximation ratio greater than or equal to $\frac{1}{2}\cdot(3+\sqrt{5})\approx 2.61$ 
for $\alpha=2$,
and any 1-stable algorithm has an approximation ratio greater than $2$ for $\alpha>2$. 
\end{theorem}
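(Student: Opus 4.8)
## Proof Plan for Theorem~\ref{thm:1-stable lower}

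The plan is to exhibit a concrete family of update sequences (actually just a short sequence, parameterized by a free parameter~$c$) that forces any $1$-stable algorithm~\alg into a bad state, and then optimize~$c$ to get the claimed constant. The key structural observation driving the construction is that a $1$-stable algorithm, upon inserting a new point~$q$, can change exactly one range; so if we insert a point that is \emph{extreme} (farther from~$s$ than anything currently present), the algorithm is essentially forced to either (a) extend the range of some existing point all the way to~$q$, or (b) give~$q$ itself a range --- which is useless since $q$ is extreme, so really (a), or leave the graph disconnected, which is not allowed. This pins down the one change. The idea is to alternate such forced extensions with the insertion of a point \emph{between} two already-placed points, so that the range the algorithm was forced to commit becomes wasteful relative to the new optimum, yet $1$-stability prevents~\alg from repairing more than one point.

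Concretely, first I would take the one-sided instance and start with $s=0$ and $p_1=1$, forcing $\rho(s)=1$ (any approximation-bounded algorithm must do this: the only way to reach $p_1$ is from~$s$, and $\rho(s)\geq 1$). Then I insert points at $x/2,\, x,\, (1+x)/2$ in that order --- exactly the tight instance from the Observation following Lemma~\ref{lem:1-stable-ratio} --- where $x=(3-\sqrt5)/2$ for $\alpha=2$. The point is that the optimal solution on the final five-point set has cost $\tfrac{5-2\sqrt5}{2}$ (two edges of length $x/2$ and two of length $(1-x)/2$), whereas a $1$-stable algorithm that committed $\rho(s)=1$ at the first step is now in trouble: each subsequent insertion is a non-extreme point, giving the algorithm only one free modification per step, and there is no way in three moves to dismantle the length-$1$ commitment of~$s$ \emph{and} build up the chain of small ranges. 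I would argue by case analysis on what~\alg does at each of the three insertion steps (increase $\rho(s)$ is pointless; set a single small range; etc.), showing that after all insertions the cost is at least $\tfrac{5-\sqrt5}{2}$, so the ratio is at least $\tfrac{5-\sqrt5}{2}\big/\tfrac{5-2\sqrt5}{2} = 3+\sqrt5$ on the one-sided problem, hence $\tfrac12(3+\sqrt5)$ once we note (as in the proof of Theorem~\ref{thm:1-stable}) that the two-sided optimum can be as large as the one-sided one --- actually here we want the reverse direction, so I would instead keep everything one-sided and just observe the final ratio directly, getting $\tfrac12(3+\sqrt5)$ by a slightly more careful bookkeeping of which of the two ``sides'' of~$s$ the adversary plays on, or simply present the clean lower bound of $\tfrac12(3+\sqrt5)$ via a symmetric two-sided version of the gadget.

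For the case $\alpha>2$, the plan is simpler: reuse the scaling idea from Theorem~\ref{thm:sas-lb}'s Case~II. Start with $s=0$ and a single far point forcing $\rho(s)=1$; then insert many points densely near the origin (at positions $i/N$), so the new optimum has cost $\Theta(N^{1-\alpha})\to 0$, while a $1$-stable algorithm can bring down at most one range per insertion and therefore cannot reduce $\rho(s)$ below~$1$ until it has handled enough insertions --- but by then the adversary has already observed an unbounded ratio at an intermediate step. Actually for a \emph{lower bound of a specific constant $>2$} rather than ``unbounded,'' I would instead run the five-point gadget above with general~$\alpha$: the algorithm's cost on the gadget is $1+(1-x)^\alpha$ and the optimum is $2(x/2)^\alpha+2((1-x)/2)^\alpha = 2^{1-\alpha}(x^\alpha+(1-x)^\alpha)$, so the ratio is $\tfrac{2^{\alpha-1}(1+(1-x)^\alpha)}{x^\alpha+(1-x)^\alpha}$ for whatever~$x$ the adversary picks; taking $x\to 0$ gives ratio $\to 2^{\alpha-1}>2$ for $\alpha>2$. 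The one genuine obstacle --- and the step I would spend the most care on --- is the case analysis showing that a $1$-stable algorithm really is forced into the claimed state on the gadget: one must check that no clever reassignment (e.g.\ declining to extend~$\rho(s)$ initially, or shuffling which point carries the ``long'' range as the block fills up) lets~\alg do better, i.e.\ one must verify that $1$-stability genuinely obstructs the repair, across all of the (finitely many) orders in which the algorithm could choose its single modification at each step.
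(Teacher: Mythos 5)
Your overall strategy (force $\range(s)\geq 1$ with a far extreme point, then insert non-extreme points so that this commitment becomes wasteful while 1-stability blocks the repair) is the same idea as the paper's, but the quantitative heart of your argument is wrong. You claim that on the five-point gadget $\{0,\,x/2,\,x,\,(1+x)/2,\,1\}$ the algorithm's final cost is at least $1+(1-x)^2=\tfrac{5-\sqrt5}{2}$, giving ratio $3+\sqrt5$. This is false: over the three non-extreme insertions the algorithm gets three modifications, which is enough to both install a cheaper cover of the point at $1$ and dismantle $\range(s)$. Concretely, upon inserting $x/2$ it can set $\range(x/2)=1-x/2$, and upon inserting $x$ it can lower $\range(s)$ to $x/2$; this is feasible at every step, never exceeds ratio $\tfrac12(3+\sqrt5)$ at any intermediate instance, and ends with cost $(x/2)^2+(1-x/2)^2=\tfrac{5-\sqrt5}{4}$, i.e.\ exactly half of what you claim and ratio exactly $\tfrac12(3+\sqrt5)$, not $3+\sqrt5$. (Even the trivial strategy of doing nothing ends at cost $1$, already below your claimed bound.) Relatedly, your attempt to recover the factor $\tfrac12$ via the one-sided/two-sided relation of Theorem~\ref{thm:1-stable} is a red herring: that factor goes the wrong way for a lower bound, and the true constant $\tfrac12(3+\sqrt5)$ has nothing to do with sidedness. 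The same defect infects your $\alpha>2$ argument, which also rests on the cost being $1+(1-x)^\alpha$.

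The missing idea is that the adversary must be \emph{adaptive} rather than playing a fixed sequence and arguing about the final state. The paper's proof uses only four points $\{0,x_1,x_2,1\}$ inserted as $1$, then $x_1$, then (possibly) $x_2$, and branches on a single clean dichotomy after $x_1$ arrives: either the algorithm spends its one modification giving $r_1$ a range of at least $1-x_1$ (then stop --- cost is at least $1+(1-x_1)^\alpha$ against optimum $x_1^\alpha+(1-x_1)^\alpha$), or it does not (then insert $x_2=(1+x_1)/2$; since transferring the cover of $p=1$ away from $s$ now requires two modifications, $\range(s)\geq 1$ persists and the cost is at least $1$ against optimum $x_1^\alpha+2((1-x_1)/2)^\alpha$). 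Optimizing $x_1$ over the minimum of the two cases gives exactly $\tfrac12(3+\sqrt5)$ for $\alpha=2$, and $x_1=\tfrac12$ gives a ratio strictly above $2$ for $\alpha>2$. Your gadget can probably be salvaged to yield the same constant, but only by carrying out precisely this kind of branch-on-the-algorithm's-response analysis, which is the step your proposal defers to an unspecified ``case analysis'' and whose claimed outcome is, as stated, incorrect.
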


\begin{proof}
Let \alg be a $1$-stable algorithm, and let $\ralg$ be the range assignment 
it maintains. 

Consider the point set $P := \{s,r_1,r_2,p\}$, where $s=0$, and $r_i = x_i$ for $i=1,2$, 
and $p=1$. Assume $0<x_1<1$ and let $x_2=x_1+\frac{(1-x_1)}{2}$. Also assume after the source, 
the point $p$ arrives first, then the point $r_1$ and finally $r_2$ arrives. 
Let $P' := \{s,r_1,p\}$. Trivially, after the arrival of the point $p$, we must
have $\ralg\geq 1$ in order to have a feasible solution.
After the arrival of $r_1$, \alg is forced to keep $\ralg(s)\geq 1$ since \alg is 1-stable.

We consider two cases. 
\\[2mm]
\emph{Case I: After the arrival of $r_1$, \alg gives a range of at least $1-x_1$ to $r_1$} \\ 
In this case \alg cannot decrease any range. So,
      \[
      \cost_{\alpha}(\rho_{alg}(P')) \geq 1+(1-x_1)^\alpha
      \]
      An optimal solution on $P'$ has cost~$x_1^\alpha+(1-x_1)^\alpha$, 
      and so the approximation ratio of \alg in Case~~I is at least $\frac{ 1+(1-x_1)^\alpha}{x_1^\alpha+(1-x_1)^\alpha}$.
      \\[2mm]
\emph{Case II:  After the arrival of $r_1$, \alg gives a range less than $1-x_1$ to $r_1$ } \\
Now the point $r_2$ arrives. Since \alg is 1-stable it cannot decrease the range of the source, So,
\[
      \cost_{\alpha}(\rho_{alg}(P)) \geq 1
\]
An optimal solution on $P$ has cost~$x_1^\alpha+2\cdot (\frac{1-x_1}{2})^\alpha$, 
and so the approximation ratio of \alg in Case~~II is at least $\frac{ 1}{x_1^\alpha+2\cdot (\frac{1-x_1}{2})^\alpha}$.
\medskip

We conclude that the approximation ratio of any $1$-stable algorithm is greater than or equal to at least
\[
\min(\frac{ 1+(1-x_1)^\alpha}{x_1^\alpha+(1-x_1)^\alpha},\frac{ 1}{x_1^\alpha+2\cdot (\frac{1-x_1}{2})^\alpha}).
\]
For $\alpha=2$ we see that by substituting $x_1=\frac{3}{2}-\frac{\sqrt{5}}{2}$ we get the
approximation ratio is at least $\frac{1}{2}\cdot(3+\sqrt{5})$. Moreover, for any $\alpha>2$
we get an approximation ratio greater then~2, for instance 
by substituting $x_1=\frac{1}{2}$.
\end{proof}

\subsection{A 2-stable algorithm}
\label{subsec:2-stable}
Obtaining a 2-stable 2-approximation algorithm is straightforward: simply give every point in~$P$ 
its standard range, where the source~$s$ receives the largest of its (at most) two standard ranges. 
This induces a broadcast tree consisting of (at most)
two chains: a chain from~$s$ to the rightmost point and a chain from~$s$ to the leftmost point.
This algorithm is 2-stable: if we insert an extreme point then we increase the range of
at most one point, and if we insert a non-extreme point $q$ we increase the range of $q$
and decrease the range of its predecessor. (Deletions are symmetrical.) 
We call this algorithm the \emph{standard-range algorithm}.
It is easy show
that the standard-range algorithm gives a 2-approximation~\cite{DBLP:conf/isaac/Berg0U20}.
\begin{observation} \label{obs:MST}
The standard-range algorithm is a 2-stable 2-approximation algorithm for the dynamic broadcast range-assignment
problem in $\Reals^1$, for any power-distance gradient~$\alpha>1$. Moreover, the
approximation ratio of $2$ is tight for this algorithm.
\end{observation}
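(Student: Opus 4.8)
\textbf{Proof proposal for Observation~\ref{obs:MST}.}
The plan is to establish three things about the standard-range algorithm: that the solution it maintains is always feasible, that it is $2$-stable, and that its approximation ratio is exactly $2$. Feasibility is immediate from Theorem~\ref{thm:structure-1D}: giving every non-extreme point its standard range creates a chain from $s$ rightward to $r_{|R|}$ and a chain from $s$ leftward to $\ell_{|L|}$ (with $s$ taking the larger of $|s r_1|^\alpha$-inducing and $|s\ell_1|^\alpha$-inducing ranges), so $\cg(P)$ contains a broadcast tree rooted at $s$. For $2$-stability I would argue by cases on the update, as sketched in the text. Inserting an extreme point $q$ on, say, the right side only changes $\range(\prd(q))$, formerly $r_{|R|}$, from $0$ to $|\prd(q)\,q|$: one modified range. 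Inserting a non-extreme point $q$ changes exactly $\range(q)$ (from $0$ to $|q\,\suc(q)|$) and $\range(\prd(q))$ (from $|\prd(q)\,\suc(q)|$ to $|\prd(q)\,q|$): two modified ranges; note this also covers the case $q$ between $s$ and its old nearest neighbor, where $\prd(q)=s$, since $s$ simply gets a new standard range on that side. Deletions are the mirror images. Hence the algorithm is $(1,1)$-stable for extreme updates and $(1,1)$-stable for non-extreme updates, so $2$-stable overall.

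For the approximation ratio, I would first give the upper bound. Using the fact from \cite{DBLP:conf/isaac/Berg0U20} that $\opt(P) \geq \max\big(\opt(L\cup\{s\}),\,\opt(\{s\}\cup R)\big)$ together with the observation that on a one-sided instance the standard (chain) solution is exactly optimal, the cost of the standard-range assignment $\rst$ satisfies
\[
\costa(\rst(P)) \leq \costa(\rst(L\cup\{s\})) + \costa(\rst(\{s\}\cup R)) = \opt(L\cup\{s\}) + \opt(\{s\}\cup R) \leq 2\,\opt(P),
\]
where the first inequality holds because the range assigned to $s$ in $\rst(P)$ is the larger of its two one-sided standard ranges, so summing the two one-sided costs only over-counts $s$'s contribution. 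This gives the factor $2$.

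For tightness I would exhibit a two-sided instance where the ratio approaches (or equals) $2$. The natural candidate is a symmetric configuration: put one point at distance $1$ to the right of $s$ and, say, $m$ equally spaced points filling the interval of length $1$ to the left of $s$, so that $\opt(\{s\}\cup R) = 1$ while $\opt(L\cup\{s\}) = m\cdot(1/m)^\alpha = m^{1-\alpha}$, which $\to 0$; then a single extra point far to the right at distance $1$ forces the right chain to cost $1$ and $\opt(P)$ is essentially $1$, while the standard-range algorithm pays $\range(s)^\alpha = 1$ for the source plus another $\approx 1$ for the right extreme point, giving cost $\approx 2$. More simply, for $\alpha=2$ one can take $P=\{\,-1,\,s=0,\,1\,\}$ with the two symmetric points, where $\rst$ gives $s$ range $1$ and the right point is extreme, so $\costa(\rst)=1+1+1 = \dots$; I would calibrate the exact small instance so the ratio is exactly $2$ in the limit. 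I do not expect any real obstacle here — every step is a short, direct verification — so the only care needed is in the bookkeeping of which points count as modified in each insertion/deletion case (in particular the boundary cases where $\prd(q)=s$ or where $q$ becomes the new extreme point) and in choosing the tightness instance so that the ratio genuinely reaches $2$ rather than merely approaching it.
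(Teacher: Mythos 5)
Your feasibility, stability, and upper-bound arguments are fine and match the paper's route: the paper also obtains the factor~$2$ from the inequality $\opt(L\cup\{s\}\cup R)\geq\max\bigl(\opt(L\cup\{s\}),\opt(\{s\}\cup R)\bigr)$ of~\cite{DBLP:conf/isaac/Berg0U20} combined with the fact that the one-sided standard chain is optimal.

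The tightness part, however, has a genuine gap: neither of your candidate instances attains ratio~$2$, and the ``calibration'' you defer is exactly the missing idea. In your first instance the nearest right neighbour of $s$ is at distance~$1$, so the source's standard range is already $1$; the algorithm then pays $1+ (m-1)\cdot(1/m)^{\alpha}\to 1$ while $\opt=1$, so the ratio tends to~$1$, not~$2$. Your ``simpler'' instance $\{-1,0,1\}$ is even worse off: both outer points are extreme, the standard-range solution gives only $s$ a nonzero range, and it coincides with the optimum. The point you are missing is that to force a ratio of~$2$ you must \emph{collapse the source's standard range} while still forcing two long chains. The paper does this with $P=\{s=0,\;\eps,\;-\eps,\;1,\;-1\}$: the source's two standard ranges are both $\eps$, so $\range(s)=\eps$, and each of the points at $\pm\eps$ must carry a range of $1-\eps$ to reach the extreme points, giving $\costa(\rst(P))=\eps^{\alpha}+2(1-\eps)^{\alpha}\to 2$, whereas the optimum simply sets $\range(s)=1$ for cost~$1$. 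Without the two points hugging the source on \emph{both} sides, the source's range in the standard solution stays large enough to do the work of the optimum and the doubling never materializes.
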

\begin{proof}
The fact that the approximation ratio is at most~2 was observed in \cite{DBLP:conf/isaac/Berg0U20}.
For completeness, we give an instance showing this bound is tight.

Define $P := P(\eps) \cup \{s\}$, where $s=0$ is the source, and $P(\eps) := \{p_1,p_2,p_3,p_4\}$,
where $p_1=\eps$, $p_2=-\eps$, $p_3=1$, and $p_4=-1$
for some small $\eps > 0$. The insertion order of the points in~$P(\eps)$ is $p_1,p_2,p_3,p_4$.
Clearly, by setting $\rho(s)=1$ and $\rho(p_i)=0$ for $i=1,\ldots,4$, we obtain an optimal solution 
with $\cost_{\alpha}(\rho(P))=1$. However, the standard-range algorithm 
will set $\rho(s) = \eps$, $\rho(p_1)=\rho(p_2)=1-\eps$, 
and $\rho(p_3)=\rho(p_4)=0$, leading to a solution with cost $f(\eps) := 2(1-\eps)^{\alpha}+\eps^{\alpha}$. 
Since $\lim_{\eps\rightarrow 0} f(\eps) = 2$,
this proves we have tightness for any $\alpha > 1$.
\end{proof}

\subsection{A 3-stable algorithm for $\alpha=2$ with approximation ratio less than~2}
\label{subsec:3-stable-lb}
Given the simplicity of our 2-stable 2-approximation algorithm, it is
surprisingly difficult to obtain an approximation ratio strictly smaller than~2.
In fact, we have not been able to do this with a 2-stable algorithm. 
Below we show this is possible with a 3-stable algorithm,
at least for the case $\alpha=2$, which we assume from now on.

Recall that for any set~$P$ with points on both sides of the source point~$s$, 
there is an optimal range assignment inducing
a broadcast tree with a single root-crossing point; see Figure~\ref{fig:optimal-structure}.
Unfortunately the root-crossing point may change when $P$ is updated.
This may cause many changes if we maintain a solution with a good 
approximation ratio and the same root-crossing point as the optimal solution. 
We therefore restrict ourselves to \emph{source-based range assignments},
where~$s$ is the root-crossing point. 
The main question is then how large the range of~$s$ should be, 
and which points within range of~$s$ should be zero-range points.
\medskip

We now define our source-based range assignment, which we denote by $\rsb$, more precisely.
It will be uniquely defined by the set $P$; it does not depend on the order in which points 
have been inserted or deleted.
Let $\delta$ be a parameter with $1/2 < \delta < 1$; later we will choose $\delta$ such
that the approximation ratio of our algorithm is optimized.
We call a point $p\in P\setminus\{s\}$ \emph{expensive}
if $\suc(p)\neq\nil$ and $|p\,\suc(p)| > \delta\cdot |s\,\suc(p)|$, and
we call it \emph{cheap} otherwise. 
The source~$s$ is defined to be always expensive. (This is consistent in the sense that for $p=s$
the condition $|p\,\suc(p)| > \delta\cdot |s\,\suc(p)|$ holds for both successors,
since $\delta<1$.)  We denote the set of all expensive points in $P$ by $\Pexp$
and the set of all cheap points by $\Pcheap$.
Define
$
\dmax := \max \{ |s\,\suc(p)| : p\in \Pexp \},
$
that is, $\dmax$ is the maximum distance from $s$ to the successor of any expensive point.
We say that a point $p\in\Pexp$ is \emph{crucial} if
$|s\,\suc(p)| = \dmax$. 
Typically there is a single crucial point, but 
there can also be two:  one on the left and one on the right of~$s$.
Our source-based range assignment~$\rsb$ is now defined as follows.
\begin{itemize}
\item $\rsb(s):= \dmax$,
\item $\rsb(p):= 0$ for all $p \in \Pexp \setminus \{s\}$, and
\item $\rsb(p):= \rst(p)$ for all $p \in \Pcheap$, where $\rst(p)$ denotes the standard range of a point.
\end{itemize}
The next lemma, whose proof is in the appendix, analyzes the stability of~$\rsb$.
The lemma implies that insertions are $(2,1)$-stable and deletions are $(1,2)$-stable.
\begin{restatable}{lemma}{threestable}
\label{le:(2,1)-stability}
Consider a point set $P$ and a point~$q\not\in P$. Let $\rold(p)$ be the range of a point~$p$
in $\rsb(P)$ and let $\rnew(p)$ be the range of~$p$ in $\rsb(P\cup \{q\})$. Then 
\[
\left| \{ p\in P\cup\{q\} : \rold(p) < \rnew(p) \} \right| \leq 2
\mbox{ and }
\left| \{ p\in P\cup\{q\} : \rold(p) > \rnew(p)) \right| \} \leq 1.
\]
\end{restatable}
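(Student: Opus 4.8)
The plan is to track exactly which points can change their range-category (expensive vs.\ cheap) or their actual range value when a single point $q$ is inserted into $P$, and to show that only a bounded number of such points exist. Recall that the assignment $\rsb$ is a function of $P$ alone: a point $p$ gets range $0$ if it is expensive (and not $s$), range $\rst(p)$ if it is cheap, and $s$ gets range $\dmax$. So the range of a point $p \in P\cup\{q\}$ differs between $\rsb(P)$ and $\rsb(P\cup\{q\})$ only if (a) $p = q$ itself, (b) $p$'s expensive/cheap status flips, (c) $p$'s standard range $\rst(p)$ changes, or (d) $p = s$ and $\dmax$ changes. First I would enumerate, for each of these mechanisms, precisely which points are affected and in which direction (increase or decrease) the range moves.

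The key observation is that inserting $q$ changes the successor structure at only one place: $q$ becomes the successor of $\prd(q)$ (so $\rst(\prd(q))$ changes from $|\prd(q)\,\suc(q)|$ to $|\prd(q)\,q|$, a decrease, assuming the standard range is taken toward the source side as in the $\Reals^1$ setup), and $q$ itself acquires a successor $\suc(q)$. No other point's successor, and hence no other point's standard range, changes. Thus mechanism (c) affects only $\prd(q)$, and it can only \emph{decrease} its range (since $|\prd(q)\,q| < |\prd(q)\,\suc(q)|$ when $q$ lies between $\prd(q)$ and $\suc(q)$). The expensive/cheap test $|p\,\suc(p)| > \delta|s\,\suc(p)|$ for a point $p \neq \prd(q)$, $p \neq q$ depends only on $p$ and $\suc(p)$, neither of which changed; so its status is unchanged. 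Hence mechanism (b) can only affect $\prd(q)$ (whose successor changed) and possibly $q$ (newly present) — that is at most two points, and one needs to check the direction of each possible flip. Finally, $\dmax$ is a maximum of $|s\,\suc(p)|$ over expensive $p$; inserting $q$ can only add new terms to (or, via a status flip of $\prd(q)$, remove the single term $|s\,\suc(q)| = |s\,q|$-side value from) this set, so $\dmax$ can only increase or stay the same when we are careful — this needs a short argument, but the upshot is that $\rsb(s)$ can only increase, contributing to the "increase" count, not the "decrease" count.

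Assembling the bookkeeping: the points whose range can \emph{decrease} are confined to $\{\,\prd(q)\,\}$ together with any point forced to decrease by a status flip to "expensive" — but the only candidate for such a flip is again $\prd(q)$ (its successor shrank, which can only make it \emph{more} likely to be expensive, i.e.\ decrease its range from $\rst$ to $0$) — so the decreasing set has size at most $1$. The points whose range can \emph{increase} are: $q$ itself (from $0$ to whatever $\rsb$ assigns it), $s$ (if $\dmax$ grows), and any point flipping to "cheap" — but no point other than $\prd(q)$ has a changed successor, and $\prd(q)$ cannot flip to cheap as just argued, so the only extra candidate is covered. Counting: $q$ and $s$ give at most $2$ increases. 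Hence $|\{p : \rold(p) < \rnew(p)\}| \le 2$ and $|\{p : \rold(p) > \rnew(p)\}| \le 1$, which is exactly the statement; the deletion case is symmetric with the roles of $\prd(q)$ and the new/removed point interchanged, giving $(1,2)$-stability.

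The main obstacle I anticipate is the $\dmax$ / crucial-point analysis: one has to rule out the scenario where inserting $q$ somehow \emph{removes} the term realizing $\dmax$ and thereby \emph{decreases} $\rsb(s)$, which would add $s$ to the "decrease" set and break the claimed $(2,1)$-split. The resolution is that the only expensive point whose successor changes is $\prd(q)$, whose successor changes from $\suc(q)$ to $q$; if $\prd(q)$ was the (unique, on its side) crucial point realizing $\dmax = |s\,\suc(q)|$, then after insertion $q$ is on the segment between $\prd(q)$ and $\suc(q)$, and one checks that $q$ (or $\prd(q)$) is still expensive with $|s\,\suc(q)|$ still attainable — more carefully, $\suc(q)$'s own predecessor is now $q$, and if $q$ is expensive then $|s\,\suc(\suc^{-1})|$-type terms keep $\dmax$ at least as large; if $q$ is cheap, a brief case check using $\delta > 1/2$ shows the relevant distance is still realized. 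Pinning down this last case-check cleanly — possibly by re-examining which point is crucial after the update rather than arguing $\dmax$ is monotone — is where the real work lies; everything else is the routine enumeration sketched above.
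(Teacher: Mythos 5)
Your overall plan---localize all changes to $q$, $\prd(q)$, and $s$, then do a case analysis on the mechanisms (new point, cheap/expensive flip, standard-range change, $\dmax$ change)---is exactly the paper's strategy. But two of your directional claims are wrong, and the second one is the crux of the lemma.

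First, the flip direction for $\prd(q)$ is backwards. Writing $a=|s\,p|$ and $b=|p\,\suc(p)|$, the expensive test $|p\,\suc(p)|>\delta|s\,\suc(p)|$ is $b>\frac{\delta}{1-\delta}a$, which is monotone increasing in $b$; when $q$ is inserted between $\prd(q)$ and its old successor, $b$ shrinks, so $\prd(q)$ becomes \emph{more} likely to be cheap, not more likely to be expensive. Hence the flip that can actually occur (for a non-extreme $\prd(q)$) is expensive$\to$cheap, which is an \emph{increase} from $0$ to $|\prd(q)\,q|$; the cheap$\to$expensive flip happens only in the degenerate case where $\prd(q)$ was extreme. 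So $\prd(q)$ is a third candidate for an increase, alongside $q$ and $s$, and your count of ``at most $2$ increases'' is not yet established. (It is saved by noting that $\dmax$ can increase only via the extreme-point case, in which $q$ is extreme and receives range $0$.)

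Second, and more seriously, your claim that $\rsb(s)$ can only increase is false, and your proposed fix (showing the old $\dmax$ is ``still attainable'') cannot work. Take $s=0$, $p_1=0.05$, $p_2=1$, $\delta=0.9$: then $p_1$ is expensive and $\dmax=|s\,p_2|=1$. Insert $q=0.6$: now $p_1$ is still expensive but contributes only $|s\,q|=0.6$, while $q$ is cheap (since $|q\,p_2|=0.4\leq 0.9$) and contributes nothing, so $\dmax$ drops to $0.6$ and $\rsb(s)$ strictly decreases. The correct resolution is not to rule this out but to observe that when $\dmax$ decreases, $\prd(q)$ must have been crucial, hence in $\Pexp$, hence had range $0$ before the insertion---so $\prd(q)$ cannot simultaneously be a decreasing point (neither via a standard-range decrease, which requires it to be cheap, nor via a cheap$\to$expensive flip). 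Thus the decrease is either at $s$ or at $\prd(q)$, never both, which is what gives the bound of $1$. Without this mutual-exclusion argument your bookkeeping does not close.
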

\begin{proof}
Due to the insertion of $q$, five types of range modifications can happen. 
\begin{enumerate}[(i)]
\item The point $q$ may get a non-zero range because it is cheap and non-extreme.
\item A point~$p$ may move from $\Pcheap$ to $\Pexp$ and become a zero-range point.
      This can only happen when $p=\prd(q)$ and $p$ was extreme before the insertion of~$q$.
      Hence, $q$ will be extreme after its insertion, so this
      cannot occur together with type~(i).
\item A point~$p\in \Pcheap$ may get a smaller range because its standard range decreases. 
      This can only happen when $p=\prd(q)$, and so it cannot happen together with type~(ii).
\item A point $p$ may move from $\Pexp$ to $\Pcheap$ and get a non-zero range.
      Again, this can only happen when $p=\prd(q)$, so this cannot happen together with 
      types~(ii) or~(iii). 
\item The source~$s$ may get a different range because $\dmax$ changes. 
      If $\dmax$ decreases, then $\prd(q)$ must have been crucial, and so this cannot occur
      together with types~(ii) or~(iii).
      If $\dmax$ increases, then a type~(ii) modification must have occurred,
      which means that types~(i),~(iii), and~(iv) did not occur.
\end{enumerate}
Overall, we have at most one range increase of type~(i), at most one range change
from any of the types (ii), (iii), (iv), and at most one change of type~(v). 
There can be at most one decrease among these three changes, because
if type~(v) is a decrease then types~(ii) and (iii) did not occur.
Finally, there can be at most two increases, because if type~(v) is an increase
then types~(i),~(iii), and~(iv) did not occur.
\end{proof}
From now on we assume without loss of generality that the source~$s$ is located at $x=0$.
We will need the following lemma before we can proceed to prove the performance guarantee.
\begin{lemma}\label{Le_1}
Let $I\subset \Reals^1$ be an interval of length $\Delta_1$ at distance $\Delta_2$ from the source,
that is, $I=[\Delta_2,\Delta_2+\Delta_1]$ or $I=[-\Delta_1-\Delta_2,-\Delta_2]$, for some~$\Delta_2>0$. Let $\Pcheap(I)$ be the set
of all cheap points that are in $I$ and whose successor lies in~$I$ as well. 
Then $\sum_{p\in \Pcheap(I)} \rst(p)^2 \leq \delta (\Delta_1+\Delta_2)\Delta_1$.
\end{lemma}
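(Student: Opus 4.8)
The plan is to bound the sum $\sum_{p\in\Pcheap(I)}\rst(p)^2$ by using the defining inequality of a cheap point together with the fact that the standard ranges of consecutive points in $I$ telescope. By symmetry, assume $I=[\Delta_2,\Delta_2+\Delta_1]$. For a cheap point $p\in\Pcheap(I)$ we have $\suc(p)\in I$ and, by definition of cheap, $\rst(p)=|p\,\suc(p)|\leq\delta\cdot|s\,\suc(p)|$. Since $\suc(p)\in I$, its distance from the source is at most $\Delta_1+\Delta_2$, so $\rst(p)\leq\delta(\Delta_1+\Delta_2)$. This gives one factor of $\delta(\Delta_1+\Delta_2)$ in every term.

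The second idea is that $\sum_{p\in\Pcheap(I)}\rst(p)=\sum_{p\in\Pcheap(I)}|p\,\suc(p)|\leq\Delta_1$, because the intervals $[p,\suc(p)]$ for $p\in\Pcheap(I)$ are pairwise interior-disjoint subintervals of $I$ (consecutive points of $P$ to the right of $s$ have nested successor relations, so these edges tile a portion of $I$ without overlap), and $I$ has length $\Delta_1$. Combining the two observations,
\[
\sum_{p\in\Pcheap(I)}\rst(p)^2
=\sum_{p\in\Pcheap(I)}\rst(p)\cdot\rst(p)
\leq \delta(\Delta_1+\Delta_2)\sum_{p\in\Pcheap(I)}\rst(p)
\leq \delta(\Delta_1+\Delta_2)\,\Delta_1,
\]
which is the claimed bound. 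The left-interval case $I=[-\Delta_1-\Delta_2,-\Delta_2]$ is identical after reflecting through the source.

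The only point that needs a little care—and which I expect to be the main obstacle in writing this cleanly—is justifying that the edges $[p,\suc(p)]$ for $p\in\Pcheap(I)$ really are interior-disjoint and contained in $I$. This follows from the structure of the successor relation in $\Reals^1$: for points on the right of $s$, $\suc(r_i)=r_{i+1}$, so the relevant edges are among the consecutive gaps $[r_i,r_{i+1}]$, which are disjoint and whose union is contained in $[r_1,r_{|R|}]$; intersecting with the restriction that both $p$ and $\suc(p)$ lie in $I$ keeps everything inside $I$. Once this is spelled out, the rest is the two-line estimate above.
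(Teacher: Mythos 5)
Your proof is correct and follows essentially the same route as the paper's: bound each $\rst(p)$ by $\delta(\Delta_1+\Delta_2)$ using the cheapness condition, then bound $\sum_{p\in\Pcheap(I)}\rst(p)$ by $\Delta_1$ via the disjointness of the gaps $[p,\suc(p)]$ inside $I$. The only difference is cosmetic: you justify the second inequality explicitly (the paper asserts it without comment), while the paper additionally notes that extreme points can be discarded since $\rst(p)=0$ — a case that is in any event excluded by the requirement that $\suc(p)\in I$.
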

\begin{proof}
Let $p\in \Pcheap(I)$. If $p$ is extreme we can ignore $p$ because $\rst(p)=0$,
so assume $p$ is not extreme. Since $p$ is cheap we then have
\[
\rst(p) = |p\,\suc(p)| \leq \delta \cdot |s \suc(p)|.
\]
Since $\suc(p)\in I$ we have $|s\suc(p)|\leq \Delta_1+\Delta_2$, and so $\rst(p)\leq \delta(\Delta_2+\Delta_1)$. Hence,
\[
\sum_{p\in \Pcheap(I)} \rst(p)^2 \ \ \leq \ \  \delta(\Delta_2+\Delta_1) \cdot \sum_{p\in \Pcheap(I)} \rst(p) 
\ \ \leq \ \ \delta(\Delta_2+\Delta_1)\Delta_1.
\]
\end{proof}
%
We now prove the approximation ratio of our source-based range assignment.
\begin{lemma}
\label{le:(2,1)-ratio}
For any point set $P$ in $\Reals^1$ and any $1/2 < \delta < 1$ we have
\[
\cost_2(\rsb(P))  \leq  \cdelta \cdot \opt, \mbox{ where $\cdelta := \max \left( 1+\delta+\frac{(1+5\delta)(1-\delta)^2}{\delta^2}, \frac{1}{\delta^2}+\frac{1}{2} \right)$} 
\]
and $\opt=\cost_2(\ropt(P))$ is the cost of the optimal range assignment on $P$.
\end{lemma}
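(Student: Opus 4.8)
The plan is to bound the cost of $\rsb(P)$ by separately accounting for three contributions: the range of the source $s$, the (zero) ranges of the expensive points, and the standard ranges assigned to the cheap points. The first and second are easy: expensive non-source points contribute nothing, and $\rsb(s)^2 = \dmax^2$, where $\dmax = |s\,\suc(p_c)|$ for a crucial point $p_c$. Since $p_c$ is expensive, the optimal solution must in particular bring communication out to (and past) $\suc(p_c)$, which forces $\opt \geq \dmax^2$ when there is a single crucial point, or $\opt \geq \dmax_L^2 + \dmax_R^2$ in the two-crucial-point case; I would phrase this via Theorem~\ref{thm:structure-1D}, which tells us the optimal broadcast tree reaches the extreme points on both sides, so some point on each relevant side must have range at least the distance needed to ``jump the gap'' that made $p_c$ expensive. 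The delicate part of this step is that ``expensive'' is defined relative to $\delta$, so I need: if $p$ is expensive then $|p\,\suc(p)| > \delta |s\,\suc(p)|$, hence any feasible solution either pays at least $(\delta\dmax)^2$ on a single edge reaching $\suc(p_c)$ from the far side, or reaches $\suc(p_c)$ through $s$-side points, in which case a telescoping/convexity argument (the same ``$y^\alpha + (x-y)^\alpha$ minimized at $y = x/2$'' fact used in Lemma~\ref{lem:1-stable-ratio}) lower-bounds that cost by a constant fraction of $\dmax^2$. This gives $\opt \geq \dmax^2 / \delta^2$ up to the additive term, matching the second term $\frac{1}{\delta^2} + \frac{1}{2}$ in $\cdelta$.

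For the cheap-point contribution I would invoke Lemma~\ref{Le_1}. Partition $\Reals^1 \setminus \{s\}$ into the two rays, and on each ray cover the cheap points by intervals whose endpoints are dictated by the structure of an optimal solution: an interval near $s$ of ``within-reach'' points (covered by the root-crossing point of the \emph{optimal} solution), and the tail intervals where the optimal solution runs a chain. On each such interval, Lemma~\ref{Le_1} bounds $\sum \rst(p)^2$ by $\delta(\Delta_1 + \Delta_2)\Delta_1$, where $\Delta_1$ is the interval length and $\Delta_2$ its distance from $s$. The point is that $\opt$ restricted to that interval is at least (a constant times) $\Delta_1^2/(\text{number of chain edges})$ — again by convexity — while the Lemma's bound is of order $\delta \Delta_1 (\Delta_1 + \Delta_2)$; dividing, and using that on a tail interval $\Delta_2$ is comparable to the range the optimal chain must already be paying for, yields a ratio of the form $1 + \delta + \frac{(1+5\delta)(1-\delta)^2}{\delta^2}$. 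I expect the precise constants $(1+5\delta)$ and $(1-\delta)^2$ to fall out of carefully charging the cheap points lying ``between'' a long-range optimal edge and the cluster of zero-range points in $\ropt$.

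I would then combine: write $\cost_2(\rsb(P)) = \dmax^2 + \sum_{p \in \Pcheap} \rst(p)^2$, bound the first term by $(\frac{1}{\delta^2} + \frac{1}{2})\cdot(\text{the part of }\opt\text{ ``spent reaching }\dmax\text{''})$ and the second by $(1 + \delta + \frac{(1+5\delta)(1-\delta)^2}{\delta^2})\cdot(\text{the rest of }\opt)$, and take the max of the two coefficients since the two pieces of $\opt$ are disjoint. The main obstacle, I anticipate, is the bookkeeping in the second part: the interval decomposition must be chosen so that (i) every cheap point lies in exactly one interval with its successor, so Lemma~\ref{Le_1} applies, and (ii) the optimal cost charged to each interval is counted only once and is genuinely $\Omega(\Delta_1^2)$ rather than something smaller because the optimal solution uses many tiny chain edges there. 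Handling the boundary case where an interval straddles the zero-range cluster $Z$ of $\ropt$ (so $\ropt$ pays nothing there but $\rsb$ pays standard ranges) is exactly where the $(1-\delta)^2$ factor comes from, and getting that constant tight is the crux of the calculation.
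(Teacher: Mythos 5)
Your high-level decomposition (source range, zero-range expensive points, cheap points via Lemma~\ref{Le_1}) points in the right direction, but the two quantitative claims that carry your argument are not correct as stated, and the combination step does not work. First, $\opt \geq \dmax^2$ is false: the optimal solution can reach a point at distance $\dmax$ from $s$ by a chain of $n$ short hops at total cost $\Theta(\dmax^2/n)$, and the same objection defeats the fallback ``telescoping/convexity'' bound of ``a constant fraction of $\dmax^2$'' for reaching $\suc(p_c)$ through $s$-side points. What actually saves the charge is not a lower bound on the cost of reaching distance $\dmax$, but the single edge of the crucial pair: since $p^*_i$ is crucial, $|p^*_i p^*_{i+1}| > \delta\,|s p^*_{i+1}| = \delta\dmax$, so \emph{if} the optimal solution reaches $p^*_{i+1}$ from its chain predecessor, then $\ropt(p^*_i)^2 = |p^*_i p^*_{i+1}|^2 > \delta^2 \dmax^2$ is one term of $\opt$ against which $\rsb(s)^2 = \dmax^2$ can be charged at rate $1/\delta^2$ (taking care not to double-count that term, which is why the paper works with $\Copt^* := \Copt - |p^*_i p^*_{i+1}|^2$ and checks $\Copt^* \geq \Csb$ using $\rsb(p^*_i)=0$). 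But if $p^*_{i+1}$ lies within reach of the optimal root-crossing point $q$, no such term exists and the charge must go against $\ropt(q)^2$ instead. Your proof has no mechanism for this dichotomy.

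Second, the ``two disjoint pieces of $\opt$, take the max of the coefficients'' combination is structurally wrong: in the configuration that produces the first term of $\cdelta$ (crucial pair inside the range of $q$), the source's cost $\dmax^2$ \emph{and} the standard ranges of all cheap points within reach of $q$ are charged against the \emph{same} quantity $\ropt(q)^2$, because $\ropt$ pays nothing for the points covered by $q$; the coefficient $1+\delta+\frac{(1+5\delta)(1-\delta)^2}{\delta^2}$ arises as a sum of several charges against that one denominator, not as a max over disjoint pieces. (Outside the reach of $q$ the comparison is simply $\rsb(p)\leq\rst(p)=\ropt(p)$ pointwise, ratio $1$, with no $\delta$ involved.) The missing ingredient is a case analysis on the position of the crucial pair $p^*_i,p^*_{i+1}$ relative to $q$ and its range (the paper needs five cases), combined with the bound $|s p^*_i| < \frac{1-\delta}{\delta}|p^*_i p^*_{i+1}|$ --- which, rather than the ``interval straddling $Z$'' boundary case you single out, is where the factor $(1-\delta)^2$ actually originates.
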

\begin{proof}
The worst approximation ratio is achieved by a set~$P$ with points to both sides of 
the source---indeed, if we have only points to the right of~$s$, say, then adding an 
additional point slightly to the left of $s$ will change neither the cost of an
optimal solution nor the cost of~$\rsb$. So from now on we assume that $P$ has 
points to both sides of ~$s$. In the following, with a slight abuse of notation,
we will for a point $p\in P$ use the notation $p$ for the ``object''~$p$
and to its value (that is, its $x$-coordinate, if we identify $\Reals^1$ with the $x$-axis).
For example, to indicate that a point $p$ lies to the left of
another point~$p'$ we may write $p<p'$. We will assume without loss of generality that~$s=0$. 

Let $\ropt(P)$ be an optimal range assignment that induces a broadcast tree~$\B$ 
with the structure of Theorem~\ref{thm:structure-1D}, and let $q$ denote the 
root-crossing point in~$\B$.
With a slight abuse of notation, let $p^*_i$ denote a crucial point in $P$,
and let $p^*_{i+1}=\suc(p^*_i)$. 
If $p^*_i=s$ then we define $p^*_{i+1}$ to be a successor of~$s$ at maximum distance from~$s$,
so that also in this case we have $\dmax = |s p^*_{i+1}|$.
Thus $\rsb(s)=\dmax = |sp^*_{i+1}|$.

\begin{figure}
\begin{center}
\includegraphics{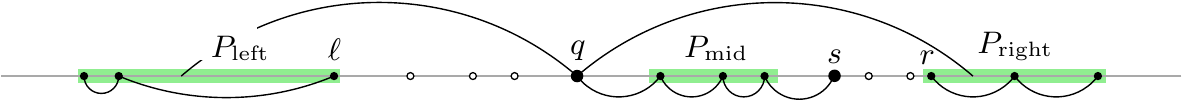}
\end{center}
\caption{The cost of $\ropt$ on $\Pleft\cup\Pmid\cup\Pright$ is at least the cost of
         $\rsb$ on this set. Note that $\ell$ and/or $r$ can lie exactly at the end
         of the range of~$q$, that is, $|q\ell|=\ropt(q)$ and/or $|qr|=\ropt(q)$---in fact, 
         one of this cases must happen.}
\label{fig:3-stable-proof-1}
\end{figure}
Let $\ell$ and $r$ be the leftmost and the rightmost points that are within
range of $q$ in the optimal solution, respectively.
Let $\Pleft := \{ p\in P: p\leq \ell\}$ be the set of all points to the left of $\ell$ 
plus $\ell$ itself, and let $\Pright := \{ p\in P: p\geq r\}$. Finally, let
$\Pmid$ be the set of points in between $s$ and $q$, excluding both~$s$ and~$q$;
see Figure~\ref{fig:3-stable-proof-1}.
We now define
\[
\Csb := \sum_{p\in \Pleft\cup\Pmid\cup\Pright} \rsb(p)^2
\mbox{\hspace*{5mm} and \hspace*{5mm}}
\Copt := \sum_{p\in \Pleft\cup\Pmid\cup\Pright} \ropt(p)^2
\]
as the costs incurred by $\rsb$ and $\ropt$ on the sets just defined.
Observe that $\Copt \geq \Csb$, because $\ropt(p)=\rst(p)$ 
for all $p\in \Pleft\cup\Pmid\cup\Pright$, and $\rsb(p)\leq\rst(p)$ for all $p\in P\setminus\{s\}$.

We now analyze the costs incurred by $\rsb$ and $\ropt$ on the remaining points.
We assume without loss of generality that $q\leq s$, and we let $x := |qs|$ 
denote the distance from $q$ to~$s$. Furthermore,
we define $z := \ropt(q)$. Note that $z\geq x$. 
We divide the analysis into several cases, depending on the relative position of~$s$, $q$, $p^*_i$, $p^*_{i+1}$.
\medskip

\noindent \emph{Case 1:  $p^*_i$ and $p^*_{i+1}$ lie to the right of~$s$ 
(possibly $p^*_i=s$) and inside the range of $q$ in the optimal solution.} \\[2mm]
\begin{figure}[h]
\begin{center}
\includegraphics{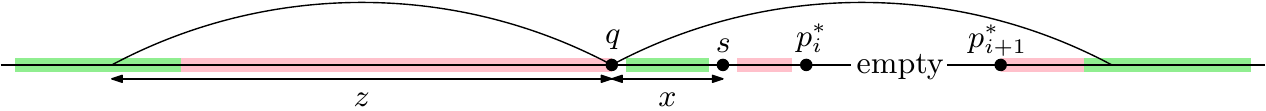}
\end{center}
\caption{Relative position of the points $s,q,p^*_i,p^*_{i+1}$ in Case 1. Costs of the points 
in the green regions are included in $\Csb$ and $\Copt$. The cost for the other regions is
analyzed in the text.}
\label{fig:case-1}
\end{figure}
See Figure~\ref{fig:case-1} for an illustration.
Since $p^*_i$ is crucial we have $|p^*_i p^*_{i+1}| > \delta\cdot |sp^*_{i+1}|$ and so
\[
|s p^*_i| = |s p^*_{i+1}| - |p^*_i p^*_{i+1}| 
          < \left( \frac{1}{\delta}-1\right)|p^*_i p^*_{i+1}|
          = \left( \frac{1-\delta}{\delta}\right)|p^*_i p^*_{i+1}|
          \leq \left( \frac{1-\delta}{\delta}\right) (z-x).
\]
We now bound the cost of the points $p\not\in \Pleft\cup\Pmid\cup\Pright$.
Theses are the points $s,q,p^*_i,p^*_{i+1}$ plus the points in the red regions
in Figure~\ref{fig:case-1}. Note that $\rsb(p^*_i)=0$.

By applying Lemma~$\ref{Le_1}$ with $\Delta_1\leq\left(\frac{1-\delta}{\delta}\right)(z-x)$
and $\Delta_2=0$, we see that the cost incurred by $\rsb$ due to the points 
strictly in between $s$ and $p^*_i$ is less than or equal to $\frac{(1-\delta)^2}{\delta}(z-x)^2$.
By applying Lemma~$\ref{Le_1}$ with $\Delta_1=z$ and $\Delta_2=x$, the cost incurred by $\rsb$ 
due to the points in the region left of and including~$q$ 
and within the range of $q$ is at most $\delta z(z+x)$.
Finally, the cost incurred by $\rsb$ due to the points to the right of and including $p^*_{i+1}$ 
and within the range of $q$ is at most $(z-x-|sp^*_{i+1}|)^2$.
Since $\rsb(s) = |sp^*_{i+1}|$ we obtain
\[
\cost_2(\rsb(P)) \ \leq \ |sp^*_{i+1}|^2 + \delta z(z+x) + \frac{(1-\delta)^2}{\delta}(z-x)^2+ (z-x-|sp^*_{i+1}|)^2+\Csb.
\]
Obviously, $\cost_2(\ropt(P))>z^2 + \Copt$.  Since $\Csb \leq \Copt$ and
$x \leq z$ we conclude
\[
\begin{array}{lll}
\frac{\cost_2(\rsb(P))}{\cost_2(\ropt(P))} 
  & \leq & \frac{(z-x-|sp^*_{i+1}|)^2+ |sp^*_{i+1}|^2 }{z^2} + \delta\frac{x}{z} + \frac{(1-\delta)^2}{\delta}  \\[2mm]
  & \leq & \frac{(z-x)^2}{z^2} + \delta\frac{x}{z} + \frac{(1-\delta)^2}{\delta}  \\[2mm]
  & \leq &  1+\delta+ \frac{(1-\delta)^2}{\delta} \\[2mm]
  & < &  \cdelta.
\end{array}
\]

\noindent \emph{Case 2: $p^*_i$ lies to the left of $q$ but within the range of $q$ and  
$p^*_{i+1}$ lies outside the range of $q$.} \\[2mm]
\begin{figure}[h]
\begin{center}
\includegraphics{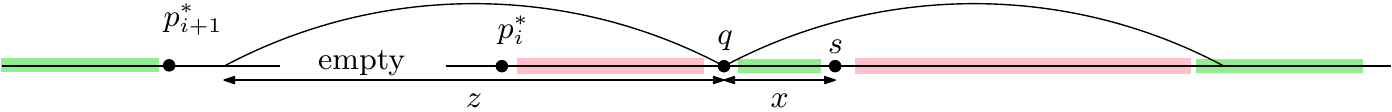}
\end{center}
\caption{Relative position of the points $s,q,p^*_i,p^*_{i+1}$ in Case 2.}
\label{fig:Case 2}
\end{figure}
As before, since $p^*_i$ is crucial we have $|p^*_i p^*_{i+1}| > \delta\cdot |sp^*_{i+1}|$ and so
\[
|s p^*_i| = |s p^*_{i+1}| - |p^*_i p^*_{i+1}| 
          < \left( \frac{1-\delta}{\delta}\right)|p^*_i p^*_{i+1}|.
\]
Applying Lemma~$\ref{Le_1}$ with $\Delta_1+\Delta_2=|sp^*_i|$ and 
$\Delta_1 = |qp^*_i| \leq z$,
we see that the cost incurred by $\rsb$ due to the points 
in the region between and including $q$ and $p^*_i$ is at most $(1-\delta)|p^*_ip^*_{i+1}|z$.
Applying Lemma~$\ref{Le_1}$ with $\Delta_1=z-x$ and $\Delta_2=0$, we see that
the cost incurred by $\rsb$ due to the points in the region to the right of~$s$ 
and within the range of $q$ is at most $\delta (z-x)^2$.
Furthermore, the cost incurred by $\rsb$ due to the range of $s$ is equal to $|sp^*_{i+1}|^2$.

Let $\Copt^* := \Copt-|p^*_ip^*_{i+1}|^2$.
Note that in Case~2, $p^*_i$ is the leftmost point in the range of $q$
(the point $\ell$ in Figure~\ref{fig:3-stable-proof-1}) and so it is included 
in $\Pleft$. Since $\rsb(p^*_i)=0$ this implies $\Copt^* \geq \Csb$.
Hence,
\[
\cost_2(\ropt(P))>z^2 +|p^*_ip^*_{i+1}|^2+  \Copt^*.
\]
Moreover,
\[
\cost_2(\rsb(P))= |sp^*_{i+1}|^2 +(1-\delta)|p^*_ip^*_{i+1}|z+\delta (z-x)^2+\Csb.
\]
Since $\Csb \leq \Copt^*$ we thus get
\[
\frac{\cost_2(\rsb(P))}{\cost_2(\ropt(P))}
\leq \frac{|sp^*_{i+1}|^2 +(1-\delta)|p^*_ip^*_{i+1}|z+\delta (z-x)^2}{z^2+|p^*_ip^*_{i+1}|^2}.
\]
Now if $|p^*_ip^*_{i+1}|\geq z$, and using that $x \leq z$, we find:
\[
\begin{array}{lll}

\frac{\cost_2(\rsb(P))}{\cost_2(\ropt(P))}  
 & \leq & \frac{|sp^*_{i+1}|^2}{|p^*_ip^*_{i+1}|^2} 
      + \frac{(1-\delta)|p^*_ip^*_{i+1}|z}{z^2+|p^*_ip^*_{i+1}|^2}
      + \frac{\delta (z-x)^2}{2z^2} \\[2mm]
 & \leq & \frac{1}{\delta^2}+\frac{1-\delta}{2}+\frac{\delta}{2} \\[2mm]
 &  =  & \frac{1}{2} + \frac{1}{\delta^2}.
\end{array}
\]
On the other hand, if $|p^*_ip^*_{i+1}|<z$ we have $|sp^*_{i+1}|<|p^*_ip^*_{i+1}|/\delta < z/\delta$
and so we get
\[
\begin{array}{lll}
\frac{\cost_2(\rsb(P))}{\cost_2(\ropt(P))}
  & \leq & \frac{|sp^*_{i+1}|^2}{z^2+|p^*_ip^*_{i+1}|^2}+\frac{(1-\delta)|p^*_ip^*_{i+1}|z}{z^2+|p^*_ip^*_{i+1}|^2}+\frac{\delta (z-x)^2}{z^2} \\[2mm]
   & < & \frac{|p^*_ip^*_{i+1}|z}{(z^2+|p^*_ip^*_{i+1}|^2)\delta^2}+\frac{(1-\delta)|p^*_ip^*_{i+1}|z}{z^2+|p^*_ip^*_{i+1}|^2}+\frac{\delta (z-x)^2}{z^2} \\[2mm]
   & \leq & \frac{1}{2\delta^2}+\frac{1-\delta}{2}+\delta.
\end{array}
\]
So we have,
\[\frac{\cost_2(\rsb(P))}{\cost_2(\ropt(P))}
    \leq \max \Big( \;\frac{1}{2} + \frac{1}{\delta^2}, \; \frac{1}{2\delta^2}+\frac{1-\delta}{2}+\delta \Big)
     =\frac{1}{\delta^2}+\frac{1}{2}
     \leq \cdelta.
\]

\noindent \emph{Case 3: $p^*_i,p^*_{i+1}$ lie to the left of $q$ and inside the range of $q$.} \\[2mm]
\begin{figure}[h]
\begin{center}
\includegraphics{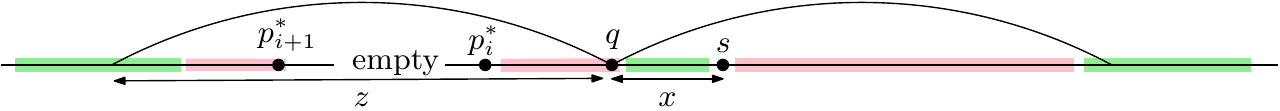}
\end{center}
\caption{Relative position of the points $s,q,p^*_i,p^*_{i+1}$ in Case 3. Costs of the points 
in the green regions are included in $\Csb$ and $\Copt$. The cost for the other regions is
analyzed in the text.}
\label{fig:Case 3}
\end{figure}
As before, since $p^*_i$ is crucial we have 
$|s p^*_i| < \left( \frac{1-\delta}{\delta}\right)|p^*_i p^*_{i+1}|$, which in Case~3 
implies $x \leq \frac{(1-\delta)}{\delta}z$ and $|qp^*_i| \leq \frac{(1-\delta)}{\delta}z$.
Applying Lemma~\ref{Le_1} with $\Delta_1=|qp_i^*|$ and $\Delta_2=x$, we see that
the cost incurred by $\rsb$ due to the points in the region 
between including $q$ and $p^*_i$ is at most 
$\delta \cdot z(x+\frac{(1-\delta)}{\delta}z) \cdot \frac{(1-\delta)}{\delta}$.
Applying Lemma~$\ref{Le_1}$ with $\Delta_1=z-x$ and $\Delta_2=0$, 
we see that the cost incurred by $\rsb$ due to the 
points in the region right of $s$ and within the range of $q$ is 
at most $\delta (z-x)^2$.
Finally, the cost incurred by $\rsb$ due to the points in the region left of $p^*_{i+1}$ 
and within the range of $q$ is at most $(z+x-|sp^*_{i+1}|)^2$.
Hence,
\[
\cost_2(\ropt(P))>z^2 + \Copt
\]
and
\[
\cost_2(\rsb(P))\leq 
|sp^*_{i+1}|^2 + \delta (z-x)^2  +\delta \cdot \frac{(1-\delta)}{\delta}
  \cdot z(x+\frac{(1-\delta)}{\delta}z)+ (z+x-|sp^*_{i+1}|)^2+\Csb, 
\]
Since $\Csb \leq \Copt$ this implies 
\[
\begin{array}{lll}
\frac{\cost_2(\rsb(P))}{\cost_2(\ropt(P))}
    & \leq & \frac{|sp^*_{i+1}|^2 + (z+x-|sp^*_{i+1}|)^2 + \delta (z-x)^2  +\delta \frac{(1-\delta)}{\delta}z(x+\frac{(1-\delta)}{\delta}z)  }{z^2} \\[2mm]
    & \leq & \frac{(z+x)^2+\delta(z-x)^2+(1-\delta)xz+\frac{(1-\delta)^2}{\delta}z^2}{z^2}  \ \mbox{(note that $|sp^*_{i+1}|^2 + (z+x-|sp^*_{i+1}|)^2< (z+x)^2$)} \\[2mm]
    & = & \frac{(1+\delta+\frac{(1-\delta)^2}{\delta})z^2+(3-3\delta)zx+(1+\delta)x^2}{z^2}  \\[2mm]
    & =  & (1+\delta)\frac{x^2}{z^2}+(3-3\delta)\frac{x}{z}+1+\delta+\frac{(1-\delta)^2}{\delta}.
\end{array}
\]
Now define $f(y) := (1+\delta)y^2+(3-3\delta)y+1+\delta+\frac{(1-\delta)^2}{\delta}$,
then the last term is equal to $f(x/z)$. 
Observe that~$f$ is quadratic in $y$ and that 
$0\leq x/z \leq (1-\delta)/\delta$.
Hence,
\[
\begin{array}{lll}
\frac{\cost_2(\rsb(P))}{\cost_2(\ropt(P))} 
 & \leq & \max \left( f(0),f((1-\delta)/\delta) \right) \\[2mm]
 & = & \max \left( 1+\delta+\frac{(1-\delta)^2}{\delta},1+\delta+ (1+5\delta)\frac{(1-\delta)^2}{\delta^2} \right) \\[2mm]
 & = & 1+\delta+ (1+5\delta)\frac{(1-\delta)^2}{\delta^2} \\[2mm]
 & \leq & \cdelta.
 \end{array}
\]
\emph{Case 4: $p^*_i,p^*_{i+1}$ lie to the left of $q$ but outside the range of $q$;
    or $p^*_i,p^*_{i+1}$ lie in the region $[s,q]$.} \\[2mm]
\begin{figure}[h]
\begin{center}
\includegraphics{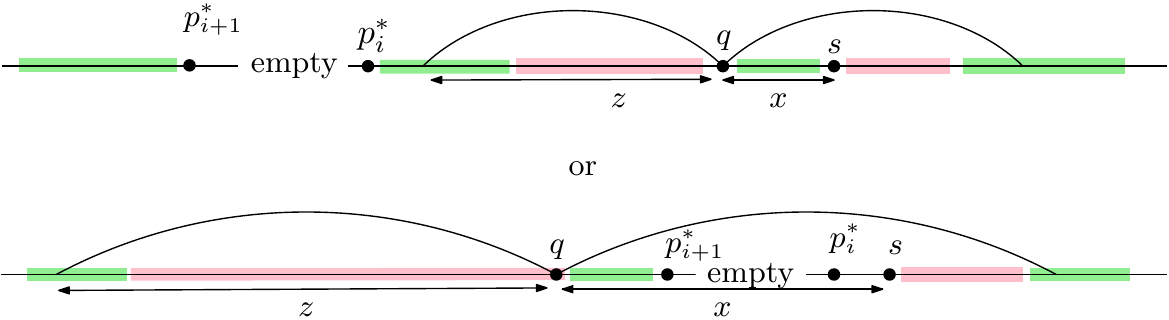}
\end{center}
\caption{The two options for relative position of the points $s,q,p^*_i$ and $p^*_{i+1}$ in Case 4.}
\label{fig:Case 4}
\end{figure}
Since $p_i^*$ is crucial we have $|sp^*_{i+1}| < |p^*_ip^*_{i+1}|/\delta$.
Clearly the cost incurred by $\rsb$ due to the points in the region to the left of $q$ 
and within the range of $q$ is at most~$z^2$.
Applying Lemma~$\ref{Le_1}$ with $\Delta_1=z-x$ and $\Delta_2=0$, the cost incurred by $\rsb$ 
due to the points in the region right of~$s$ and within the range of $q$ is at most~$\delta (z-x)^2$.
Finally, the cost incurred by $\rsb$ due to the range of~$s$ is~$|sp^*_{i+1}|^2$.
Define $\Copt^* := \Copt-|p^*_ip^*_{i+1}|^2$, and observe that $\Copt^* \geq \Csb$.
Then
\[
\cost_2(\ropt(P))>z^2 +|p^*_ip^*_{i+1}|^2+  \Copt^*
\]
and 
\[
\cost_2(\rsb(P))\leq |sp^*_{i+1}|^2 +z^2+\delta (z-x)^2+\Csb
\]
with $\Csb \leq \Copt^*$.
Since $x \leq z$ we thus obtain
\[
\frac{\cost_2(\rsb(P))}{\cost_2(\ropt(P))}
    \leq \frac{|sp^*_{i+1}|^2 +z^2+\delta (z-x)^2}{z^2+|p^*_ip^*_{i+1}|^2}
    \leq \frac{\frac{1}{\delta^2}|p^*_ip^*_{i+1}|^2| +(1+\delta)z^2}{z^2+|p^*_ip^*_{i+1}|^2}
    \leq \max\left(1+\delta,\frac{1}{\delta^2}\right) < \cdelta.
\]
\emph{Case 5: $p^*_i$ lies to the right of $s$ and within the range of $q$ and 
              $p^*_{i+1}$ lies to the right of~$s$ and outside the range of $q$;
              or $p^*_i,p^*_{i+1}$ lie to the right of~$s$ and outside the range of $q$.}\\[2mm]
\begin{figure}[h]
\begin{center}
\includegraphics{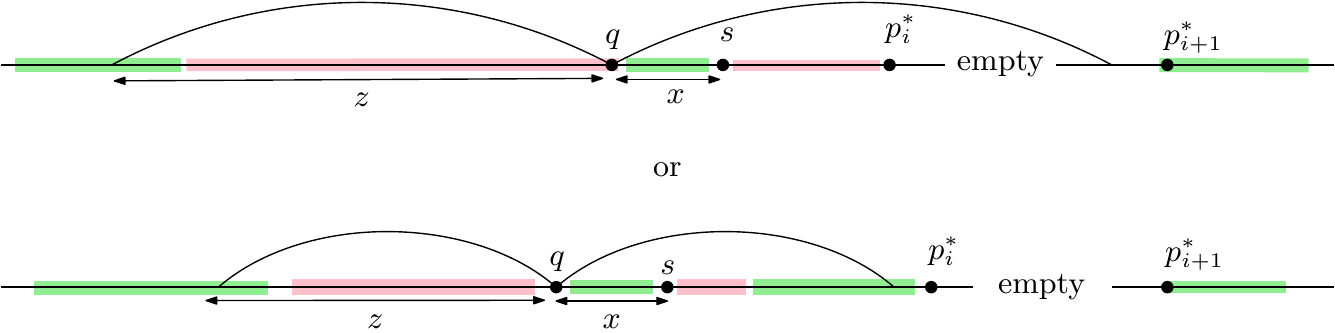}
\end{center}
\caption{The two options for the relative position of the points $s,q,p^*_i$ and $p^*_{i+1}$ in Case 5.}
\label{fig:Case 6}
\end{figure}
As before, we have $|sp^*_i| < \frac{(1-\delta)}{\delta}|p^*_ip^*_{i+1}|$
and $|sp^*_{i+1}| < \frac{1}{\delta}|p^*_ip^*_{i+1}|$.
Clearly the cost incurred by $\rsb$ due to the points in the region to the left 
of $q$ and within the range of $q$ is at most~$z^2$, and the cost incurred 
by $\rsb$ due to the points in the region right of~$s$ and within the range 
of $q$ at most $|sp^*_i|^2 < \frac{(1-\delta)^2}{\delta}|p^*_ip^*_{i+1}|^2$.
Finally, the cost incurred by $\rsb$ due to source~$s$ is 
$|sp^*_{i+1}|^2$ which is most $\frac{1}{\delta^2}|p^*_ip^*_{i+1}|^2$.
Define $\Copt^* := \Copt-|p^*_ip^*_{i+1}|^2$ and observe that $\Copt^* \geq \Csb$.
Then 
\[
\cost_2(\ropt(P)) > z^2 + |p^*_ip^*_{i+1}|^2 +  \Copt^*
\]
and 
\[
\cost_2(\rsb(P))\leq |sp^*_{i+1}|^2 +z^2+\frac{(1-\delta)^2}{\delta}|p^*_ip^*_{i+1}|^2+\Csb
\]
Since $\Csb \leq \Copt^*$ and $\delta < 1$ we conclude
\[
\begin{array}{lll} 
\frac{\cost_2(\rsb(P))}{\cost_2(\ropt(P))}
    & \leq & \frac{|sp^*_{i+1}|^2 +z^2+\frac{(1-\delta)^2}{\delta}|p^*_ip^*_{i+1}|^2}{z^2+|p^*_ip^*_{i+1}|^2} \\[2mm]
    & \leq & \frac{(\frac{1}{\delta^2}+\frac{(1-\delta)^2}{\delta})|p^*_ip^*_{i+1}|^2 +z^2}{z^2+|p^*_ip^*_{i+1}|^2} \\[2mm]
    & \leq & \frac{(1-\delta)^2}{\delta}+\frac{1}{\delta^2} \\[2mm]
    &  <   & \cdelta.
\end{array}
\]
We conclude that the approximation ratio is bounded by $\cdelta$ in all cases.
\end{proof}
We now want to choose $\delta$ so as to minimize $\cdelta=
\max \left( 1+\delta+\frac{(1+5\delta)(1-\delta)^2}{\delta^2}, \frac{1}{\delta^2}+\frac{1}{2} \right)$. The first term is minimized at the real root of the polynomial 
$6\delta^3-3\delta-2$, whose approximate value is $0.92711$; this gives a value
that is approximately~1.97. For this value of $\delta$ the first term dominates the second
one, leading to the following theorem.
\begin{theorem}
\label{thm:(2,1)}
There exists a 3-stable 1.97-approximation algorithm for the dynamic broadcast 
range-assignment problem in $\Reals^1$ for $\alpha=2$.  
\end{theorem}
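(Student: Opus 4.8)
The plan is to instantiate the source-based range assignment $\rsb$ from the preceding subsection with a suitable choice of the parameter $\delta\in(1/2,1)$ and then simply assemble the two lemmas already proved. First I would invoke Lemma~\ref{le:(2,1)-stability}: maintaining $\rsb$ under a single insertion causes at most two range increases and at most one range decrease, and symmetrically at most one increase and two decreases under a deletion; in either case at most three ranges change, so the update algorithm that recomputes $\rsb(P)$ after each update is $3$-stable, \emph{regardless of $\delta$}. Then Lemma~\ref{le:(2,1)-ratio} gives $\cost_2(\rsb(P)) \le \cdelta\cdot\opt$ for every point set $P$, where $\cdelta = \max\left(1+\delta+\frac{(1+5\delta)(1-\delta)^2}{\delta^2},\ \frac{1}{\delta^2}+\frac12\right)$. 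The theorem then reduces to choosing $\delta$ so that $\cdelta<1.97$.

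To do this I would treat the two terms of $\cdelta$ separately. The second term $\frac{1}{\delta^2}+\frac12$ is strictly decreasing on $(1/2,1)$, while the first term, after clearing the fraction, simplifies to $g(\delta) := 6\delta-8+3\delta^{-1}+\delta^{-2}$, which tends to $+\infty$ as $\delta\to 0^+$ and equals $2$ at $\delta=1$, so its minimum on $(1/2,1)$ is at an interior critical point. Differentiating, $g'(\delta)=6-3\delta^{-2}-2\delta^{-3}$, and $g'(\delta)=0$ is equivalent to the cubic $6\delta^3-3\delta-2=0$, which has a unique real root $\delta^\star\approx 0.92711$ (a sign check at $\delta=0.9$ and $\delta=0.95$ confirms it lies there). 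Evaluating, $g(\delta^\star)\approx 1.962<1.97$, while $\frac{1}{(\delta^\star)^2}+\frac12\approx 1.66<g(\delta^\star)$, so at $\delta=\delta^\star$ the first term dominates and $c_{\delta^\star}=g(\delta^\star)<1.97$. Hence the $3$-stable algorithm that maintains $\rsb$ with $\delta=\delta^\star$ achieves approximation ratio below $1.97$ for $\alpha=2$, which is the claim.

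I expect the only genuine work here to be the single-variable optimization in the second paragraph: verifying that $g'$ vanishes exactly at the roots of $6\delta^3-3\delta-2$, that this cubic has a single real root in $(1/2,1)$, and that at that root the first term of $\cdelta$ exceeds the second (so the $\max$ is governed by the term we minimized). All of this is elementary calculus and arithmetic. The substantive content — the $3$-stability bound and the five-case analysis bounding the approximation ratio — is carried entirely by Lemmas~\ref{le:(2,1)-stability} and~\ref{le:(2,1)-ratio}, which I would use as black boxes.
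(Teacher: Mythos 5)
Your proposal is correct and follows essentially the same route as the paper: the paper also obtains the theorem by combining Lemma~\ref{le:(2,1)-stability} (for $3$-stability) with Lemma~\ref{le:(2,1)-ratio}, and then minimizes the first term of $\cdelta$ at the real root of $6\delta^3-3\delta-2$ (approximately $0.92711$), where that term evaluates to about $1.96<1.97$ and dominates the second term. Your rewriting of the first term as $6\delta-8+3\delta^{-1}+\delta^{-2}$ and the accompanying calculus check out.
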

\section{Missing proofs for Section~\ref{sec:S1}}
\label{app:S1}

\subsection{The structure of an optimal solution in~$\bbS^1$}
\label{app:structure-in-S1}
In this section we will prove Lemma~\ref{lem:S1-structure}, which states that for any instance 
in $\bbS^1$, there exists a point $r\in \bbS^1$ that is not within range of any point.
Hence, $\bbS^1$ can be cut at $r$ to obtain an equivalent instance in $\Reals^1$.

Without loss of generality we identify $\bbS^1$ with a circle of perimeter~1.
Let $\ropt$ be a fixed optimal range-assignment on $P$. 
We will need the following simple lemma.
\begin{lemma} \label{maximum range of a point}
If $|P|>2$ then $\ropt(p)<\frac{1}{2}$ for all $p\in P$.
\end{lemma}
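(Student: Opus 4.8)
The plan is to argue by contradiction: assume $\ropt(q^*)\geq 1/2$ for some $q^*\in P$ and build a strictly cheaper feasible range assignment, contradicting optimality of $\ropt$. First I would record two elementary facts about $\bbS^1$ (taken to have perimeter $1$): its diameter is $1/2$, so a point with range $\geq 1/2$ reaches \emph{every} point of $\bbS^1$; and, writing $\bar{q^*}$ for the antipode of $q^*$, every $x\in\bbS^1$ satisfies $d(q^*,x)+d(x,\bar{q^*})=1/2$, since $x$ lies on one of the two semicircular arcs bounded by $q^*$ and $\bar{q^*}$ and the two clockwise distances along that arc add up to $1/2$, each being at most $1/2$.

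Next I would rebuild the solution. Extract from $\cg(\ropt)$ a simple path $\pi\colon s=p_0\to p_1\to\cdots\to p_m=q^*$ (it exists by feasibility), so $\ropt(p_i)\geq d(p_i,p_{i+1})$ for $i<m$. The new assignment keeps this chain, using $q^*$ as a hub covering whatever is not on $\pi$, and sets all other ranges to $0$. If $\bar{q^*}\notin P$, or $\bar{q^*}$ lies on $\pi$, then no point $q^*$ still has to reach is antipodal to $q^*$, so the hub range $\max_{x\in P\setminus\pi}d(q^*,x)$ is $<1/2$, and the new cost is at most $\sum_{i<m}\ropt(p_i)^\alpha+\bigl(\text{something}<(1/2)^\alpha\bigr)<\sum_{i<m}\ropt(p_i)^\alpha+\ropt(q^*)^\alpha\leq\cost_\alpha(\ropt)$. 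Otherwise $c^*:=\bar{q^*}$ is a point of $P$ off the chain, hence $c^*\neq s$; this is where the hypothesis $|P|>2$ enters. Among the (at least two) points of $P\setminus\{c^*\}$ pick $w$ minimizing $\delta:=d(w,c^*)$; since $q^*$ realizes the maximal distance $1/2$ to $c^*$ we have $w\neq q^*$, so $0<\delta<1/2$. Now relay $c^*$ through $w$: set $\rho(w):=\max(\rho(w),\delta)$ and let $q^*$ cover only $P\setminus(\pi\cup\{c^*\})$, which by the displayed identity requires range just $1/2-\min_{x\neq c^*}d(x,c^*)\leq 1/2-\delta$. Feasibility is then immediate, and the total cost is at most $\sum_{i<m}\ropt(p_i)^\alpha+(1/2-\delta)^\alpha+\delta^\alpha$.

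The argument thus reduces to the strict superadditivity $(1/2-\delta)^\alpha+\delta^\alpha<(1/2)^\alpha$ for $0<\delta<1/2$ and $\alpha>1$, which follows by writing $(a+b)^\alpha=a(a+b)^{\alpha-1}+b(a+b)^{\alpha-1}>a\cdot a^{\alpha-1}+b\cdot b^{\alpha-1}$. Together with $\ropt(q^*)^\alpha\geq(1/2)^\alpha$ and $\cost_\alpha(\ropt)\geq\sum_{i<m}\ropt(p_i)^\alpha+\ropt(q^*)^\alpha$ (the $p_i$ being distinct points of $P$), this produces a feasible assignment strictly cheaper than $\ropt$. The step needing the most care is the bookkeeping in the relay case: checking that lowering the range of $q^*$ leaves nothing uncovered (everything off $\pi$ other than $c^*$ is still within the new range of $q^*$, and $c^*$ is reached via $w$), and that the extra cost charged to $w$ is at most $\delta^\alpha$ whether or not $w$ already lies on the chain, using $\max(a,\delta)^\alpha\leq a^\alpha+\delta^\alpha$.
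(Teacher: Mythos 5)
Your proof is correct, and it rests on the same two ingredients as the paper's: the antipodal identity $d(q^*,x)+d(x,\bar{q^*})=\tfrac12$ and the strict superadditivity of $t\mapsto t^\alpha$ for $\alpha>1$ (which you, unlike the paper, actually prove). The organization is different, though. The paper first disposes of every non-source point in one line: the assignment $\rho(s)=\tfrac12$, all else zero, is feasible with cost $(\tfrac12)^\alpha$, and since any feasible solution must have $\rho(s)>0$, no point other than $s$ can afford a range of $\tfrac12$ in an optimal solution. This reduces the whole lemma to bounding $\ropt(s)$, where the relay through the second-farthest point (your point $w$) is applied once, directly to the source. You instead run a uniform exchange argument for an arbitrary $q^*$ with $\ropt(q^*)\geq\tfrac12$, rebuilding the solution as a chain from $s$ to $q^*$ plus a hub at $q^*$, and relaying the antipode through its nearest neighbor when necessary. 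This costs you some bookkeeping that the paper avoids --- you must verify feasibility of the chain-plus-hub tree, handle the subcases of whether $\bar{q^*}$ is absent, on the chain, or off it, and bound the extra charge $\max(a,\delta)^\alpha\leq a^\alpha+\delta^\alpha$ at $w$ --- but all of these steps check out (in particular, $|P|>2$ correctly guarantees $w\neq q^*$ and hence $0<\delta<\tfrac12$, which is exactly where the hypothesis is needed in the paper's argument as well). The paper's shortcut via ``$\rho(s)>0$ in any feasible solution'' is worth internalizing: it turns most of your case analysis into a single comparison against the trivial $(\tfrac12)^\alpha$ solution.
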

\begin{proof}
Note that setting $\rho(s)=\frac{1}{2}$ and $\rho(p)=0$ for all $p\in P\setminus\{s\}$ gives
a feasible solution. Since $\rho(s)>0$ in any feasible solution, this means that $\ropt(p)<\frac{1}{2}$ for
all $p\neq s$. Hence, it suffices to 
show that $\ropt(s) < \frac{1}{2}$. If there is no point $p\in P$ which is diametrically 
opposite~$s$ then clearly $\ropt(s) < \frac{1}{2}$. Now suppose there is a point~$p\in P$ 
that lies diametrically opposite~$s$.
Let $q\in P\setminus \{s,p\}$ be a point that maximizes the distance from~$s$ 
among all points in $P\setminus \{s,p\}$. The point~$q$ exists since $|P|>2$. 
Note that $d(s,q)+d(q,p)=\frac{1}{2}$. Hence, setting $\rho(s)=d(s,q)$ and
$\rho(q)=d(q,p)$ (and keeping all other ranges zero) gives a solution
of cost $d(s,q)^{\alpha} + d(q,p)^{\alpha}$, which is less than $\left( \frac{1}{2}\right)^\alpha$ since $\alpha>1$.
Thus $\ropt(s)<\frac{1}{2}$, which finishes the proof.
\end{proof}
Before we proceed, we introduce some more notation.

For two points $p,q\in\bbS^1$, we let $[p,q]^{\cw}\subset \bbS^1$ denote the  
closed clockwise interval from $p$ to $q$. In other words, $[p,q]^{\cw}$ is the clockwise arc along~$\bbS^1$
from $p$ to $q$, including its endpoints. Furthermore, we define 
$(p,q)^{\ccw}$ to be the open clockwise interval from $p$ to~$q$.
The intervals $[p,q]^{\ccw}$ and $(p,q)^{\ccw}$
are defined similarly, but for the counterclockwise direction.
Now consider a directed edge $(p,q)$ in a communication graph~$\cg(P)$. We say that $(p,q)$ is a
\emph{clockwise edge} if $\rho(p)\geq \dcw(p,q)$, and we say that it is
a \emph{counterclockwise edge} if $\rho(p)\geq \dccw(p,q)$.
Observe that Lemma~\ref{maximum range of a point} implies that an edge cannot be both 
clockwise and counterclockwise in an optimal range assignment, assuming $|P|>2$.
%
Finally, we define  the \emph{covered region} of a subset $Q\subseteq P$ with respect to 
a range assignment $\rho$ to be the set of all points $r\in \bbS^1$ such that there 
exists a point $p\in Q$ such that $\rho(p)\geq d(p,r)$. We denote this region by $\cov(\rho,Q)$. 
Furthermore, the  \emph{counterclockwise covered region} of $Q$, denoted by $\cov_{\ccw}(\rho,Q)$, 
is the set of all points $r\in \bbS^1$ such that there exists a point $p\in Q$ such that 
$\rho(p)\geq \dccw(p,r)$. The \emph{clockwise covered region} of $Q$, denoted by $\cov_{\cw}(\rho,Q)$,
is defined similarly.

We can now state the main lemma of this section.
\Sstructure*
\begin{proof}
Let $\dhop(p,q)$ denote the hop distance from $p$ to $q$ in the communication graph~$\graph_{\ropt}(P)$.
Let~$\B$ broadcast tree rooted at $s$ in $\graph_{\ropt}(P)$ with the following properties. 
\begin{itemize}
    \item $\B$ is a shortest-path tree in terms of hop distance, that is, the hop-distance from $s$ to 
           any point $p$ in~$\B$ is equal to $\dhop(s,p)$. 
\item Among all such shortest-path trees,~$\B$ maximizes the number of clockwise edges.
\end{itemize}
For two points $p,q\in P$, let $\pi(p,q)$ denote the path from $p$ to $q$ in~$\B$,
and let $|\pi(p,q)|$ be its length, that is, the number of edges on the path.
Note that $|\pi(s,p)|=\dhop(s,p)$ for any $p\in P$.
Let $\pa(p)$ denote the parent of a point $p$ in~$\B$ and define
\[
S_{\cw}=\{p\in P\setminus\{s\}: \mbox{ $(\pa(p),p)$ is a clockwise edge} \}   
\]
and 
\[
S_{\ccw}=\{p\in  P\setminus\{s\}: \mbox{ $(\pa(p),p)$ is a counterclockwise edge} \}   
\]
Note that $S_{\cw} \cup S_{\ccw} = P\setminus \{s\}$. Now define 
\[
q_{\cw} = \mbox{the point from $S_{\cw}$ that maximizes $\dcw(s,p)$},
\]
where $q_{\ccw} = s$ if $S_{\cw}=\emptyset$. Similarly, define
\[
q_{\ccw} = \mbox{the point from $S_{\ccw}$ that maximizes $\dccw(s,p)$},
\]
where $q_{\ccw} = s$ if $S_{\ccw}=\emptyset$. 
Let $\anc(p)$ be the set of ancestors in $\B$ of a point $p\in P$, that is,
$\anc(p)$ contains the points of $\pi(s,p)$ excluding the point~$p$.
The following observation will be used repeatedly in the proof.
\begin{obsinproof} \label{obs:path-cover}
If $(\pa(p),p)$ is a clockwise edge, then $[s,p]^{\cw} \subset \cov(\ropt,\anc(p))$. 
Similarly, if  $(\pa(p),p)$ is a counterclockwise edge, 
then $[s,p]^{\ccw} \subset \cov(\ropt,\anc(p))$.
\end{obsinproof}
\begin{proofinproof}
Assume $(\pa(p),p)$ is a clockwise edge; the proof for when $(\pa(p),p)$ is a 
counterclockwise edge is similar. If $s\in[\pa(p),p)]^{\cw}$---this includes the
case where $\pa(p)=s$---then the statement obviously holds, so assume $\pa(p)\in [s,p]^{\cw}$.
Since $(\pa(p),p)$ is a clockwise edge, it then suffices to prove that $[s,\pa(p)]^{\cw} \subset \cov(\ropt,\anc(p))$.
Note that $\cov(\ropt,\anc(p))$ is connected, because the points in $\anc(p)$ form a path,
namely~$\pi(s,\pa(p))$. Since $\pi(s,p)$ is shortest path, $p\not\in \cov(\ropt,\anc(\pa(p))$,
which implies that $[s,\pa(p)]^{\cw} \subset \cov(\ropt,\anc(\pa(p)))\subset \cov(\ropt,\anc(p))$.
\end{proofinproof}
We now proceed to show that $q_{\ccw}$ must lie clockwise from $q_{\cw}$, as seen from~$s$,
that is, the situation shown in Fig.~\ref{fig:S1-structure}(i) cannot happen.
\begin{figure}
\begin{center}
\includegraphics{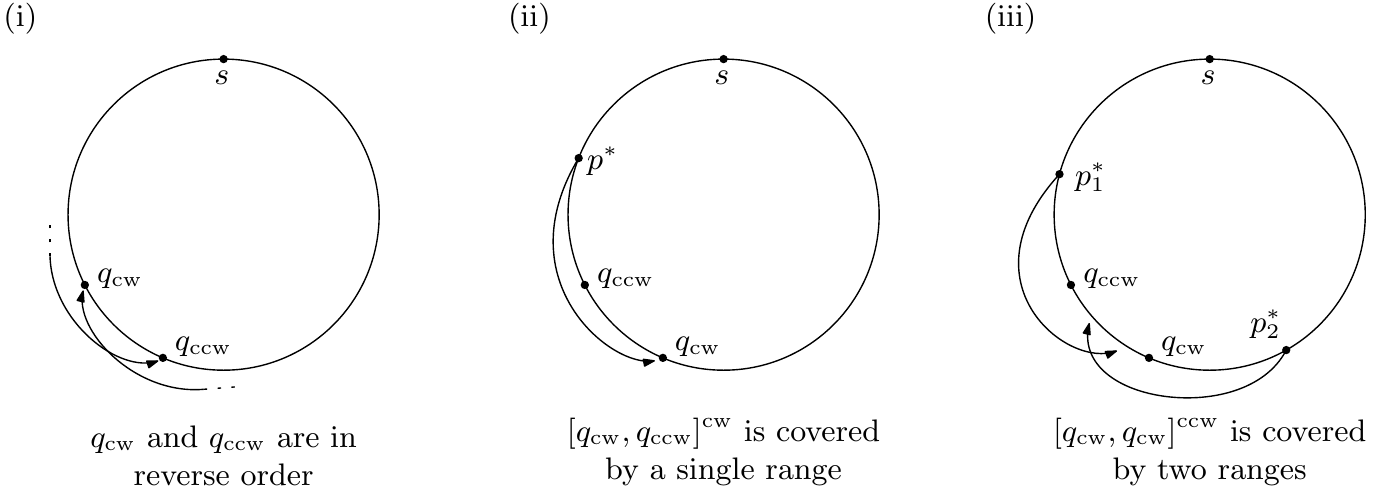}
\end{center}
\caption{Illustration for the proof of Lemma~\ref{lem:S1-structure}. Note that the point~$p^*$
in part~(ii) of the figure could also lie in~$[s,q_{\cw}]^{\cw}$. Similarly, in part~(iii)
the points~$p_1^*$ and $p_2^*$ could lie on ``the other side'' of~$s$.}
\label{fig:S1-structure}
\end{figure}
\begin{claiminproof}
$\dcw(s,q_{\cw})+\dccw(s,q_{\ccw}) < 1$.
\end{claiminproof}
\begin{proofinproof}
Note that $\dcw(s,q_{\cw})+\dccw(s,q_{\ccw}) \neq 1$, since otherwise
$q_{\cw}=q_{\ccw}$ which cannot happen since $S_{\cw}\cap S_{\ccw}=\emptyset$.

Now assume for a contradiction that  $\dcw(s,q_{\cw})+\dccw(s,q_{\ccw}) > 1$,
which means that $q_{\ccw} \in [s,q_{\cw}]^{\cw}$.
Since $q_{\cw}$ is reached from its parent by a clockwise edge,
this implies that $q_{\ccw} \in \cov(\ropt,\anc(q_{\cw}))$ by the observation above. 
Hence, $\dhop(s,q_{\cw}) \geq \dhop(s,q_{\ccw})$. 
An analogous argument shows that $\dhop(s,q_{\ccw}) \geq \dhop(s,q_{\cw})$.
Hence,  $\dhop(s,q_{\ccw}) = \dhop(s,q_{\cw})$. This implies that the edge $(\pa(q_{\cw}),q_{\cw})$ 
passes over~$q_{\ccw}$, otherwise some other edge of $\pi(s,q_{\cw})$ would pass over $q_{\ccw}$
and we would have $\dhop(s,q_{\ccw}) < \dhop(s,q_{\cw})$.
But then we also have a shortest path from $s$ to $q_{\ccw}$ 
whose last edge is a clockwise edge, contradicting the definition of~$\B$. 
\end{proofinproof}
So we can assume that $\dcw(s,q_{\cw})+\dccw(s,q_{\ccw}) < 1$ or, in
other words, that $q_{\ccw}$ lies clockwise from $q_{\cw}$, as seen from~$s$.
Clearly no point from $P$ lies in $(q_{\cw},q_{\ccw})^{\cw}$. If we have
$(q_{\cw},q_{\ccw})^{\cw}\not\subset \cov(\ropt, P)$ then we are done,
so assume for a contradiction that $(q_{\cw},q_{\ccw})^{\cw}\subset \cov(\ropt, P)$.
This can happen in three ways, each of which will lead to  a contradiction.
\medskip

\noindent \emph{Case~I: There exists a point $p^*\in \B$ such that $q_{\cw} \in \cov_{\ccw}(\ropt,\{p^*\})$.} 
\\[2mm]
See Fig.~\ref{fig:S1-structure}(ii) for an illustration of the situation.
If $p^*=s$ then $\dhop(s,q_{\cw})=1$. Since $q_{\cw} \in S_{\cw}$
this means that $q_{\cw}$ must also have an incoming clockwise edge from~$s$.
But then $\ropt(s)\geq \frac{1}{2}$, which
contradicts Lemma~\ref{maximum range of a point}. 
So $p^* \neq s$. 
Now note that $p^*$ must have an outgoing clockwise edge in~$\B$, else we can reduce the range of $p^*$ 
to $\dccw(p^*,q_{\ccw})$, which is smaller than $\dccw(p^*,q_{\cw})$, and still get a feasible solution. 
Observe that $p^* \notin \pi(s, q_{\cw})$; otherwise we must have $p^*=\pa(q_{\cw})$
(since $q_{\cw}$ lies in the range of~$p^*$) which contradicts that $q_{\cw}\in S_{\cw}$. 
So for any point from $P$ in the region $[s,q_{\cw}]^{\cw}$ there exists a path from~$s$ 
in the communication graph induced by $\rho_{opt}$ that does not use~$p^*$. 
We now have two subcases.

If $p^*\in [s,q_{\ccw}]^{\ccw}$ then clearly $p^* \in S_{\ccw}$
(otherwise the definition of $q_{\cw}$ is contradicted).
Hence, each point from $P$ in the region $[s,p^*]^{\ccw}$ has a path from $s$ 
that does not use~$p^*$. This implies that 
can reduce the range of $p^*$ to $\dccw(p^*,q_{\ccw})$ and still get a feasible solution. 

If $p^*\in [s,q_{\cw}]^{\cw}$ then obviously we can also reduce the range of $p^*$ to $\dccw(p^*,q_{\ccw})$ 
and still get a feasible solution. 

So both subcases lead to the desired contradiction. 
\medskip

\noindent \emph{Case~II: There exists a point $p^* \in \B$ such that $q_{\ccw} \in \cov_{\cw}(\ropt,\{p^*\})$}
\\[2mm]
In the proof of Case~I we never used that $\B$ maximizes the number of clockwise edges.
Hence, a symmetric argument shows that Case~II also leads to a contradiction.  
\medskip

\noindent \emph{Case~III: There are two points $p_1^*,p_2^* \in P$ such that $[q_{\cw},q_{\ccw}]^{\cw} \subseteq \cov_{\ccw}(\ropt,\{p_1^*\}) \cup \cov_{\cw}(\ropt,\{p_2^*\})$}.
\\[2mm]
See Fig.~\ref{fig:S1-structure}(iii) for an illustration of the situation.
We can assume that $q_{\cw} \notin \cov_{\ccw}(\ropt,\{p_1^*\})$ and $q_{\ccw} \notin  \cov_{\cw}(\ropt,\{p_2^*\})$,
otherwise we are in Case~I or Case~II.
Now either $p_2^* \notin \pi(s, p_1^*)$ or $p_1^* \notin \pi(s, p_2^*)$ or both. 
Without loss of generality, assume $p_2^* \notin \pi(s, p_1^*)$. 
Then $p_2^*\neq s$ and all points from $P$ in the region $[s,q_{\ccw}]^{\ccw}$ have a path from~$s$
in the communication graph~$\graph_{\ropt}(P)$ that does not use~$p_2^*$.
The point~$p_2^*$ must have an outgoing counterclockwise edge, else we can 
reduce the range of $p_2^*$ to $\dcw(p_2^*, q_{\cw})$ and still get a feasible solution.
We have two subcases.

If $p_2^* \in [s,q_{\ccw}]^{\ccw}$ then by reducing the range of $p_2^*$ to $\dcw(p_2^*, q_{\cw})$ 
we still get a feasible solution. 

If $p_2^* \in [s,q_{\cw}]^{\cw}$ then $p_2^*$ must be reached by a clockwise edge from its parent in~$\B$,
otherwise the definition of $q_{\ccw}$ would be contradicted. Hence, for each point from $P$ in the 
region $[s,p_2^*]^{\cw}$ there is a path from $s$ that does not use~$p_2^*$. 
So again we can reduce the range of $p_2^*$ to $\dcw(p_2^*, q_{\cw})$ we still get a feasible solution.

Thus both subcases lead to a contradiction.
\medskip

\noindent This finishes the proof of the lemma.
\end{proof}

\subsection{Missing proof for Section~\ref{subsec:S1-no-SAS}}
\label{app:no-sas-missing-proofs}

\costofpi*
\begin{proof}
By Observation~\ref{obs:S1-opt}, we have $\rho(p)^{\alpha} \leq \ca \cdot 2\delta^{\alpha}$ and, hence, $\rho(p) \leq (2\ca)^{1/\alpha} \cdot \delta < 3\delta$, for any point~$p$. 
Consider the interval $I=[y_1,y_2]^{\cw}$ where $\dcw(s,y_1)=3\delta$ and $\dccw(q,y_2)=3\delta$. 
All the points in $I\cap P$ are at distance more than $3\delta$ from $s$ or $q$ 
and hence $I\cap P \subseteq \cov(\ropt,P\setminus \{s,q\})$.
Let $p_i\in I\cap P$ be the point whose clockwise distance from~$s$ is minimum, 
and let  $p_j\in I\cap P$ be the point whose counterclockwise distance from $q$ is minimum. Then the cost of covering all the points in $I\cap P$ using the points in $P\setminus \{s,q\}$ is at least $\sum_{t=i}^{j-1} \dcw(p_t,p_{t+1})^{\alpha} -2^\alpha$, where the term $-2^{\alpha}$ is
because the covered region may leave one interval $[p_t,p_{t+1}]^{\cw}$ uncovered. 
Recall that the cost of assigning all the points in $P\setminus \{s,q\}$ a \CW-minimal range is $(2^\alpha+1)n$. 
Note that $i=O(\delta)$ since $\dcw(s,p_i) \leq 3\delta +2$ and $(2n+1)-j =O(\delta)$  since $\dcw(p_j,q) \leq 3\delta +2$.
Hence, 
\[
\sum_{i=1}^{2n+1}\rho(p_i)^\alpha 
\geq (2^{\alpha}+1)n - O(\delta)\cdot 2^{\alpha}
\geq (2^{\alpha}+1)n - o(n),
\]
since $\delta = ((2^{\alpha}+1)n)^{1/\alpha} = o(n)$.
\end{proof}

\nocworccwedge*
\begin{proof}
Suppose before insertion of $q$ the point~$p_{2n+1}$ has an incoming counterclockwise edge. 
The cheapest incoming counterclockwise edge would be from~$s$ and 
this is already too expensive. 
Indeed, if $\range(s)\geq 2x\delta$ then by Lemma~\ref{lem:cost-of-pi}
the total cost of the range assignment by $\alg$ is at least  
\begin{eqnarray*}
(2x\delta)^\alpha + (2^{\alpha}+1)n - o(n)
& = & \left( 2^{\alpha}\cdot \left( \frac{1}{4}+\left(\frac{1}{2}\right)^{\alpha+1} \right) + 1 \right) \cdot \delta^{\alpha} - o(n) \\
& = & \left(2^{\alpha-2} + \frac{3}{2} \right) \cdot \delta^{\alpha} - o(n) \\[1mm]
& = & \left(2^{\alpha-3} + \frac{3}{4} \right) \cdot 2 \delta^{\alpha} - o(n) \\[1mm]
& = & \left(1 + \left( 2^{\alpha-3} - \frac{1}{4} \right)  \right) \cdot 2 \delta^{\alpha} - o(n) \\[1mm]
& > & \left(1 + \frac{1}{2}\cdot \left( 2^{\alpha-3} - \frac{1}{4} \right)  \right) \cdot 2 \delta^{\alpha}  \hspace*{10mm} \mbox{for $n$ sufficiently large} \\[1mm]
& \geq & \ca \cdot \opt(P) \hspace*{12mm}  \mbox{by definition of $\ca$ and Observation~\ref{obs:S1-opt}}
\end{eqnarray*}
This contradicts the approximation ratio of \alg, proving the first part of the lemma.

Now suppose after the insertion of $q$ the point~$p_1$ has an incoming clockwise edge. The cheapest way
to achieve this is with $\rho(s)=\delta$, which is too expensive. Indeed,
by Lemma~\ref{lem:cost-of-pi} the total cost of the range assignment is then
at least 
\begin{eqnarray*}
\delta^\alpha + (2^{\alpha}+1)n - o(n) 
& = & \frac{2\delta^\alpha}{(2x^\alpha+1) \delta^\alpha} \cdot
      (2x^\alpha+1)\delta^\alpha - o(n) \\
& \geq & \left( 1+ \frac{1}{2}\cdot \left( \frac{2\delta^\alpha}{(2x^\alpha+1)\delta^\alpha}-1 \right) \right) \cdot  \opt(P\cup\{q\}) 
           \ \ \mbox{ for $n$ sufficiently large}  \\
& = & \left( 1+ \left( \frac{1}{2x^\alpha+1}-\frac{1}{2} \right) \right) \cdot  \opt(P\cup\{q\})  \\
& = & \left( 1+  \frac{2 - (2x^{\alpha}+1)}{2(2x^\alpha+1)}  \right) \cdot  \opt(P\cup\{q\})  \\
& = & \left( 1+  \frac{1 - \left( \frac{1}{2} + \frac{1}{2^{\alpha}}\right)}{2\left( \frac{1}{2} + \frac{1}{2^{\alpha}}+1\right)}  \right) \cdot  \opt(P\cup\{q\}) \ \ \mbox{ since $2x^{\alpha} = \frac{1}{2}+\frac{1}{2^{\alpha}}$} \\
& = & \left( 1+  \frac{2^{\alpha-1}-1}{3\cdot 2^{\alpha}+2}  \right) \cdot  \opt(P\cup\{q\}) \\       
& \geq & \ca \cdot \opt(P\cup\{q\}) \hspace*{10mm}  \mbox{by definition of $\ca$ and Observation~\ref{obs:S1-opt}}
\end{eqnarray*}
%
%
This contradicts the approximation ratio of \alg, proving the second part of the lemma.
\end{proof}

\paragraph*{The claim in the proof of Lemma~\ref{lem:many-minimal-ranges}}
\begin{claiminproof}
If $p_j$ is not assigned to a \CW-minimal edge then $\exc(p_j)\geq c'_{\alpha}$,
where $c'_{\alpha} = \min \left( 2^{\alpha}-1, \frac{3^{\alpha}-2^{\alpha}-1}{2}, \frac{ 4^{\alpha}-2^{\alpha}-2}{3}  \right)$.
\end{claiminproof}
\begin{proofinproof}
Consider a non-\CW-minimal edge~$(p_i,p_t)$. 
First suppose only a single point $p_j$ is assigned to~$(p_i,p_t)$. 
Then $t=i+1$ and $p_j=p_t$. Hence, $\range(p_i)\geq d(p_{j-1},p_j)+1$ 
because we assumed $\range(p_i)\in\{0,1,2\}\cup [3,\infty)$.
Thus when $|A(p_i,p_t)|=1$ then
\[
\exc(p_j) \geq (d(p_{j-1},p_j)+1)^{\alpha} - d(p_{j-1}p_j)^{\alpha}\geq 2^{\alpha}-1  \geq c'_{\alpha}.
\]
Now suppose $|A(p_i,p_t)|>1$. Let $z_1$ be the number of points $p_j\in A(p_i,p_t)$
with $d(p_{j-1},p_j)=1$, and let $z_2$ be the number of points $p_j\in A(p_i,p_t)$
with $d(p_{j-1},p_j)=2$. 
Since $|A(p_i,p_t)|>1$ we have $z_1\geq 1$ and $z_2\geq 1$ and $|z_1-z_2|\leq 1$.
When $|A(p_i,p_t)|=2$ then $z_1=z_2=1$, and we are distributing the cost of
an edge of length at least~3, minus the costs of edges of length~2~and~1,
over two points. Thus in this case we have
\[
\exc(p_j) \geq \frac{ 3^{\alpha}-2^{\alpha}-1}{2}.
\]
Similarly, when $|A(p_i,p_t)|=3$ then $z_1=2$ and $z_2=1$
(or vice versa, but that will only lead to a larger excess),
and we have
\[
\exc(p_j) \geq \frac{ 4^{\alpha}-2^{\alpha}-2}{3}.
\]
It remains to argue that we do not get a smaller excess when $|A(p_i,p_t)|\geq 4$.
To see this, we compare the excess we get when $(p_i,p_t)$ is an edge of $\pi$
with the excesses we would get when, instead of $(p_i,p_t)$, the edges $(p_i,p_{i+2})$ and $(p_{i+2},p_t)$
would be in~$\pi^*$.
Note that
\[
d(p_i,p_t)^\alpha = \Big( d(p_i,p_{i+2}) + d(p_{i+2},p_t) \Big) ^\alpha
                  > d(p_i,p_{i+2})^\alpha + d(p_{i+2},p_t)^\alpha
\]
since $\alpha>1$.  Hence,
\[
\begin{array}{lll}
\frac{ d(p_i,p_t)^\alpha - \sum_{\ell=i+1}^t d(p_{\ell-1},p_{\ell})^\alpha }{t-i} & > &
\frac{ \left( d(p_i,p_{i+2})^\alpha - \sum_{\ell=i+1}^{i+2} d(p_{\ell-1},p_{\ell})^\alpha \right) + 
\left( d(p_{i+2},p_t)^\alpha - \sum_{\ell=i+3}^t d(p_{\ell-1},p_{\ell})^\alpha \right) }{t-i}  \\[3mm]
& \geq &
\frac{ d(p_i,p_{i+2})^\alpha - \sum_{\ell=i+1}^{i+2} d(p_{\ell-1},p_{\ell})^\alpha }{2}+ 
\frac{ d(p_{i+2},p_t)^\alpha - \sum_{\ell=i+3}^t d(p_{\ell-1},p_{\ell})^\alpha }{t-i-2} 
\end{array}
\]
where the last inequality uses that
$\frac{a_1+a_2}{b_1+b_2} \geq \min \left( \frac{a_1}{b_1}, \frac{a_2}{b_2} \right)$
for any $a_1,a_2,b_1,b_2> 0$. Thus the excess we get for $(p_i,p_t)$ 
is at least the minimum of the excesses we would get for $(p_i,p_{i+2})$ and $(p_{i+3},p_t)$.
More generally, when $|A(p_i,p_t)|>4$ then we can compare the excess for
$(p_i,p_t)$ with the excesses we get when we would replace $(p_i,p_t)$
with a path of smaller edges, each being assigned two or three points.
The excess for $(p_i,p_{i+2})$ is at least the minimum of the excesses for these shorter edges. 
(Reducing to edges that are assigned a single point is not useful,
since these may be \CW-minimal and have zero excess.)
This finishes the proof of the claim.
\end{proofinproof}
\section{Missing details for Section~\ref{sec:higher-dim}}
\label{appendix:higher-dim}

\subsection{Proof of Theorem~\ref{thm:no-SAS-R2}}
\label{app:no-SAS-R2}

\noSASplane*
\begin{proof}
We use the same construction as in $\bbS^1$, where we embed the points on a square and
the distances used to define the instance are measured along the square; see Fig.~\ref{fig:no-sas-cirle-new}(ii). 
We now discuss the changes needed in the proof to deal with the fact that distances in~$\Reals^2$
between points from $P\cup\{q\}$ may be smaller than when measured along the square.
With a slight abuse of terminology, we will still refer to an edge $(p,p')$
that was clockwise in $\bbS^1$ as a clockwise edge, and similarly for counterclockwise edges.

Note that Observation~\ref{obs:S1-opt} still holds. 
Now consider Lemma~\ref{lem:cost-of-pi}.
The proof used that the points $p_i$ at distance more than $3\delta$ from $s$ or $q$ must be
covered by the ranges of the points $p_1,\ldots,p_{2n+1}$. We now restrict our attention to the points
that are also at distance more than $3\delta$ from a corner of the square. 
Each such point~$p_i$ must be covered by the range of some point~$p_j$ on the same edge of the square.
Hence, the distance in $\Reals^2$ of from $p_j$ to $p_i$ is the same as the 
distance in~$\bbS^1$, so we can use the same reasoning as before. Thus the exclusion of the 
points that are at distance at most $3\delta$ from a corner
of the square only influences the constant in the $o(n)$ term in the lemma. 
Hence, Lemma~\ref{lem:cost-of-pi} still holds.

The proof of Lemma~\ref{lem:no-cw-or-ccw-edge} still holds, since the cheapest
counterclockwise edge to $p_{2n+1}$ before the insertion of $q$ is still from $s$
(and the distance from $s$ to $p_{2n+1}$ did not change), and the cheapest
clockwise edge to $p_1$ after the insertion of $q$ is still from $s$
(and the distance from $s$ to $p_{1}$ did not change).

It remains to check Lemma~\ref{lem:many-minimal-ranges}. The proof still holds, except
that the claim that $\exc(p_j)\geq c'_{\alpha}$ may not be true for the given
value of $c'_{\alpha}$ when~$p_j$ is near a corner
of the square, because the distances between points on different
edges of the square do not correspond to the distances in $\bbS^1$.
To deal with this, we simply ignore the excess of any point within
distance $3\delta$ from a corner. This reduces the total excess by~$o(n)$. 
It is easily verified that this does not invalidate
the rest of the proof: we have to subtract $o(n)$ from the
formulae in Equality~(\ref{eq2}), but 
this is still larger than~$\ca\cdot\opt(P)$.

We conclude that all lemmas still hold, which proves Theorem~\ref{thm:no-SAS-R2}.
\end{proof}

\subsection{An $O(1)$-stable $O(1)$-approximation algorithm in $\Reals^2$}
\label{app:mst-in-R2}
Next we show that there is a relatively simple $O(1)$-stable $O(1)$-approximation 
algorithm for distance-power gradient $\alpha \geq 2$.
Our algorithm is based on a result by Amb\"uhl~\cite{DBLP:conf/icalp/Ambuhl05}, who showed that a minimum-spanning tree (MST)
on~$P$ can be used to get a constant-factor approximation to the 
broadcast range-assignment problem on~$P$. The key lemma underlying the result is the following.
\begin{lemma}[\cite{DBLP:conf/icalp/Ambuhl05}]
\label{lem:mst-cost}
Let $P$ be any point set in the plane. Let $T_P$ be an MST on $P$, and let $E(T_P)$ be 
the set of edges of~$T_P$. Then, for any distance-power gradient $\alpha\geq 2$, we have
$\sum_{e\in T_P} |e|^\alpha \leq 6 \cdot \opt$,
where $\opt = \costa(\ropt(P))$ is the cost of an optimal range assignment.
\end{lemma}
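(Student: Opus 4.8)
The plan is not to reason about $T_P$ directly, but to \emph{construct} some spanning tree $T$ of $P$ with $\sum_{e\in T}|e|^{\alpha}\le 6\cdot\opt$. This is enough, because the Euclidean MST minimizes $\sum_{e}f(|e|)$ over all spanning trees for \emph{any} nondecreasing $f$ (Kruskal's algorithm depends only on the order of the edge weights, which is preserved under $w\mapsto w^{\alpha}$; and all minimum spanning trees share the same multiset of edge lengths). Hence $\sum_{e\in E(T_P)}|e|^{\alpha}\le\sum_{e\in T}|e|^{\alpha}$ for whatever valid $T$ we build.

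To build such a $T$, fix an optimal range assignment $\ropt$ together with a broadcast tree $\B$ rooted at $s$, and put $r_p:=\ropt(p)$ and $D_p:=\{x\in\Reals^2 : |px|\le r_p\}$. Since every point of $P$ lies in the disk of its parent in $\B$, the family $\{D_p\}_{p\in P}$ covers $P$ and its union is connected. The geometric ingredient I would use is a planar packing bound: if points $q_1,\dots,q_m$ all lie within distance $r$ of a common point $p$ and are pairwise at distance strictly greater than $r$, then $m\le 6$. (If $\angle q_i p q_j<60^\circ$ then $|q_iq_j|<\max(|pq_i|,|pq_j|)\le r$; so the $q_i$ must be pairwise at angular separation $\ge 60^\circ$ as seen from $p$, and there is room for at most six.)

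The tree $T$ is then produced by a Kruskal-style merging process over the disks: maintain a forest on $P$, initially all singletons, and process the disks $D_p$ in order of non-increasing radius. When $D_p$ is processed, consider the components of the current forest that contain a point within distance $r_p$ of $p$, pick one representative point from each such component, and join these representatives into a tree using edges of length $O(r_p)$. Processing largest-first ensures that when $D_p$ is handled the representatives of distinct hit components are really more than $r_p$ apart, so the packing bound caps their number; a careful choice of representatives (and of which edges to add) is what makes the total $\alpha$-cost charged to $D_p$ at most $6\,r_p^{\alpha}$. Summing over all $p$ yields $\sum_{e\in T}|e|^{\alpha}\le 6\sum_p r_p^{\alpha}=6\cdot\opt$, and the reduction above finishes the proof.

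I expect the delicate point — and the main obstacle — to be exactly this per-disk charging. The naive estimate ``$D_p$ contributes at most (number of merged components)$\cdot(2r_p)^{\alpha}$'' is far too weak: the diameter bound $2r_p$ already costs a factor $2^{\alpha}$ per edge, and $D_p$ could a priori see many components. Getting the clean constant $6$ requires using that points of $P$ clustered together inside $D_p$ have \emph{already} been merged by the time $D_p$ is processed (so only a bounded number of distinct components are actually hit), together with the fact that all but at most six of those components have a representative at distance $\le r_p$ from $p$ itself. In particular one cannot afford a per-disk, per-scale (dyadic) estimate, since for $\alpha=2$ that would lose a $\log n$ factor; the bookkeeping must be global over the whole family $\{D_p\}$. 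This precise accounting is Amb\"uhl's contribution, and I would import it from \cite{DBLP:conf/icalp/Ambuhl05} rather than re-derive it here.
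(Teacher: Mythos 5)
First, note that the paper does not prove this lemma at all: it is imported verbatim from Amb\"uhl's paper \cite{DBLP:conf/icalp/Ambuhl05} and used as a black box. Since you also end by deferring the decisive step to that reference, your proposal and the paper ultimately rest on the same citation; to that extent there is nothing to compare. Your outer reduction is sound and worth keeping: the observation that the Euclidean MST minimizes $\sum_e f(|e|)$ for \emph{every} nondecreasing $f$, so that it suffices to exhibit \emph{some} spanning tree of total $\alpha$-cost at most $6\cdot\opt$, is correct and is exactly how the general statement is reduced to a statement about the optimal solution's disks.

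The gap is in the middle step, and it is a real one. Your claim that processing the disks in non-increasing radius order forces the representatives of distinct components hit by $D_p$ to be pairwise more than $r_p$ apart is false: two points $q_1,q_2$ with $|q_1q_2|\ll r_p$ can perfectly well lie in distinct components at the moment $D_p$ is processed, simply because no previously processed (larger) disk happens to contain both of them. So the $60^\circ$ packing bound does not apply, and even if it did, joining up to six components with edges of length up to $2r_p$ would charge $D_p$ up to $6\cdot(2r_p)^\alpha=6\cdot 2^\alpha r_p^\alpha$, not $6r_p^\alpha$ --- as you yourself observe. The actual argument in \cite{DBLP:conf/icalp/Ambuhl05} has a different shape: the hard technical lemma is that for any finite point set contained in a disk of radius $r$ \emph{and containing its center}, the Euclidean MST of that set satisfies $\sum_e|e|^2\le 6r^2$ (tight for the regular hexagon plus center); since the center is present, every MST edge has length at most $r$, which lifts the bound to $\sum_e|e|^\alpha\le 6r^\alpha$ for all $\alpha\ge 2$. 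One then takes the union, over all transmitting points $p$ of the optimal solution, of the MSTs of $P\cap D_p$; this union is a connected spanning subgraph of cost at most $\sum_p 6\,\ropt(p)^\alpha=6\cdot\opt$, and your monotone-$f$ reduction finishes the job. No largest-first ordering or component-packing argument is needed, and no such argument is known to yield the constant $6$. If you want to include a proof sketch, it should follow this per-disk structure; otherwise, citing \cite{DBLP:conf/icalp/Ambuhl05} as the paper does is the right call.
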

In the static problem this immediately gives a 6-approximation algorithm: 
turn the MST into a directed tree rooted at the source~$s$, and assign 
as a range to each point $p\in P$  the maximum length of any of its outgoing edges.
To apply this in the dynamic setting, we need the following lemma, which implies that 
for any point set $P$ and any additional point~$q$, any MST $T$ 
on $P$ can be converted to an MST $T'$ on $P\cup\{q\}$ that is very similar to $T$.
The result is folklore~\cite{DBLP:journals/siamcomp/SpiraP75}.
\begin{lemma}\label{le:mst-update}
Let $P$ be a set of points in a metric space $X$, and $P' := P \cup \{q\}$ 
for some point~$q\in X$. Let $T$ be any MST on~$P$.
Then there is an MST $T'$ on $P'$ 
such that all edges in $T'$ that are not incident to $q$ also occur in~$T$.
Conversely, let $T$ be any MST on a set $P\cup\{q\}$. Then there exists an MST $T'$
on $P$ such that all edges in $T$ that are not incident to $q$ also occur in~$T'$.
\end{lemma}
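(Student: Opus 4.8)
The plan is to prove both halves of the lemma by the same exchange argument: among all minimum spanning trees of the target point set, take one that agrees with the given tree on the largest possible set of edges, and show that any remaining disagreement can be removed by a single weight-non-increasing edge swap, contradicting maximality. The only care needed is to choose the swap so that the relevant weight inequality points the right way, which keeps the argument valid even when several interpoint distances coincide.

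For the first statement, I would fix an MST $T$ of $P$ and, over all MSTs of $P':=P\cup\{q\}$, choose one, $T'$, that minimizes the number of its edges that are neither incident to $q$ nor present in $T$. If this number is positive, pick such an edge $e=\{u,v\}$; then $u,v\in P$ and $e\notin T$. Since $T$ spans $P$, the edge $e$ closes a cycle $C$ together with the $u$--$v$ path of $T$; all edges of $C$ lie among the points of $P$, all of them except $e$ lie in $T$, and each has weight at most $|e|$ (otherwise deleting such an edge from $T$ and inserting $e$ would give a cheaper spanning tree of $P$). Deleting $e$ from $T'$ separates $P'$ into two parts $X\ni u$ and $Y\ni v$; since $C$ passes through $u$ and $v$, some edge $f\in C\setminus\{e\}$ crosses this cut. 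Then $f\in T$, $f$ is not incident to $q$, $|f|\le|e|$, and $f\notin T'$ (the only $T'$-edge across the cut is $e$), so $(T'\setminus\{e\})\cup\{f\}$ is again a spanning tree of $P'$ of no greater weight, hence an MST, yet it has one fewer ``bad'' edge than $T'$ --- a contradiction. Hence every edge of $T'$ not incident to $q$ already lies in $T$.

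For the converse I would fix an MST $T$ of $P\cup\{q\}$ and let $F$ be the forest obtained by deleting $q$ and its incident edges; the edges of $F$ are exactly the edges of $T$ not incident to $q$. Over all MSTs of $P$, choose $T'$ maximizing $|F\cap T'|$. If some $f\in F$ is absent from $T'$, then $f$ closes a cycle $C$ in $T'+f$. Deleting $f$ from $T$ separates $P\cup\{q\}$ into two parts, and since $f\in C$ crosses this cut, so does some $g\in C\setminus\{f\}$. Because $T$ is an MST and both $f$ and $g$ cross the cut obtained from $T-f$, we get $|f|\le|g|$; moreover $g\notin T$ (the only $T$-edge across that cut is $f$), so in particular $g\notin F$. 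Then $(T'\setminus\{g\})\cup\{f\}$ is a spanning tree of $P$ of weight at most that of $T'$, hence an MST, and it agrees with $F$ on one more edge --- a contradiction. Thus $F\subseteq T'$, which is exactly the assertion.

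I expect the proof itself to be entirely routine; the only point demanding attention is the direction of the weight comparison in the swap. Extracting the exchange edge from the cycle created in $T+e$ in the first part (so the cycle property supplies an edge of weight $\le|e|$) and from the cut created by $T-f$ in the second part (so the cut property supplies an edge of weight $\ge|f|$) is precisely what makes both swaps weight-non-increasing, and it keeps the argument correct verbatim in the presence of ties. This is the folklore Spira--Pan update property, so an alternative would simply be to cite it, but the self-contained exchange argument above is short enough to include.
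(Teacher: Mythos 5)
Your proof is correct. Note that the paper itself does not prove this lemma at all: it simply declares the result folklore and cites Spira and Pan, so there is no ``paper proof'' to compare against --- your self-contained exchange argument is a valid substitute for that citation. Both halves check out: in the first half, the extremal choice of $T'$ (minimizing the number of edges that are neither incident to $q$ nor in $T$), the cycle-property bound $|f|\le|e|$ obtained from $T$ being an MST of $P$, and the cut-property observation that $e$ is the unique $T'$-edge across the cut defined by $T'-e$ together make the swap weight-non-increasing and strictly decrease the count of bad edges; in the second half, the dual choice (cycle in $T'+f$, cut from $T-f$) gives $|f|\le|g|$ with $g\notin T$ and hence $g\notin F$, so the swap strictly increases $|F\cap T'|$. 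Your closing remark about extracting the exchange edge from the cycle in one direction and from the cut in the other is exactly the point that makes the argument robust to ties, and the metric structure of $X$ plays no role beyond supplying edge weights. The only stylistic quibble is that in the first half you could state explicitly that the new tree has $|P'|-1$ edges and is connected (hence a spanning tree), but this is immediate from the cut argument you already give.
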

We use this lemma in combination with the following well-known lemma. 
\begin{lemma}\label{degree}
Let $T$ be an MST of a point set in $\Reals^d$. Then the maximum vertex degree of
$T$ is bounded by the Hadwiger number of the corresponding unit ball in $\Reals^d$.
In particular, the maximum vertex degree of an MST in $\Reals^2$ is at most $6$.
\end{lemma}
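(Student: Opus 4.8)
The plan is to bound the degree of an arbitrary vertex $v$ of the MST $T$ by the Hadwiger (kissing) number $\tau_d$ of the unit ball in $\Reals^d$, and then to invoke the classical value $\tau_2 = 6$. The heart of the argument is a local angle estimate at~$v$.

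First I would prove the following geometric claim: if $u$ and $w$ are two distinct neighbours of $v$ in $T$, then the angle $\angle uvw$ is at least~$60^\circ$. Suppose not, and assume without loss of generality that $|vu|\le|vw|$. In the triangle $uvw$ the angle at~$w$ (which is opposite the shorter side $uv$) is at most the angle at~$u$; since the three angles sum to $180^\circ$ and $\angle uvw<60^\circ$, this forces the angle at~$u$ to exceed $60^\circ$, hence to be strictly larger than $\angle uvw$. A strictly larger angle faces a strictly longer side, so $|uw|<|vw|$. Now deleting the edge $(v,w)$ splits $T$ into two subtrees, one containing~$w$ and the other containing both $v$ and~$u$; inserting the edge $(u,w)$ reconnects them into a spanning tree of strictly smaller total length, contradicting the minimality of~$T$. (The degenerate collinear configurations are handled the same way: either $\angle uvw=180^\circ\ge 60^\circ$, or $u$ lies strictly between $v$ and $w$ and again $|uw|=|vw|-|vu|<|vw|$.)

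Second, I would convert this angular separation into the degree bound. Let $u_1,\dots,u_k$ be the neighbours of $v$ in $T$, and set $\hat u_i:=(u_i-v)/|u_i-v|$. The points $\hat u_1,\dots,\hat u_k$ lie on the unit sphere $S^{d-1}$ and, by the claim, have pairwise angular distance at least $60^\circ$, equivalently pairwise Euclidean distance at least~$1$ (the chord subtending an angle $\theta$ has length $2\sin(\theta/2)$, which is $1$ for $\theta = 60^\circ$). By definition, the maximum number of points one can place on $S^{d-1}$ with pairwise angular distance at least $60^\circ$ is exactly the kissing number of the unit ball, which equals its Hadwiger number; hence the degree of~$v$ is at most~$\tau_d$. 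For $d=2$, a set of points on a circle with pairwise angular distances $\ge 60^\circ$ has size at most $\lfloor 360^\circ/60^\circ\rfloor = 6$, by a pigeonhole argument on the circular gaps between consecutive points, which yields the stated bound of~$6$ for MSTs in $\Reals^2$.

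I do not anticipate a genuine obstacle: this is standard folklore. The only place that needs mild care is the first step, namely ensuring the side-length inequality $|uw|<|vw|$ is \emph{strict} so that the edge swap produces a \emph{strictly} lighter spanning tree, and dealing with the collinear degenerate cases. Once the $60^\circ$-separation claim is in hand, the rest is just the textbook identification of a set of pairwise $60^\circ$-separated directions with a kissing configuration of the unit ball.
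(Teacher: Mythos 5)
Your proof is correct. Note that the paper does not actually prove Lemma~\ref{degree}: it is stated as a ``well-known lemma'' and used without argument, so there is no in-paper proof to compare against. The argument you give is the standard folklore one: the exchange argument shows that two MST edges sharing an endpoint subtend an angle of at least $60^\circ$ (with the strictness of the inequality $|uw|<|vw|$ correctly secured when the angle is strictly below $60^\circ$), and a set of pairwise $60^\circ$-separated unit directions is exactly a kissing configuration of the unit ball, giving the Hadwiger-number bound and the value $6$ in the plane. Both steps are sound, including the degenerate collinear case.
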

We can now prove the following theorem.
\begin{theorem} \label{2D approximate MST}
There is a 17-stable 12-approximation algorithm for the dynamic broadcast range-assignment
problem in $\Reals^2$, for power-distance gradient~$\alpha \geq 2$.
\end{theorem}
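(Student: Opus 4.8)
The plan is to maintain a minimum spanning tree $T$ of the current point set~$P$, and to assign to each point $p\in P$ the range $\range(p):=\max\{\,|e| : e\in E(T)\text{ incident to }p\,\}$; that is, the length of the \emph{longest} MST-edge at~$p$, counting both the edges oriented towards~$s$ and those oriented away from~$s$. (This is the only deviation from Amb\"uhl's static $6$-approximation, which uses only the outgoing edges.) Updates are performed by recomputing an MST via Lemma~\ref{le:mst-update}, so that all MST-edges not incident to the inserted (resp.\ deleted) point are retained.

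First I would check feasibility and the approximation ratio. Orienting the undirected tree~$T$ away from~$s$ gives an arborescence rooted at~$s$, and since $\range(p)$ is at least the length of every MST-edge at~$p$, the communication graph~$\cg(P)$ contains this arborescence; hence the assignment is a valid broadcast range assignment. For the cost, observe that each edge $e=(p,p')\in E(T)$ realises the maximum at most at its two endpoints, so
\[
\costa(\range(P)) \;=\; \sum_{p\in P}\range(p)^{\alpha} \;\leq\; \sum_{p\in P}\sum_{\substack{e\in E(T)\\ e\ni p}}|e|^{\alpha} \;=\; 2\sum_{e\in E(T)}|e|^{\alpha} \;\leq\; 12\cdot\opt,
\]
where the last inequality is Lemma~\ref{lem:mst-cost}, valid for~$\alpha\geq 2$.

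Next I would bound the stability. Consider inserting a point~$q$. By Lemma~\ref{le:mst-update} there is an MST $T'$ of $P\cup\{q\}$ whose edges not incident to~$q$ all occur in~$T$. By Lemma~\ref{degree} the degree $d$ of $q$ in~$T'$ is at most~$6$, so $T'$ contains $d\leq 6$ new edges (those at~$q$), and since $|E(T')|=|E(T)|+1$ exactly $d-1\leq 5$ edges of~$T$ are removed. A point $p\neq q$ can get a modified range only if its set of incident MST-edges changes, i.e.\ only if $p$ is an endpoint of one of the $\leq 6$ new edges at~$q$ or of one of the $\leq 5$ removed edges; counting $q$ itself as well, at most $1+6+2\cdot 5=17$ ranges are modified. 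Deletions are handled symmetrically, using the converse statement of Lemma~\ref{le:mst-update} and the fact that the point being deleted had degree at most~$6$ in the current MST; the same count gives a $17$-stable update. Together with the approximation bound above, this proves the theorem.

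The step I expect to require the most care is the interplay in the stability argument: one must verify that switching to ``longest incident edge'' ranges keeps every update local (no point far from~$q$ changes, since outside the edges at~$q$ the MST is unchanged) while costing only a factor~$2$ over the static bound, and one must remember that the $\leq 5$ disappearing MST-edges need not touch~$q$, so their (up to ten) endpoints must be included when tallying the~$17$ modified ranges. Everything else follows directly from Lemmas~\ref{lem:mst-cost}, \ref{le:mst-update}, and~\ref{degree}.
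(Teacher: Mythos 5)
Your proposal is correct and follows essentially the same route as the paper: maintain an MST updated via Lemma~\ref{le:mst-update}, set each range to the longest \emph{incident} edge (paying a factor~$2$ over Amb\"uhl's static bound from Lemma~\ref{lem:mst-cost} to get~$12$), and use the degree bound of Lemma~\ref{degree} to count at most $1+6+2\cdot 5=17$ modified ranges per update. The paper phrases the stability count as $(7,10)$-stable for insertions and $(10,7)$-stable for deletions, but the tally and the underlying argument are identical to yours.
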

\begin{proof}
Our algorithm will maintain an MST $T$ on the current point set~$P$, using Lemma~\ref{le:mst-update}.
We set the range of each
point to be the maximum length of any of its incident edges. 
Clearly, this defines a feasible solution.
We denote the resulting range assignment by~$\rmst$.

We now analyze the stability of~$\rmst$. Consider the insertion of a point~$q$,
and let $T'$ be the new MST after the insertion has been handled.
Observe that, part from the point $q$ itself, only the ranges of the neighbors of
$q$ in $T'$ can increase. By Lemma~\ref{degree} we have $\degree(q)\leq 6$,
where $\deg(q)$ denotes the degree of $q$. 
Hence, the number of ranges that need to be increased is at most~$7$. 
Also observe that only the ranges of those points can decrease
that had an edge belonging to the edge set $T\setminus T'$ incident to it. 
Since $\degree(q)\leq 6$, and $T$ and $T'$ have $|P|-1$ and $|P|$ edges, respectively,
we have $|T\setminus T'|\leq 5$. Hence the ranges of at most ten points can decrease. 
So insertions are (7,10)-stable, and deletions are (10,7)-stable.

It remains to prove the approximation ratio. Using Lemma~\ref{lem:mst-cost}
and noting that every edge in $T$ is adjacent to at most two vertices, 
we conclude that
$\costa(\rmst(P)) \leq 2\cdot \sum_{e\in T} |e|^\alpha \leq 12\cdot \opt$
for any set $P$, thus finishing the proof.
\end{proof}
\section{Some additional previous work on stable algorithms}
\label{sec:stab-literature}

For an arbitrary dynamic optimization problem, 
stability of a solution depends on how to measure the difference between two solutions~\cite{wulms2020}. 
For instance, in our dynamic broadcast range-assignment problem, we focus on the number of points 
whose range changes after an insertion or a deletion. For many problems, there is a natural choice 
for the difference between two solutions. In fact, for some specific problems, trade-offs between 
the quality of a solution and its stability have been studied before, albeit with varying terminology. 
We now describe a number of such cases (without claiming to give a complete overview).

Lattanzi and Vassilvitskii~\cite{DBLP:conf/icml/LattanziV17} consider an online clustering problem: 
points arrive iteratively, and the goal is to maintain a set of $k$ centers (where $k$ is given) 
such that each point is assigned to a center while minimizing some given cost function; 
this setting encompasses the $k$-center, the $k$-median and the $k$-means problem. 
Lattanzi and Vassilvitskii explicitly focus on the trade-off between the cost of a solution and its 
stability---they use the term ``consistency''---which is defined as the sum, over all iterations, 
of the quantity $|\C_{i+1}\setminus \C_{i}|$, where $\C_i$ denotes the set of centers used in the
solution after the $i$-th iteration.
They show how to maintain a solution that is a constant-factor approximation while the 
consistency is $O(k~\mbox{log}~n)$, where $n$ stands for the number of arriving points. 
Fichtenberger~\etal~\cite{DBLP:conf/soda/FichtenbergerLN21} build 
on this, and show how to maintain a constant-factor approximation for the $k$-median problem 
by performing an optimal (up to polylogarithmic factors) number of center swaps.
Results of this nature for the uncapacitated facility-location problem 
can be found in Cohen-Addad~\etal~\cite{DBLP:conf/nips/Cohen-AddadHPSS19}.

In the online Steiner Tree problem, new points arrive iteratively, and one has to 
connect each new point to the current tree, thereby maintaining a tree on all points that have arrived. 
This problem was studied from a stability viewpoint by Imase and Waxman~\cite{DBLP:journals/siamdm/ImaseW91}. 
They show that a simple greedy algorithm that connects the new point to a closest previous point,
is $\log n$ competitive, where $n$ is the number of arriving points; 
they also show that it is possible to maintain a 2-competitive tree while making $O(n^{\frac32})$ swaps. 
Subsequent contributions (see \cite{DBLP:journals/siamcomp/MegowSVW16,DBLP:conf/soda/GuptaK14}) have improved those results; in particular, for the metric case, Gu~\etal~\cite{DBLP:journals/siamcomp/GuG016} 
provide an algorithm that, given any fixed $\eps>0$, maintains a tree that is $(1+\epsilon)$-competitive
by making at most $2n/\eps$ swaps over the course of $n$ arrivals. 
In our terminology (see Definition~\ref{def:SAS}) this would be a SAS for metric Steiner Tree,
albeit only in an amortized sense and for the insertion-only setting.

In dynamic scheduling problems, jobs arrive and disappear in an online fashion; 
the goal is to maintain an assignment of jobs to machines that is close to optimal. 
Clearly, the possibility of re-assigning jobs will help in maintaining high-quality solutions. 
The cost of re-assigning jobs can be measured by the number of jobs one is allowed 
to re-assign (the recourse model), or by the total size of the jobs that can be 
re-assigned (the migration model). When the objective is to minimize the makespan, 
the trade-off between the cost of re-assigning jobs and quality of the solution 
has been studied under the name \emph{robust PTAS}: a polynomial-time algorithm
that, for any given parameter~$\eps>0$, computes a $(1+\eps)$-approximation to the optimal
solution with re-assignment costs only depending on~$\eps$; 
we mention Skutella and Verschae~\cite{DBLP:conf/esa/SkutellaV10}, 
Sanders~\etal~\cite{DBLP:journals/mor/SandersSS09} for such results.
Similarly, in dynamic bin packing, there is a trade-off between maintaining 
a high-quality solution and the amount of repacking that is allowed.  
We mention work by Feldkord~\etal~\cite{DBLP:conf/icalp/FeldkordFGGKRW18}, 
Berndt~\etal~\cite{DBLP:journals/mp/BerndtJK20}, Epstein and Levin~\cite{DBLP:journals/mp/EpsteinL09}. 

For the dynamic matching problem, an impressive amount of work has been devoted 
trade-offs between the quality of a matching and the time needed to update it;
see Behnezhad~\etal~\cite{ DBLP:conf/soda/BehnezhadLM20} and the references contained therein. 
In the context of bipartite graphs, Bernstein~\etal~\cite{DBLP:journals/jacm/BernsteinHR19} consider the problem of maintaining a maximum matching, while minimizing the number of replacements, see also Gupta~\etal~\cite{DBLP:conf/soda/GuptaKS14}.


\end{document}